\theoremstyle{plain}
\newtheorem{assumption}{\protect\assumptionname}
\theoremstyle{definition}
\newtheorem{defn}{\protect\definitionname}
\theoremstyle{definition}
 \newtheorem{example}{\protect\examplename}
\theoremstyle{plain}
\newtheorem{prop}{\protect\propositionname}
\theoremstyle{plain}
\newtheorem{thm}{\protect\theoremname}
\theoremstyle{remark}
\newtheorem{rem}{\protect\remarkname}
\theoremstyle{plain}
\newtheorem{lem}{\protect\lemmaname}
\providecommand{\assumptionname}{Assumption}
\providecommand{\definitionname}{Definition}
\providecommand{\examplename}{Example}
\providecommand{\lemmaname}{Lemma}
\providecommand{\propositionname}{Proposition}
\providecommand{\remarkname}{Remark}
\providecommand{\theoremname}{Theorem}
\begin{document}
\global\long\def\bl{\underline{b}}%

\global\long\def\bh{\bar{b}}%

\title{Auction Design with Data-Driven Misspecifications\thanks{We thank Paul Klemperer and Martin Weidner for helpful comments. The
contents of this paper were presented in a preliminary shape as the
Hurwicz lecture at the 2019 Conference on Economic Design. Jehiel
thanks the European Research Council for funding (grant no 742816).}}
\author{Philippe Jehiel and Konrad Mierendorff\thanks{Jehiel: Paris School of Economics and University College London (email:
jehiel@enpc.fr); Mierendorff: Department of Economics, University
College London (email: k.mierendorff@ucl.ac.uk)}}
\maketitle
\begin{abstract}
We consider auction environments in which at the time of the auction
bidders observe signals about their ex-post value. We introduce a
model of novice bidders who do not know know the joint distribution
of signals and instead build a statistical model relating others'
bids to their own ex post value from the data sets accessible from
past similar auctions. Crucially, we assume that only ex post values
and bids are accessible while signals observed by bidders in past
auctions remain private. We consider steady-states in such environments,
and importantly we allow for correlation in the signal distribution.
We first observe that data-driven bidders may behave suboptimally
in classical auctions such as the second-price or first-price auctions
whenever there are correlations. Allowing for a mix of rational (or
experienced) and data-driven (novice) bidders results in inefficiencies
in such auctions, and we show the inefficiency extends to all auction-like
mechanisms in which bidders are restricted to submit one-dimensional
(real-valued) bids.\smallskip{}

\noindent \textbf{Keywords}: Belief Formation, Auctions, Efficiency\\
\textbf{JEL Classification Numbers:} D44, D82, D90 
\end{abstract}

\section{Introduction}

In the standard model of auctions, bidders hold private information
about the value of the object for sale. They commonly know how information
is distributed across bidders, and every bidder $i$ correctly understands
how bidder $j$ chooses his bid as a function of his private information.
Every bidder best-responds to this correct understanding given the
rules of the auction. The resulting strategy profile is a Bayes Nash
Equilibrium.

A classic rationale for the correct understanding assumption in Bayes
Nash Equilibrium is based on learning \citep[see for example][]{Dekel2004}.
If similar auctions are played many times (by different subjects in
the roles of the various bidders), by looking at previous bids as
well as what the corresponding bidders knew (at the time of the auction),
one can recover the mapping from information to bids for past observations.
If a steady state has been reached, this mapping will correctly describe
behavior in the current auction, thereby supporting the rational expectation
formulation.

But, access to previous bidders' information is not so natural in
practice. If the private information held by previous bidders is not
disclosed, it is less clear how a novice bidder (who would only be
exposed to data from past similar auctions played by others) would
manage to have a correct understanding of other bidders' strategies.
A different modeling is required to describe the behavior of novice
bidders in such a case.

To make progress in this unexplored direction, we consider one-object
auctions in which at the time of the auction, bidder $i$'s private
information is a noisy signal about his (own) ex post value for the
good assumed to take one of finitely many realizations. We assume
that after the auction is completed, what is publicly disclosed is
the profile of bids as well as the ex post values for the various
bidders, but not the signals observed by the bidders at the time of
the auction.\footnote{One could argue that in a number of cases, only the information for
the winner is available, but we leave the study of this for future
research and focus on the non-availability of the private information
held by the bidders at the time of the auction (see also the discussion
in Section \ref{subsec:Non-observability-of-losing}). Note that one
can easily extend the analysis to the case where what is observed
ex post is a noisy signal about the ex post values, instead of the
ex post values themselves.} In the main part, we restrict attention to two-bidder auctions, but
we note that our main insights carry over when there are more bidders.

It should be highlighted that we allow for correlations between signals,
which will play a key role in the analysis. Moreover, despite the
correlation, the setting is one of private values, since the distribution
of bidder $i$'s ex post value is fully determined by bidder $i$'s
signal (i.e., it is unaffected by the other bidder's signal, conditional
on $i$'s signal). Yet, novice players are assumed to be unaware of
the true signal generating process, and thus of the private value
character of the auction. Instead, they construct a representation
of the statistical links between the variables of interest based on
the signal they receive as well as the dataset available to them.
Specifically, observations from past auctions take the form $(b_{1},v_{1},b_{2},v_{2})$
where $b_{j}$ is the bid previously submitted by a subject in the
role of bidder $j$ and $v_{j}$ is his ex post value. A novice bidder
$i$ constructs from the dataset the empirical distribution describing
how $b_{j}$ is distributed conditional on the various possible ex
post values of bidder $i$. He also uses his own signal $\theta_{i}$
about the likelihood of his various possible ex post values $v_{i}$
and combines the two to form a belief about how $(v_{i},b_{j})$ are
jointly distributed given her own signal. He then best-responds to
this belief given the rules of the auction.

We will be considering steady state environments in which there is
a mixed population of bidders composed of a share of novice bidders
(whose expectations are formed as just informally explained) and a
complementary share of rational bidders assumed to have a correct
understanding of the strategies of the two types of bidders as well
as their respective share. Observe that rational bidders can alternatively
be viewed as experienced bidders who would have had the opportunity
to find out the best strategy in their auction environment (without
necessarily an explicit knowledge of how the various types of bidders
behave nor of the shares of the various types). We will refer to such
steady states as Data-Driven Equilibria (see below for a discussion
of how this concept relates to other existing concepts).

We apply this model to understand the efficiency properties of Data-Driven
Equilibria, and more particularly, whether by a judicious choice of
auction rule, one can implement an efficient allocation.\footnote{Obviously, given that at the time of the auction, bidders do not know
their ex post values, efficiency is defined here at the interim stage,
based on the information available to bidders at the time of the auction.} This question resembles classic investigations in mechanism design
with the main difference that our solution concept is not the Bayes
Nash Equilibrium, but the Data-Driven Equilibrium designed to deal
with the presence of novice bidders in our environment. Another departure
from classic mechanism design is that for the main part of the paper,
we will not be considering abstract mechanisms with arbitrary messages
to be sent by bidders to the designer before an outcome (allocation
and transfer) is decided. Instead, we will focus on what we call auction-like
mechanisms defined as mechanisms in which each bidder submits a real-valued
bid, and an outcome is chosen as a function of the profile of bids
with the restriction that if a bidder submits a higher bid, this bidder
has more chance of winning the object.

Our insights are as follows. Unless the distributions of signals of
the two bidders are independent, data-driven bidders rely on a misspecified
statistical model and as a result choose suboptimal bidding strategies.
In Section \ref{sec:Standard-Auctions}, we start illustrating this
with Second-Price Auctions (SPA) in the (symmetric) binary case in
which there are two possible ex post values. We show that unlike rational
bidders, novice bidders do not bid their expected value when there
are correlations. As in winner's curse models, novice bidders make
inference about their ex post value from how the other bidder bids.
In the case of positive correlation, this leads novice bidders to
bid more than their expected value when they receive good signals
(because in the neighborhood of large opponent's bids, the own ex
post value is more likely to be high) and less than their expected
value when they receive bad signals (for a symmetric reason). We provide
a numerical characterization of the equilibrium for a parametric class
of distributions.

Clearly, the fact that novice and rational bidders do not bid in the
same way leads to inefficiencies, unless there is perfect correlation
of the signals, or the bidders are all novice or all rational. For
our parametric example, we observe that the normalized welfare loss
is U-shaped in the share of novice bidders as well as in the degree
of correlation. More generally, we show for the binary mixed population
case that as soon as there are correlations, there is some welfare
loss in the Second Price Auction. We also consider First-Price Auctions
(FPA), for which we also show that there must be inefficiencies whenever
there is correlation. We illustrate through an example that sometimes
the welfare loss may be larger in the Second-Price Auction than in
the First-Price Auction.

Our main result concerns general auction-like mechanisms when ex post
values can take at least three realizations. In Section \ref{sec:Auction-like-Mechanisms},
we provide a general inefficiency result. More precisely, we show
in the mixed population case that for generic joint distributions
of signals, there is no auction-like mechanism that allows to obtain
an efficient outcome with probability one in any Data-Driven Equilibrium.
The intuition for this result is as follows. To obtain efficiency
among rational bidders, only the Second-Price Auction or a strategically
equivalent auction format can be used because with more than two ex
post values there is generically a manifold signal realizations corresponding
to the same expected value for the object, but different beliefs about
the signal realization of the other bidder. Since in Second-Price
auctions, novice bidders do not bid their expected value as also observed
in the simplified binary case, we conclude that inefficiencies must
occur.

In Section \ref{sec:Discussion-and-Extensions}, we put our analysis
in perspective. First, we discuss alternative specifications of cognitive
limitations in auction-like mechanisms either due to different accessibility
to data sets from past auctions, or due to more or less sophisticated
use of the same limited data sets.\footnote{We note that the use of data as considered in the main model would
not allow to reconstruct the correct signal-generating process due
to a fundamental identification constraint. And we suggest our proposed
approach can be regarded as corresponding to a sophisticated use of
the limited dataset that does not make ad hoc assumptions about bidding
behavior in past actions.} We make the simple observation that our main impossibility result
would a fortiori hold if we were to consider a mixed population that
includes extra cognitive types in addition to those considered in
the main part of the paper.

Second, we discuss scenarios in which losing bids would not be accessible
from past auctions, thereby considering a natural further restriction
on the accessible data sets. We discuss various possible approaches
to modeling novice bidders in this context, and suggest that all of
them would lead to results similar to those obtained in our main model.

Third, we briefly discuss more general mechanisms beyond the auction-like
mechanisms and note that judicious use of such mechanisms (direct
mechanisms of the scoring rule type) may allow to elicit the beliefs
of every bidder $i$ about bidder $j$'s (interim) type. Since in
our baseline model, for generic distributions, no two different types
have the same belief about their opponent's type, such mechanisms
could potentially be used to implement a broad range of allocation
rules in the spirit of the work of \citet{Cremer1988}, \citet{Johnson1990},
\citet{McAfee1992}, and \citet{Gizatulina2017}. It should be mentioned
however that such mechanisms are not commonly used in market design
(perhaps because they require a level of knowledge -how the belief
about opponent's type relates to the valuation- that is rarely available
to the designer). This consideration has led us to restrict attention
to auction-like mechanisms which seem much more practical from a market
design perspective.\footnote{Moreover, it should be mentioned that if we were to consider richer
spaces of cognitive types as previously suggested, then one could
easily conceive that different types have the same belief, thereby
reducing the scope of what can be implemented with such abstract mechanisms
(even assuming the designer has the full knowledge required to make
use of such mechanisms).}

\subsection*{Related literature}

Our paper relates to different branches of literature. First, the
modeling of data-driven bidders is in the spirit of the Analogy-Based
Expectation Equilibrium \citep{Jehiel2005} to the extent that these
bidders aggregate the bid behavior of their opponent according to
their own ex post value. Such an aggregation of bidding behavior can
be related to the payoff-relevant analogy partition introduced in
\citet{Jehiel2008}. This modeling can also be related to the Bayesian
Network Equilibrium \citep{Spiegler2016}, viewing these agents as
believing that their values cause the bid of the opponent, but note
that here the reasoning of novice bidders is viewed as a consequence
of the nature of the dataset accessible to them, not as a consequence
of a subjective wrong causality relation they could have in mind (see
\citet{Spiegler2020}, and \citet{Jehiel2020}, for elaborations of
the link between the Analogy-Based Expectation Equilibrium and the
Bayesian Network Equilibrium).\footnote{At a more general level, one can also relate the Data-Driven Equilibrium
to the self-confirming equilibrium \citep{Battigalli1992,Dekel2004},
as well as recent behavior equilibrium models dealing with misspecified
beliefs \citep[see][among others]{Eyster2005,Esponda2008,Esponda2016}.}

Our paper is also related to the robust mechanism design literature
\citep{Bergemann2005}, in the sense that a common motivation in that
literature and our approach is that it may be hard to know what the
beliefs of agents are. While the robust mechanism design literature
uses this observation to motivate the desire to implement outcomes
for a large range of (or even all) beliefs, our paper explicitly suggests
a method of belief formation for bidders who do not have access to
such information from past auctions. Our paper is also mostly concerned
with a subclass of mechanisms that we refer to as auction-like mechanisms
and how these perform in the joint presence of data-driven bidders
and rational bidders, which has no counterpart in the literature on
robust mechanism design.

Finally, from a technical point of view, our analysis makes use of
some results developed in the literature on mechanism design with
correlation. In particular, we borrow genericity arguments from \citet{Gizatulina2017}.

\section{Model\label{sec:Model}}

\paragraph{Mechanisms}

We consider the allocation of a single object to two bidders $i=1,2$
via an auction or more general auction-like mechanism. To simplify
notation, when we consider a generic bidder $i\in\{1,2\}$, we denote
the opponent by $j\neq i$. A Mechanism $M=\left[(B_{i}),q,p\right]$
consists of three elements: (i) feasible bids $B_{i}$ for the two
bidders. A profile of bids is denoted $b=(b_{1},b_{2})\in B:=B_{1}\times B_{2}$.
(ii) an allocation rule $q:B\rightarrow[0,1]^{2}$, $q(b)=(q_{1}(b),q_{2}(b))$,
with $q_{1}(b)+q_{2}(b)\le1$, where $q_{i}(b)$ is the probability
that bidder $i$ gets the object if the bid profile $b$ is submitted.
(iii) A payment rule $t:B\rightarrow\mathbb{R}^{2}$, $p(b)=(p_{1}(b),p_{2}(b))$,
where $p_{i}(b)$ denotes the payment bidder $i$ has to make if the
bid profile $b$ is submitted.

\paragraph{Valuations}

Ex-post, the value of the object for bidder $i$ is denoted $v_{i}$.
It can take values in $V=\left\{ v^{1},\ldots v^{K}\right\} $. $v_{i}$
can also be interpreted as the expected value based on a signal that
is known ex-post if the ex-post value is not learned completely. Up
to normalization, it is without loss to assume that $0=v^{1}<\ldots<v^{K}=1$.
When participating in a mechanism, each bidder has an interim type
$\theta_{i}=(\theta_{i}^{1},\ldots,\theta_{i}^{K})\in\Theta:=\Delta V$,
where $\theta_{i}^{k}$ denotes the probability that $v_{i}=v^{k}$.
A profile of types is denoted $\theta=(\theta_{1},\theta_{2})$. We
assume that conditional on $\theta_{i}$, $v_{i}$ is independent
of $\theta_{j}$. As a consequence the expected valuation of a bidder
only depends on her own interim type: $E[v_{i}|\theta]=E[v_{i}|\theta_{i}]$.
In other words, we are considering a setting with \emph{private values}.
Interim types are jointly distributed with distribution function $F(\theta)$
and density $f(\theta)$ defined over\textbf{\ }$\Theta^{2}$,\textbf{\ }and
our main interest is in the case where $\theta_{1}$ and $\theta_{2}$
are not independent. We assume throughout that the joint distribution
is symmetric and has a continuous and positive density. When there
is no confusion, we slightly abuse notation and denote marginal distributions
$F_{i}(\theta_{i})$ and $f_{i}(\theta_{i})$ by $F(\theta_{i})$
and $f(\theta_{i})$; and conditional distributions $F_{i}(\theta_{i}|\theta_{j})$
and $f_{i}(\theta_{i}|\theta_{j})$, by $F(\theta_{i}|\theta_{j})$
and $f(\theta_{i}|\theta_{j})$.

\paragraph{Rational and Misspecified Bidders}

We assume that each bidder $i$ is characterized by a \emph{generalized
type} $t_{i}=\left(\theta_{i},s_{i}\right)$, where $\theta_{i}$
denotes the \emph{interim type} described before, and $s_{i}\in\left\{ r,m\right\} $
specifies the \emph{sophistication} of the bidder. We denote the set
of general types by $T=\Theta\times\{r,m\}$. For simplicity we will
call $\theta_{i}$ just the \emph{type}. The probability that $s_{i}=r$
is denoted $\lambda\in(0,1)$; we assume that it is independent of
$\theta_{i}$ and across bidders. $s_{i}=r$ means that bidder $i$
is \emph{rational}; and $s_{i}=m$ means that bidder $i$ is \emph{misspecified}.
Informally, the rational type correctly understands the environment,
whereas the misspecified type holds beliefs that are endogenously
determined by past observations of equilibrium outcomes of the mechanism
she currently participates in. As we will see, this way of forming
beliefs can lead to misspecifications, hence the name of the type.

We now make this precise. Fix a mechanism $M=\left[(B_{i}),q,p\right]$.
A strategy of bidder $i$ is a function $b_{i}:T\rightarrow B_{i}$,
where as a shorthand we write $b_{i}(\theta_{i},s_{i})=b_{i}^{s_{i}}(\theta_{i})$---that
is, $b_{i}^{r}(\cdot)$ is the strategy of the rational type, and
$b_{i}^{m}(\cdot)$ is the strategy of the misspecified type of bidder
$i$.\footnote{We only consider pure strategies in our setting with continuous interim
types.} A strategy profile is denoted by $b=(b_{1},b_{2})=(b_{1}^{r},b_{1}^{m},b_{2}^{r},b_{2}^{m})$
and we denote the space of all strategy profiles by $\mathcal{B}$.

For a rational type of bidder $i$, the expected utility of type $\theta_{i}$
when submitting bid $b_{i}\in B_{i}$, and assuming that bidder $j$
bids according to $b_{j}(\cdot)$, is given by 
\[
U_{i}^{r}(b_{i},\theta_{i}|b_{j}(\cdot))=\mathbb{E}_{f}\left[v_{i}\,q_{i}(b_{i},b_{j}(\theta_{j},s_{j}))-p_{i}(b_{i},b_{j}(\theta_{j},s_{j}))|\theta_{i}\right],
\]
where $\mathbb{E}_{f}$ is the expectation with respect to the correct
distribution $f$ and the probability $\lambda$.

Next consider the misspecified type. We assume that this type forms
a belief using past observations from the same mechanism played by
similar bidders. Suppose the mechanism is run repeatedly with two
(short-lived) bidders whose generalized type profiles are drawn i.i.d.,~across
repetitions. If both bidders play according to a fixed strategy profile,
as they would in a steady state, then repeated play generates a data
set with observations $(b_{1},v_{1},b_{2},v_{2})$. We make the assumption
that only bids and ex-post valuations are observable.
\begin{assumption}
\label{assu:Observability}For each mechanism we consider, we assume
that bidders have access to observations of the form $(b_{1},v_{1},b_{2},v_{2})$
from the same mechanism. The data about past mechanisms does not include
the types $(\theta_{1},\theta_{2})$ of past bidders.
\end{assumption}
\noindent The idea behind this assumption is that bids are often disclosed
after an auction and as time goes by, the ex-post valuation of the
bidders, or an estimate thereof becomes known as well. On the other
hand, bidders typically do not have access to the beliefs that past
bidders in their role held at the time of bidding.

Past data allow bidders to identify the joint distribution of observable
variables. We abstract from issues of estimation, and assume that
bidders can recover this distribution without estimation error. The
misspecified bidder then forms a simple model that combines relevant
information from the empirical distribution of $(b_{1},v_{1},b_{2},v_{2})$,
and her belief that her own $v_{i}$ is distributed according to $\theta_{i}$.
To illustrate consider an auction with possible bids $B_{1}=B_{2}=[0,\infty)$.
To assess the payoff from different bids, a bidder needs to know the
joint distribution of her own valuation $v_{i}$ and the opponent's
bid $b_{j}$, conditional on her own type $\theta_{i}$. The misspecified
bidder combines the distribution of $v_{i}$ given by her type $\theta_{i}$
with the joint distribution of $v_{i}$ and the opponent's bid $b_{j}$
learned from the data in a parsimonious way, taking the joint distribution
to be 
\begin{equation}
\mathbb{P}_{m}\left[v_{i}=v^{k},b_{j}\leq b\middle|\theta_{i}\right]=\theta_{i}^{k}\times H_{i}(b|v^{k})\label{eq:misspecified_probabilitiy}
\end{equation}
where $H_{i}(b|v^{k})$ is the c.d.f.~of $b_{j}$ conditional on
$v_{i}=v^{k}$ that is obtained from the data. Throughout, we will
use $\mathbb{P}_{m}$ for probabilities assessed by the misspecified
type and $\mathbb{P}_{f}$ for probabilities computed using the correct
probabilistic model (given the density ``$f$''). To see the difference
in this particular example, note that 
\[
\mathbb{P}_{f}\left[v_{i}=v^{k},b_{j}\leq b\middle|\theta_{i}\right]=\theta_{i}^{k}\times\mathbb{P}_{f}\left[b_{j}\leq b\middle|\theta_{i},v_{i}=v^{k}\right]=\theta_{i}^{k}\times\mathbb{P}_{f}[b_{j}\leq b|\theta_{i}]
\]
where the second equality follows from the assumption that $\theta_{j}$
and $v_{i}$ are independent, conditional on $\theta_{i}$. Under
Assumption \ref{assu:Observability}, $\mathbb{P}_{f}[b_{j}\leq b|\theta_{i}]$
cannot be assessed directly from the data since the types of past
bidders are not available. In order to identify $\mathbb{P}_{f}[b_{j}\leq b|\theta_{i}]$
from the data, one would have to make assumptions about the strategies
used by past bidders. These assumptions are ad hoc if only data on
past bids and ex-post signals are available and a misspecified bidder
does not have insight into the type of reasoning used by past bidders.
The misspecified type therefore does not attempt to use the data through
the lens of such assumptions but just takes the empirical correlation
between $v_{i}$ and $b_{j}$ as given. One way to interpret the difference
between the rational and the misspecified type is that the former
is an experienced bidder who understands how bidders reason and is
thus able to formulate the correct model to interpret the data where
the latter lacks this understanding.\footnote{An alternative equivalent interpretation of a rational bidder is that
such a bidder has found out the optimal strategy (possibly through
a trial and error process) whereas the misspecified bidder can only
rely on the publicly available data from past similar auctions.} In particular, the misspecified type does not know that conditional
on her own type $\theta_{i}$, her own valuation is independent of
the opponent's type and thus her valuation and the opponents bid are
also conditionally independent. The data available from past auctions,
however, exhibits a correlation between the valuation $v_{i}$ and
the bid $b_{j}$, since conditioning on the the unobserved type $\theta_{i}$
is not possible. This is the source of the misspecification of the
$m$-type. As we will see, this gives rise to bidding behavior that
is similar to the winner's curse.

To summarize, for a misspecified type of bidder $i$, the expected
utility of type $\theta_{i}$ when submitting bid $b_{i}\in B_{i}$,
and assuming that bidder $j\neq i$ bids according to $b_{j}(\cdot)$,
is given by 
\begin{align*}
U_{i}^{m}(b_{i},\theta_{i}|b_{j}(\cdot)) & =\mathbb{E}_{m}\left[v_{i}\,q_{i}(b_{i},b_{j}(\theta_{j},s_{j}))-p_{i}(b_{i},b_{j}(\theta_{j},s_{j}))|\theta_{i}\right],\\
 & =\sum_{k=1}^{K}\theta_{i}^{k}\int_{B_{j}}[v^{k}\,q_{i}(b_{i},b_{j})-p_{i}(b_{i},b_{j})]dH_{i}(b_{j}|v^{k}),
\end{align*}
where $\mathbb{E}_{m}$ is the expectation formed according to the
model described above. Note that in order to determine $H_{i}(\cdot|v^{k})$,
it is enough to specify the strategy $b_{j}(\cdot)$ since $v_{i}$
and $b_{j}$ in the current auction do not depend on the bids placed
by the bidder in role $i$ in the past.\footnote{Another way to motivate why bidder $i$\ does not use the $b_{i}$\ from
past auctions is that she is unsure what led a past bidder $i$\ to
choose his bid, as it could be determined by his information and/or
his way of reasoning none of which are accessible to her.} To understand this better, the example of the second-price auction
in the next section will be helpful.

\paragraph{Equilibrium}

To close the model, we assume that $H_{i}(\cdot|v^{k})$ are equilibrium
objects that are generated by the equilibrium strategy profile and
the misspecified type best-responds given her beliefs that are captured
by $H_{i}(\cdot|v^{k})$.
\begin{defn}
The strategy profile $b(\cdot)$ is a \emph{``Data-Driven}\textbf{\emph{\ }}\emph{Equilibrium''}
of the mechanism $M=\left[(B_{i}),q,t\right]$ if for all $i\neq j$,
and for all $\theta_{i}\in\Theta$, 
\end{defn}
\begin{enumerate}
\item $b_{i}^{r}(\theta_{i})\in\arg\max_{b_{i}\in B_{i}}U_{i}^{r}(b_{i},\theta_{i}|b_{j}(\cdot))$,
\item $b_{i}^{m}(\theta_{i})\in\arg\max_{b_{i}\in B_{i}}U_{i}^{m}(b_{i},\theta_{i}|b_{j}(\cdot))$,
where the distribution $H_{i}(b_{j}|v^{k})$ used to compute $U_{i}^{m}$
is derived from $\beta$, $f$, and $\text{Prob}[s_{i}=r]=\lambda$. 
\end{enumerate}

\section{Standard Auctions\label{sec:Standard-Auctions}}

Before considering auction-like mechanisms and presenting the main
result of the paper, we apply the model to standard auctions. This
illustrates how data-driven beliefs affect bidding behavior.

In order to apply the model to standard auctions, we consider the
case where $|V|=2$, so that the type of each bidder is one-dimensional.
More specifically, we assume that $V=\{0,1\}$, so that the type can
be written as one number $\theta_{i}\in[0,1]$, that specifies the
probability that bidder $i$'s ex-post valuation is $v_{i}=1$. Note
that this implies that $\theta_{i}$ is also the interim expected
value of bidder $i$. In the following, we explain the equilibrium
logic of our model for two standard auctions formats, the Second-Price
Auction and the First-Price Auction. To compute concrete bidding equilibria,
we will use a parametric class of joint distributions that allows
us to vary the correlation between $\theta_{1}$ and $\theta_{2}$.
\begin{example}
\label{exa:joint-density}The joint density is given by 
\[
f(\theta_{1},\theta_{2})=\frac{2+\alpha}{2}\left(1-\left\vert \theta_{1}-\theta_{2}\right\vert \right)^{\alpha}.
\]
The parameter $\alpha\in[0,\infty)$ determines the correlation between
the two types where $\alpha=0$ corresponds to the independent case
and $\alpha=\infty$ corresponds to perfect correlation. Figure \ref{fig:Joint-density}
depicts the joint density for $\alpha\in\{.1,1,10\}$. 
\begin{figure}[h]
\includegraphics[width=1\textwidth]{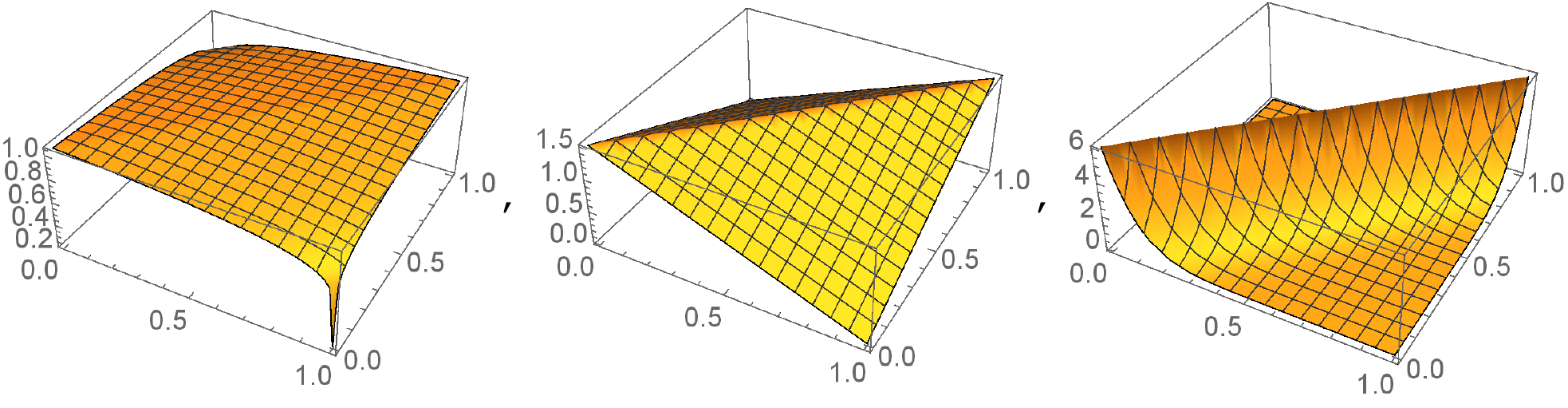} \caption{Joint density, $\alpha\in\{.1,1,10\}$ (left to right) }
\label{fig:Joint-density} 
\end{figure}
\end{example}

\subsection{Second-price Auction}

In a second-price auction, the rational type has a weakly dominant
strategy since values are private. Hence she bids her interim expected
value. We have 
\[
b^{r}(\theta_{i})=\theta_{i},
\]
where $b^{r}$ refers to the rational type's strategy. We denote the
inverse by $\theta^{r}(b_{i})$, which is of course equal to\textbf{\ }$b_{i}$
in this case.

Now consider the misspecified type and consider a symmetric equilibrium,
that is $b_{i}^{m}(\cdot)=b_{j}^{m}(\cdot)=b^{m}(\cdot)$. Suppose
the equilibrium strategy $b^{m}(\cdot)$ is strictly increasing with
inverse $\theta^{m}(b_{i})$. In equilibrium, the distribution of
$b_{j}$ conditional on $v_{i}=1$ is 
\begin{equation}
H^{\text{SPA}}(b\mid v_{i}=1)=\frac{\mathbb{P}_{f}\left[b_{j}\leq b,v_{i}=1\right]}{\mathbb{P}_{f}\left[v_{i}=1\right]},\label{eq:HSPA1_informal}
\end{equation}
where $H^{\text{SPA}}(\cdot)$ refers to this distribution for the
SPA. Note that the misspecified type learns the correct joint distribution
of $v_{i}$ and $b_{j}$ from the data. Hence we have used the correct
probabilities $\mathbb{P}_{f}$ on the right-hand side. In the denominator,
we have the unconditional probability of $v_{i}=1$ which is given
by the (ex-ante) expectation of the random variable $\tilde{\theta}_{i}$.
In the numerator, the probability $\mathbb{P}_{f}[b_{j}\leq b,v_{i}=1]$
is obtained by averaging $\mathbb{P}_{f}[b_{j}\leq b,v_{i}=1|\tilde{\theta}_{i}]$
over the (ex-ante) random variable $\tilde{\theta}_{i}$. Since $b_{j}$
is a function of $\theta_{j}$ and $s_{j}$, and the \emph{generalized
type} $(\theta_{j},s_{j})$ and $v_{i}$ are independent \emph{conditional
on $\tilde{\theta}_{i}$}, we have: 
\begin{align*}
H^{\text{SPA}}(b\mid v_{i}=1) & =\frac{\mathbb{E}_{\tilde{\theta}_{i}}\left[\mathbb{P}_{f}[b_{j}\leq b|\tilde{\theta}_{i}]\times\mathbb{P}_{f}[v_{i}=1|\tilde{\theta}_{i}]\right]}{\mathbb{E}[\tilde{\theta}_{i}]}\\
 & =\frac{\mathbb{E}_{\tilde{\theta}_{i}}\left[\left(\lambda\mathbb{P}_{f}[b^{r}(\theta_{j})\leq b|\tilde{\theta}_{i}]+(1-\lambda)\mathbb{P}_{f}[b^{m}(\theta_{j})\leq b|\tilde{\theta}_{i}]\right)\times\mathbb{P}_{f}[v_{i}=1|\tilde{\theta}_{i}]\right]}{\mathbb{E}[\tilde{\theta}_{i}]}\\
 & =\frac{1}{\mathbb{E}[\tilde{\theta}_{i}]}\int_{0}^{1}\left[\lambda F(b\mid\tilde{\theta}_{i})+(1-\lambda)F(\theta^{m}(b)\mid\tilde{\theta}_{i})\right]\tilde{\theta}_{i}\;f(\tilde{\theta}_{i})d\tilde{\theta}_{i}.
\end{align*}
In the second line we decomposed the probability $\mathbb{P}_{f}[b_{j}\leq b|\tilde{\theta}_{i}]$
into the probability that a rational and a misspecified type bid below
$b$, conditional on $\tilde{\theta}_{i}$. If the opponent is rational,
the probability of $b_{j}\leq b$ is given by $\mathbb{P}_{f}[b^{r}(\theta_{j})\leq b|\tilde{\theta}_{i}]=F(\theta^{r}(b)\mid\tilde{\theta}_{i})=F(b\mid\tilde{\theta}_{i})$,
and if the opponent is misspecified it is given by $\mathbb{P}_{f}[b^{m}(\theta_{j})\leq b|\tilde{\theta}_{i}]=F(\theta^{m}(b)\mid\tilde{\theta}_{i})$.
The term $\tilde{\theta}_{i}$\ in the third line is just $\mathbb{P}_{f}[v_{i}=1|\tilde{\theta}_{i}]$.
We obtain a similar expression for the distribution of $b_{j}$ conditional
on $v_{i}=0$: 
\[
H^{\text{SPA}}(b\mid v_{i}=0)=\frac{1}{\mathbb{E}[1-\tilde{\theta}_{i}]}\int_{0}^{1}\left[\lambda F(b\mid\tilde{\theta}_{i})+(1-\lambda)F(\theta^{m}(b)\mid\tilde{\theta}_{i})\right](1-\tilde{\theta}_{i})\;f(\tilde{\theta}_{i})d\tilde{\theta}_{i}
\]
where the expectation in the integral differs from that in $H^{\text{SPA}}(b\mid v_{i}=1)$
since $P_{f}[v_{i}=0|\tilde{\theta}_{i}]=(1-\tilde{\theta}_{i})$,
and outside the integral $\mathbb{E}[1-\tilde{\theta}_{i}]$ is the
unconditional probability $P_{f}[v_{i}=0]$.\footnote{Note that $H^{\text{SPA}}(b\mid v_{i}=0)=\mathbb{P}_{f}\left[b_{j}\leq b,v_{i}=0\right]/\mathbb{P}_{f}\left[v_{i}=0\right]$}

In a symmetric equilibrium of the second-price auction, the misspecified
type's bid for $\theta_{i}$ solves 
\[
\max_{b}\left\{ \theta_{i}H^{\text{SPA}}(b\mid v_{i}=1)-\theta_{i}\int_{0}^{b}xdH_{{}}^{\text{SPA}}(x|v_{i}=1)-(1-\theta_{i})\int_{0}^{b}xdH^{\text{SPA}}(x|v_{i}=0)\right\} 
\]
To obtain an equilibrium we have to determine a bidding strategy $b^{m}$
and the implied $H^{\text{SPA}}$ such that $b^{m}$ is optimal for
the misspecified type give belief $H^{\text{SPA}}$. Taking the first-order
condition for $b$ and substituting $H^{\text{SPA}}(b\mid v_{i}=1)$
and $H^{\text{SPA}}(b\mid v_{i}=0)$, we obtain a differential equation
for $b^{m}$.

In Example \ref{exa:joint-density}, when $\alpha=0$---the independent
case---we have that $H^{\text{SPA}}(b\mid v_{i}=1)=H^{\text{SPA}}(b\mid v_{i}=0)$
and the first order condition leads to $b^{m}(\theta_{i})=\theta_{i}$.
But, when $\alpha$\ differs from $0$, $H^{\text{SPA}}(b\mid v_{i}=1)$
differs from $H^{\text{SPA}}(b\mid v_{i}=0)$ and $b^{m}(\theta_{i})$
differs from $\theta_{i}$. Solving the differential equation numerically
for the joint distribution from Example \ref{exa:joint-density},
we get the bid-functions illustrated in Figure \ref{fig:Bid_functions_spa}.
\begin{figure}
\noindent \begin{centering}
\includegraphics[clip,height=0.14\textheight,viewport=0bp 0bp 590.198bp 371bp]{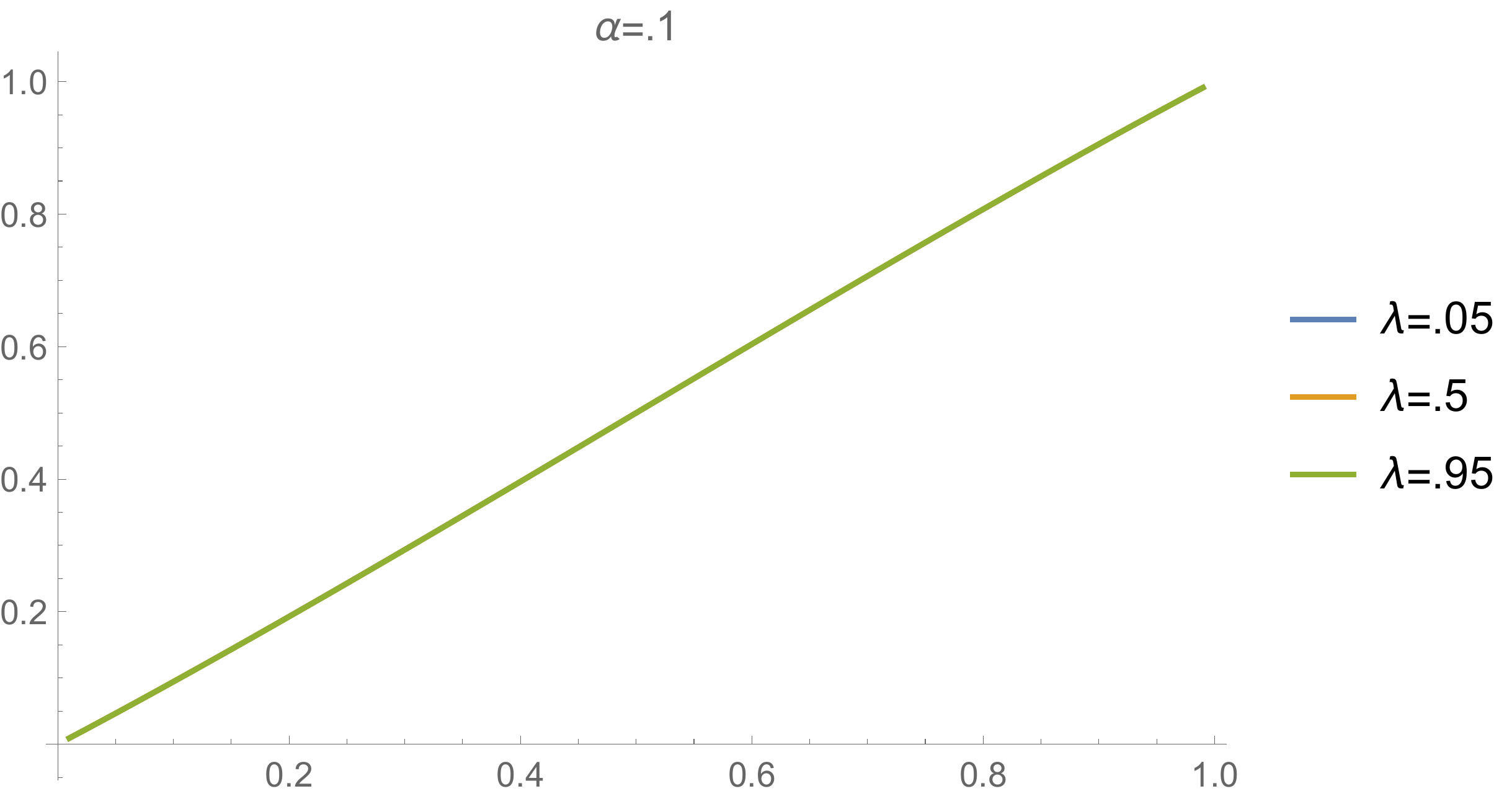}\includegraphics[clip,height=0.14\textheight,viewport=0bp 0bp 590.198bp 371bp]{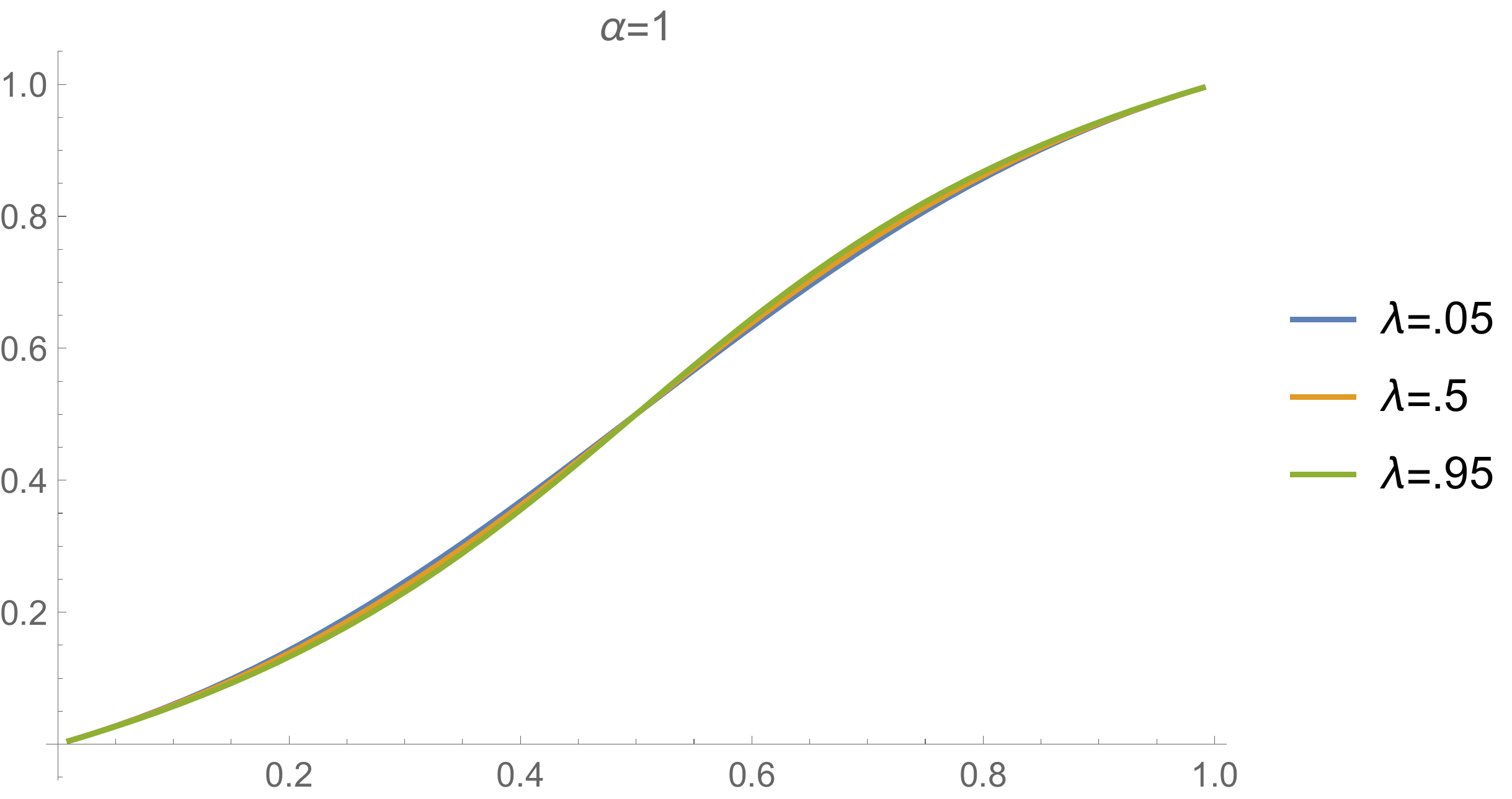}\includegraphics[clip,height=0.14\textheight]{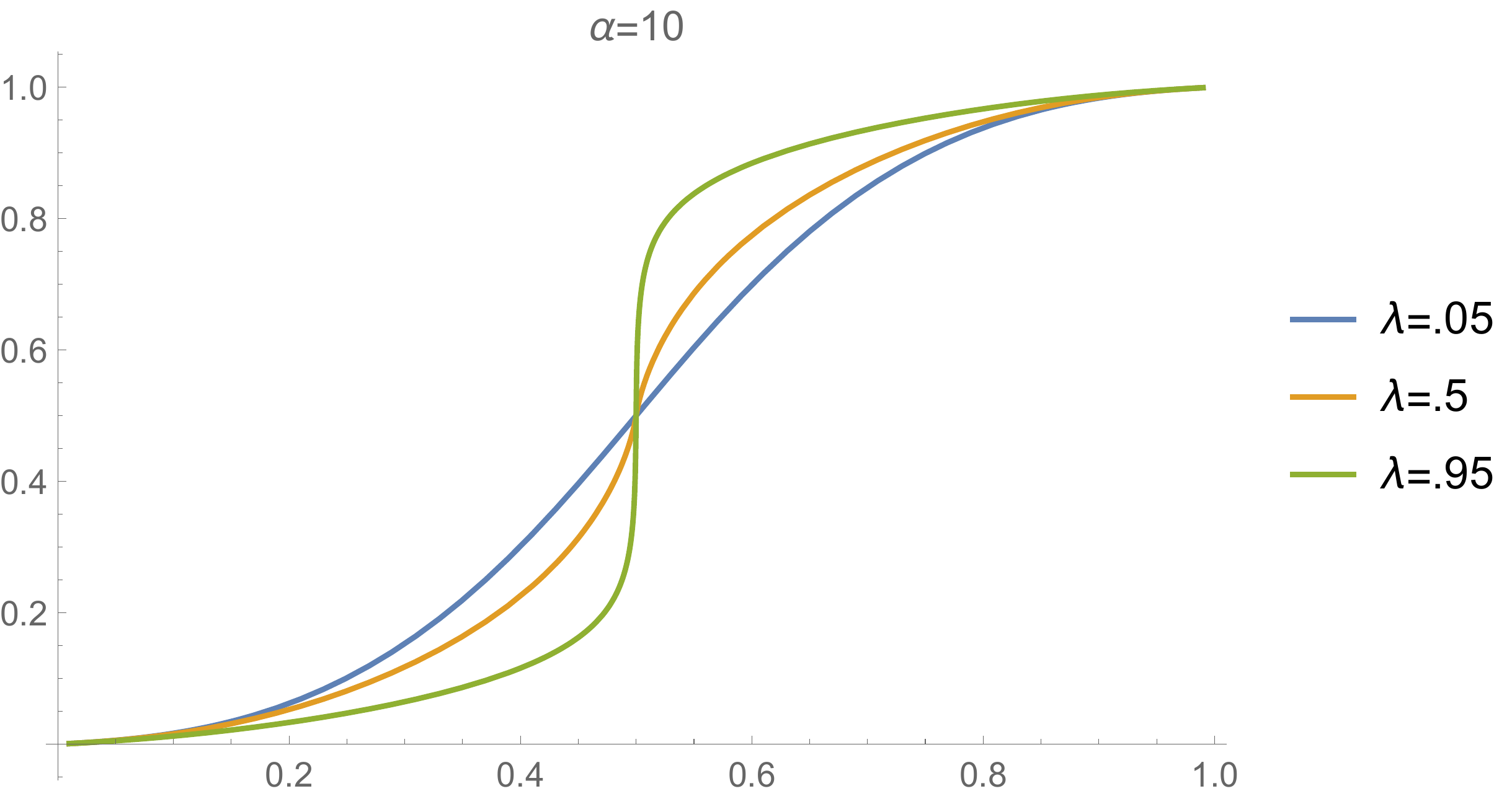} 
\par\end{centering}
\noindent \caption{SPA bid-function $b_{m}(\theta_{i})$, $\alpha\in\{.1,1,10\}$ (left
to right) }
\label{fig:Bid_functions_spa} 
\end{figure}

We see that increasing the correlation leads to stronger deviations
from the rational bid. Moreover, the sensitivity of $b_{m}$ with
respect to $\lambda$ becomes stronger if the correlation is stronger.
Generally, for fixed correlation, increasing the share of misspecified
types $(1-\lambda)$ leads to smaller deviation from rationality.
Bidding against mainly rational types, a misspecified type's behavior
exhibits strong deviations from rationality,\footnote{Numerical computations indicate that even if $\lambda\rightarrow1$,
the slope of $b^{m}$ remains bounded, where the bound depends on
$\alpha$. In other words, $b^{m}$ does \emph{not} converge to a
step function according to the numerical results.} but in equilibrium, the presence of other misspecified types has
a dampening effect.

\paragraph{Intuition}

The reasoning leading to the derivation of $b^{m}$ follows a logic
similar to that in classic analysis of winner's curse models. We observe
from Figure \ref{fig:Bid_functions_spa} that the misspecified type
overbids for $\theta_{i}>1/2$ and underbids for $\theta_{i}<1/2$.
What explains this behavior? To understand this, it is useful to shut
down the (dampening) equilibrium effect of misspecified types and
assume that $\lambda\approx1$. The crucial observation is that the
$m$-type believes that conditional on $v_{i}=1$, the opponent's
bid distribution is strong. This is because in the data, $v_{i}$
and $b_{j}$ are positively correlated: Observations with $v_{i}=1$
are more likely generated when $\tilde{\theta}_{i}$ is high. Due
to the positive correlation between $\theta_{i}$ and $\theta_{j}$,
this implies that $b_{j}$ is also likely to be high. Conversely,
the $m$-type believes that conditional on $v_{i}=0$, the opponent's
bid distribution is weak.

For an $m$-type with low $\theta_{i}$, consider the incentives to
decrease the bid below $b=\theta_{i}$. In this range reducing the
bid has a large effect on the winning probability conditional on $v_{i}=0$
(the $m$-type believes that conditional on $v_{i}=0$, the opponents
bid's are concentrated on a low range) and little effect on the winning
probability conditional on $v_{i}=1$ (where the $m$-type believes
the opponents bid's are concentrated on a high range). Therefore,
the $m$-type believes that by shading the bid, she can cut the losses
from winning with $v_{i}=0$, without a strong reduction of the gains
from winning when $v_{i}=1$.

For a high $\theta_{i}$, this logic is reversed. Consider the incentives
to increase the bid above $b=\theta_{i}$ when $\theta_{i}$ is high.
The bid is now in a range where the $m$-type believes that increasing
the bid mainly effects the winning probability conditional on $v_{i}=1$
and has little effect on the winning probability conditional on $v_{i}=0$.
Hence, she thinks overbidding increases the profits from winning with
$v_{i}=1$, while only modestly increasing the losses from winning
with $v_{i}=0$. This leads to bids above $\theta_{i}$ for high types
of the misspecified bidder.

\paragraph{Inefficiency of the Second-Price Auction}

While the distortions observed in the example are specific to the
parametric class of distributions, we can show generally that the
SPA is not efficient whenever both rational and misspecified types
arise with positive probability, and the types of the two bidders
are correlated.\footnote{Correlation is a sufficient condition for an inefficiency. The careful
reader will see from the proof that weaker forms of dependency also
lead to inefficiencies. In Section \ref{sec:Auction-like-Mechanisms}
we generalize this proposition to any finite number of valuations
(see Lemma \ref{lem:m-type_dont_bid_truthfully}).}
\begin{prop}
\label{prop:SPA_is_inefficient}If $\lambda\in(0,1)$ and $\text{Corr}[\theta_{1},\theta_{2}]\neq0$,
then any equilibrium of the second-price auction in which the rational
types of both bidder play their dominant strategies is inefficient. 
\end{prop}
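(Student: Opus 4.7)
The plan is to argue by contradiction: suppose there exists a Data-Driven Equilibrium in which rational types play their dominant strategy $b^{r}(\theta_i)=\theta_i$ and the allocation is efficient on a probability-one event. I will show first that efficiency forces the misspecified type to bid the identity $b^{m}(\theta_i)=\theta_i$ almost everywhere, and second that identity bidding cannot be a best response for the misspecified type unless $\mathrm{Corr}[\theta_1,\theta_2]=0$.

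Step 1 (efficiency forces $b^{m}=\mathrm{id}$ a.e.). With $V=\{0,1\}$ the efficient allocation assigns the object to the bidder with the larger interim type $\theta_i$. Because $\lambda\in(0,1)$ and the joint density of $(\theta_1,\theta_2)$ is continuous and strictly positive on $[0,1]^2$, for every open rectangle the event that bidder $1$ is rational with $\theta_1$ in its first coordinate and bidder $2$ is misspecified with $\theta_2$ in its second coordinate has positive probability. Along such pairings efficiency is equivalent to $\theta_1>b^{m}(\theta_2)\iff\theta_1>\theta_2$ for a.e.\ $(\theta_1,\theta_2)$. If $b^{m}(\theta_2)\neq\theta_2$ on a set of positive Lebesgue measure, a rational opponent's type strictly between $\theta_2$ and $b^{m}(\theta_2)$ occurs with positive probability and produces a misallocation. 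Hence $b^{m}(\theta_i)=\theta_i$ a.e.

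Step 2 (first-order condition forces independence). Under $b^{m}=\mathrm{id}$, both types bid $\theta_j$, so the data-driven belief coincides with the actual conditional density: $h^{\mathrm{SPA}}(\theta|v_i=k)$ equals the density of $\theta_j$ given $v_i=k$ for $k\in\{0,1\}$. Differentiating the misspecified type's objective stated just before the proposition yields the first-order condition
\[
\theta_i\,(1-b)\,h^{\mathrm{SPA}}(b|v_i=1)=(1-\theta_i)\,b\,h^{\mathrm{SPA}}(b|v_i=0).
\]
Evaluating at the candidate optimum $b=\theta_i\in(0,1)$ and dividing through by $\theta_i(1-\theta_i)$ gives $h^{\mathrm{SPA}}(\theta_i|v_i=1)=h^{\mathrm{SPA}}(\theta_i|v_i=0)$. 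For this equality to hold at a.e.\ $\theta_i$, the conditional density of $\theta_j$ given $v_i=1$ must equal that given $v_i=0$, i.e.\ $\theta_j$ and $v_i$ are independent.

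Step 3 (contradiction with correlation). By the private-values assumption $v_i$ is independent of $\theta_j$ conditional on $\theta_i$, and $\mathbb{E}[v_i|\theta_i]=\theta_i$. Iterating expectations,
\[
\mathbb{E}[v_i|\theta_j]=\mathbb{E}[\mathbb{E}[v_i|\theta_i,\theta_j]|\theta_j]=\mathbb{E}[\theta_i|\theta_j].
\]
Independence of $v_i$ and $\theta_j$ forces the left-hand side to be constant in $\theta_j$, hence $\mathbb{E}[\theta_i|\theta_j]=\mathbb{E}[\theta_i]$, and taking covariance with $\theta_j$ gives $\mathrm{Cov}(\theta_1,\theta_2)=0$, contradicting the hypothesis. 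The main obstacle I anticipate is Step 1: turning efficiency into the pointwise statement $b^{m}=\mathrm{id}$ a.e.\ without assuming continuity or monotonicity of $b^{m}$. This requires splitting $\{b^{m}(\theta)\neq\theta\}$ into the subsets where $b^{m}(\theta)-\theta$ has a definite sign and applying Fubini together with positivity of the density of $(\theta^{r},\theta^{m})$; the remaining steps are then essentially a FOC computation and a covariance identity.
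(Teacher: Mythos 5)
Your proposal is correct and follows essentially the same route as the paper's proof: efficiency forces $b^{m}(\theta_i)=\theta_i$, the misspecified type's first-order condition at $b=\theta_i$ then forces $H^{\text{SPA}\prime}(\cdot\mid v_i=1)=H^{\text{SPA}\prime}(\cdot\mid v_i=0)$, which is equivalent to $\mathbb{E}[\theta_i\mid\theta_j]$ being constant and hence to zero correlation. The only (minor) differences are that you justify the reduction to $b^{m}=\mathrm{id}$ more explicitly than the paper does, and you phrase the final step via independence of $v_i$ and $\theta_j$ plus iterated expectations rather than the paper's direct computation of the density difference in terms of $\mathbb{E}[\tilde{\theta}_i\mid\theta_j=\theta_i]$.
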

\begin{proof}
All omitted proofs can be found in Appendix \ref{sec:Omitted-Proofs}. 
\end{proof}

\paragraph{Revenue and Efficiency}

Continuing our illustration for the parametric class in Example \ref{exa:joint-density},
we show how revenue and (relative) efficiency of the allocation varies
with (a) the share of rational types $\lambda$ and (b) the correlation
between $\theta_{1}$ and $\theta_{2}$---that is the parameter $\alpha$.

Figure \ref{fig:spa_revenue} plots the revenue as a function of $\lambda$
for different values of $\alpha$. Note that the comparison between
different values of $\alpha$ with $\lambda$ held fixed is not very
informative since the joint distribution changes in a complicated
way as $\alpha$ changes. 
\begin{figure}
\noindent \begin{centering}
\includegraphics[width=0.45\textwidth]{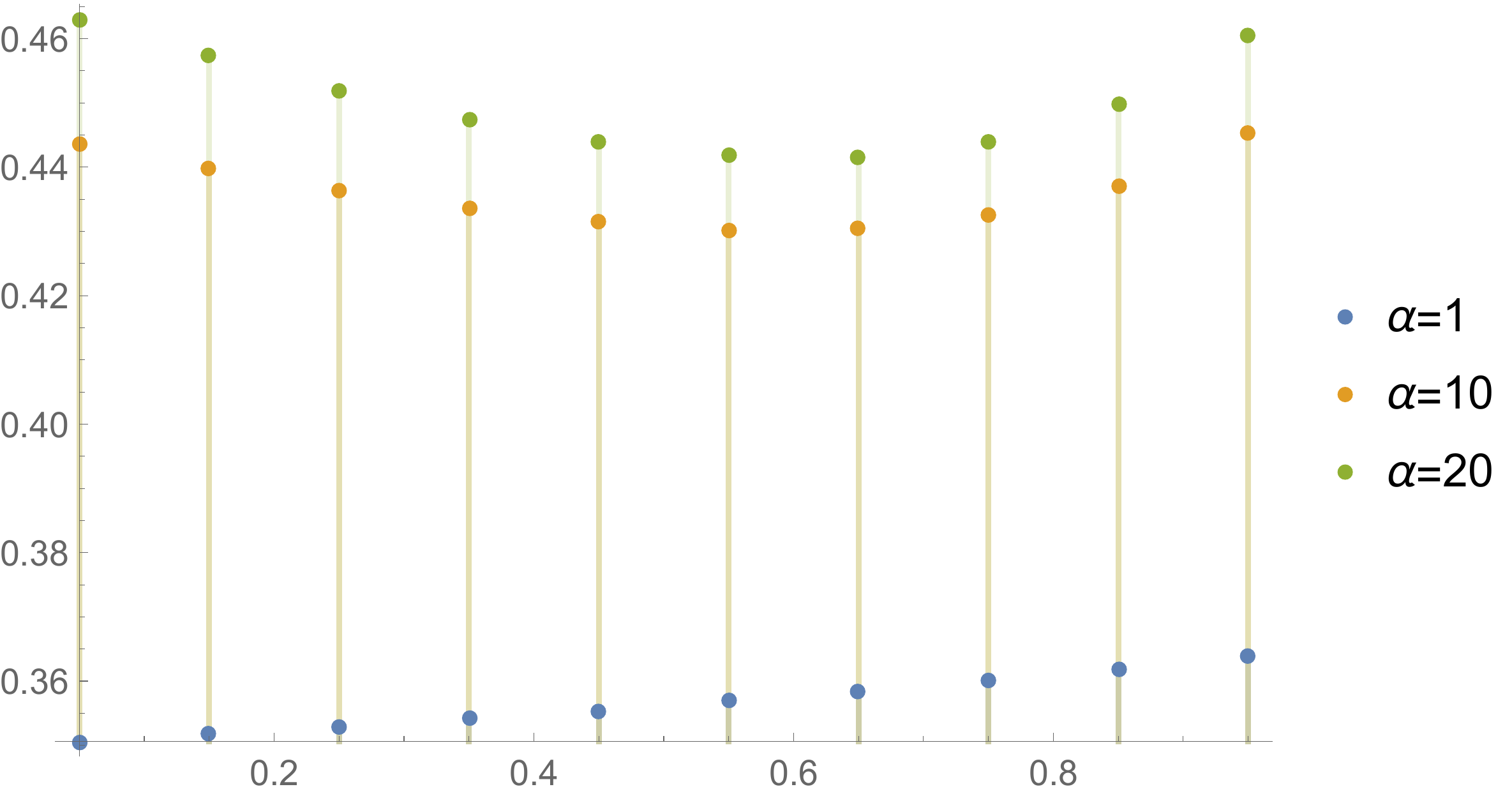} 
\par\end{centering}
\caption{Revenue from the SPA as a function of $\lambda$: for $\alpha\in\{1,10,20\}$}
\label{fig:spa_revenue} 
\end{figure}

We see that for the case of weak correlation ($\alpha=1$), revenue
is increasing in the share of rational type. This suggests that the
distortions in the misspecified type's bidding function adversely
affect revenue. For highly correlated interim types, the pattern changes
and revenue is U-shaped in the share of rational types. The initial
decline is intuitive since the distortions in the $m$-types bid become
larger if the share of rational types increases. Profits rise again
if the share of rational types becomes so large that presence of $m$-types
becomes unlikely.

Figure \ref{fig:spa_eff} shows how efficiency changes depending on
$\lambda$ and $\alpha$. 
\begin{figure}
\noindent \begin{centering}
\includegraphics[width=0.45\textwidth]{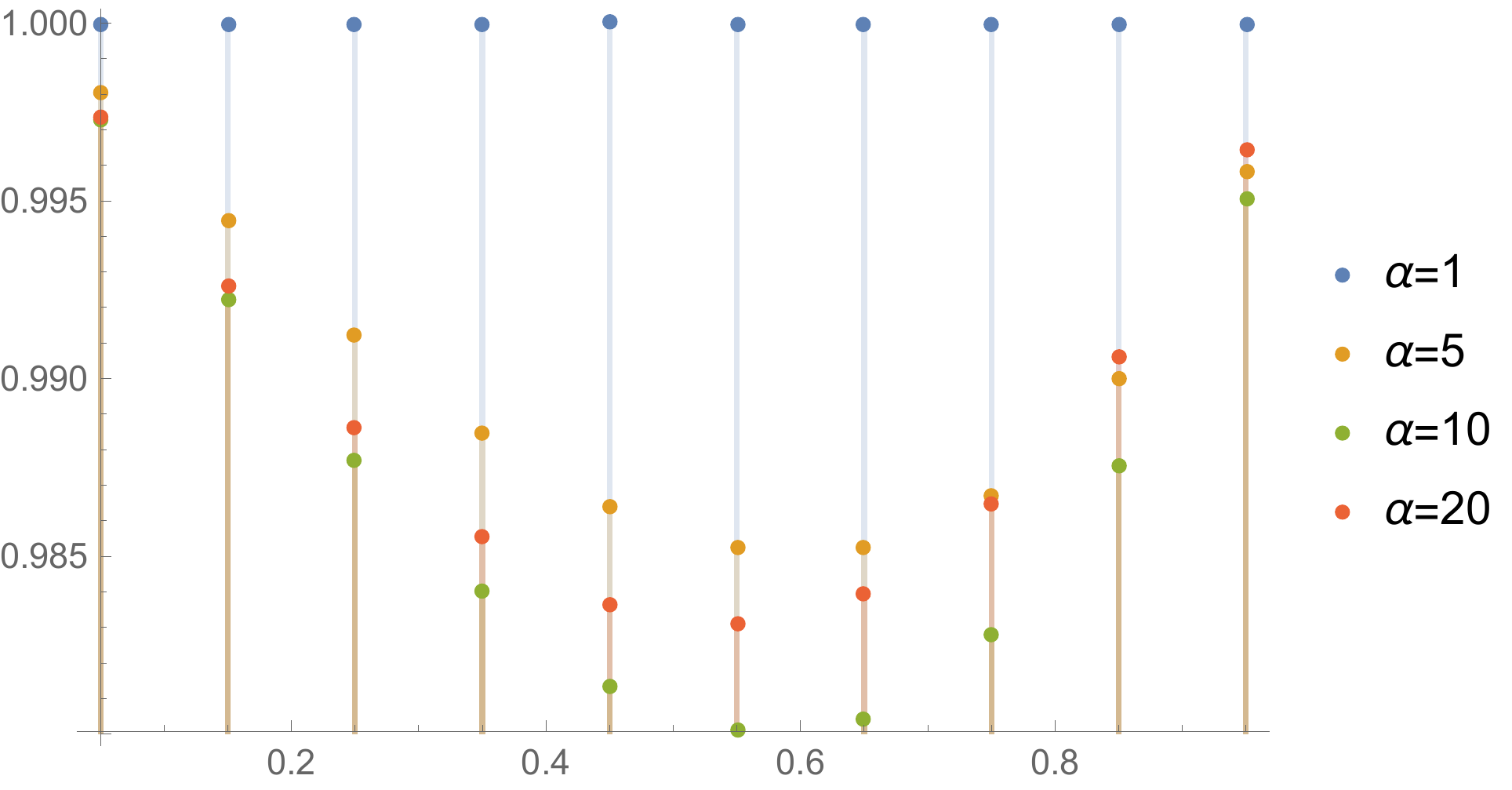} 
\par\end{centering}
\caption{Efficiency of SPA as a function of $\lambda$: for $\alpha\in\{1,5,10,20\}$}
\label{fig:spa_eff} 
\end{figure}

To make this comparable across different parameter sets, we normalize
efficiency by the expected ex-post value achieved if the object is
always allocated to the bidder with the highest interim type. Clearly
when $\lambda=0$\ or $1$, there is no inefficiency given that bidders
of the same sophistication bid in the same way. Moreover, both when
$\alpha=0$\ (the independent case) or $\alpha=\infty$ (perfect
correlation) there is no inefficiency either. In the parametric example,
we observe that the relative efficiency is $U$-shaped as a function
of $\lambda$\ and $\alpha$, as shown in Figure \ref{fig:spa_eff-1}.
\begin{figure}
\noindent \begin{centering}
\includegraphics[width=0.45\textwidth]{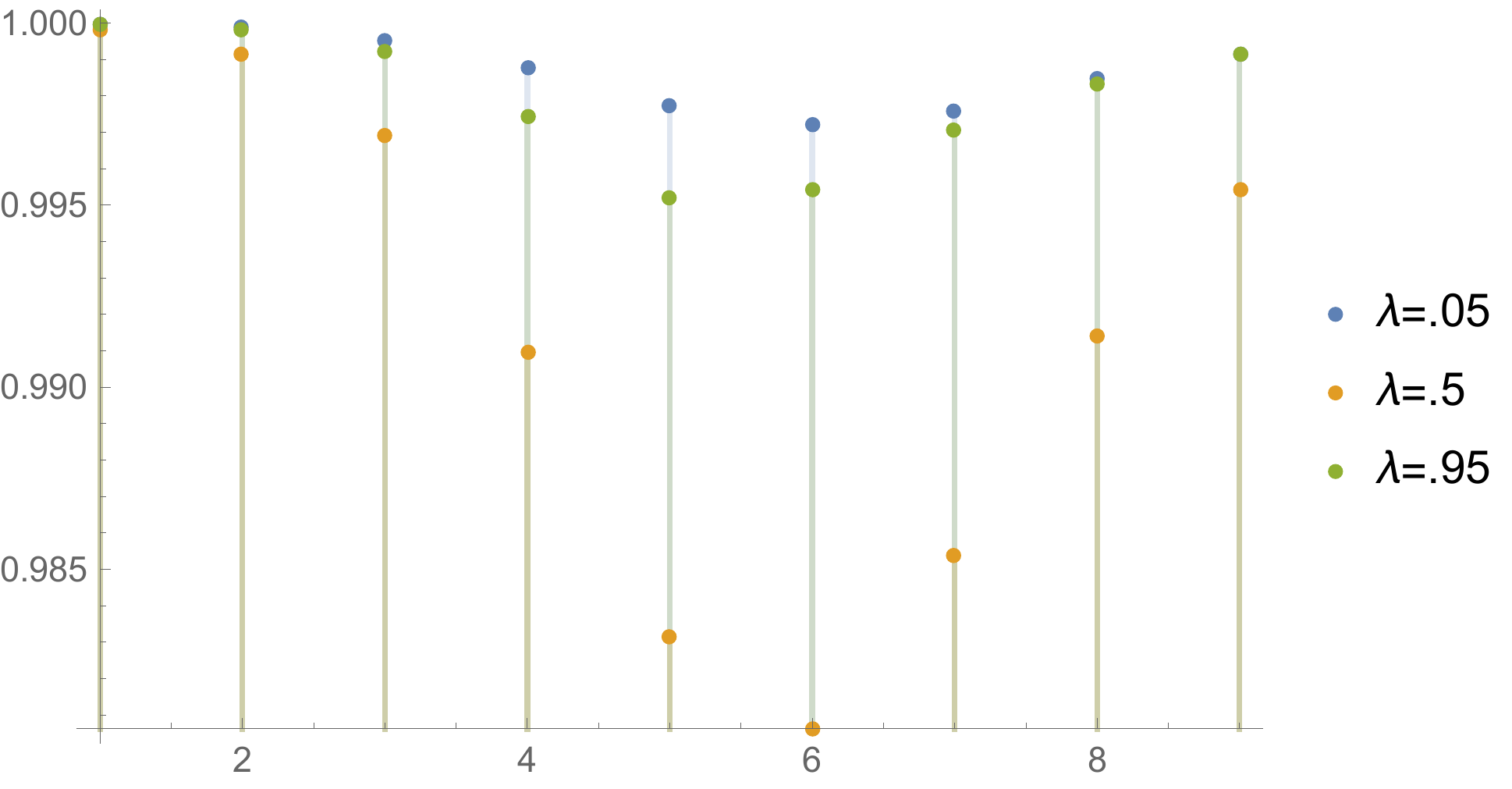} 
\par\end{centering}
\caption{Efficiency of SPA as a function of $\alpha=1/5+5^{k}$ where $k=1,\ldots,9$
is on the horizontal axis. $\lambda\in\{.05,.5,.95\}$. }
\label{fig:spa_eff-1} 
\end{figure}

\subsection{First-price auction}

In a first price auction, we obtain the misspecified type's belief
in similar way as for the second price auction: 
\begin{align*}
H^{\text{FPA}}(b\mid v_{i}=1) & =\int_{0}^{1}\left[\lambda F(\theta^{r}(b)\mid\tilde{\theta}_{i})+(1-\lambda)F(\theta^{m}(b)\mid\tilde{\theta}_{i})\right]\tilde{\theta}_{i}\frac{f(\tilde{\theta}_{i})}{E[\tilde{\theta}_{i}]}d\tilde{\theta}_{i},\\
H^{\text{FPA}}(b\mid v_{i}=0) & =\int_{0}^{1}\left[\lambda F(\theta^{r}(b)\mid\tilde{\theta}_{i})+(1-\lambda)F(\theta^{m}(b)\mid\tilde{\theta}_{i})\right](1-\tilde{\theta}_{i})\frac{f(\tilde{\theta}_{i})}{E[1-\tilde{\theta}_{i}]}d\tilde{\theta}_{i}.
\end{align*}
$b^{r}(\cdot)$ and $b^{m}(\cdot)$ now denote the bidding strategies
of the rational and misspecified types in the symmetric equilibrium
of the FPA, and their inverses are denoted by $\theta^{r}(\cdot)$
and $\theta^{m}(\cdot)$. The misspecified bidder's bid for type $\theta_{i}$
maximizes 
\begin{equation}
\max_{b}\left(1-b\right)\theta_{i}H^{\text{FPA}}(b\mid v_{i}=1)-b(1-\theta_{i})H^{\text{FPA}}(b|v_{i}=0).\label{eq:m-problem_FPA}
\end{equation}
Again we obtain a differential equation for $b^{m}(\theta_{i})$.
In contrast to the second price auction, however, we cannot assume
that rational bidders bid their expected valuations. Instead they
maximize 
\[
\max_{b}\left(\theta_{i}-b\right)\left(\lambda F(\theta^{r}(b)|\theta_{i})+(1-\lambda)F(\theta^{m}(b)|\theta_{i})\right)
\]
This optimization problem reflects the complete awareness of the model
by rational bidders. They use the correct distribution $f$, the share
of rational types in the population, and the equilibrium bidding strategies
of both the rational and the misspecified types when determining their
optimal bids. The first-order condition for the rational type's problem
yields a second differential equation. To compute an equilibrium,
we need to solve the system of two ODEs with the boundary condition
$(b^{m}(0),b^{r}(0))=(0,0)$. This proves challenging even for the
distributions in our example, since the system has a singular point
at the boundary condition. However, we obtain a similar inefficiency
result as we had for the SPA.
\begin{prop}
\label{prop:FPA_is_inefficient}If $\lambda\in(0,1)$ and $\text{Corr}[\theta_{1},\theta_{2}]\neq0$,
then the symmetric equilibrium of the first-price auction is inefficient. 
\end{prop}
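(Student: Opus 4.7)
The plan is to argue by contradiction: assume the symmetric equilibrium of the FPA is efficient and derive $\text{Corr}[\theta_1,\theta_2]=0$.

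The first step is a reduction. Since the FPA awards the object to the highest bidder, efficiency forces all rational and misspecified types to use a common, strictly increasing bidding function $b(\cdot)$; in particular $b^r\equiv b^m\equiv b$ on the support. Otherwise, by continuity and strict monotonicity, one can find a positive-probability event in which a bidder with a strictly higher interim type submits a strictly lower bid, contradicting efficiency.

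The second step writes both first-order conditions under this common-bid assumption, with $\phi(\cdot):=b^{-1}(\cdot)$. The rational FOC delivers the standard FPA ODE
\[
b'(x)\,F(x\mid x)=(x-b(x))\,f(x\mid x),\qquad b(0)=0.
\]
For the misspecified type, starting from the objective $(1-b)\theta_i A(b)-b(1-\theta_i)B(b)$ and using $\tfrac{d}{db}[(1-b)A(b)]=(1-b)A'(b)-A(b)$ and $\tfrac{d}{db}[bB(b)]=B(b)+bB'(b)$, the FOC reduces to $\theta_i\,\tfrac{d}{db}[(1-b)A(b)]+(1-\theta_i)\,\tfrac{d}{db}[bB(b)]=0$ at $b=b(\theta_i)$. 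Changing variables to $x=\theta_i$, with $\tilde A(x):=A(b(x))$, $\tilde B(x):=B(b(x))$, $\tilde A'(x)=f(x)m_1(x)/\mu_1$, and $\tilde B'(x)=f(x)(1-m_1(x))/(1-\mu_1)$ (where $m_1(x):=\mathbb{E}[\theta_j\mid\theta_i=x]$ and $\mu_1:=\mathbb{E}[\theta]$), this rearranges to
\[
x(1-b(x))\tilde A'(x)-(1-x)b(x)\tilde B'(x)=b'(x)\bigl[x\tilde A(x)+(1-x)\tilde B(x)\bigr].
\]

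The third step equates the two expressions for $b'(x)$. Using the identity $m_1(x)(1-\mu_1)-(1-m_1(x))\mu_1=\Delta(x):=m_1(x)-\mu_1$ to simplify, one obtains the master identity
\[
(x-b(x))\,\mu_1(1-\mu_1)\bigl[D(x)f(x,x)-f(x)^2 F(x\mid x)\bigr]=f(x)^2 F(x\mid x)\,\Delta(x)\,E_1(x),
\]
where $D(x)=x\tilde A(x)+(1-x)\tilde B(x)$ and $E_1(x)=x(1-b(x))(1-\mu_1)+(1-x)b(x)\mu_1>0$. This must hold at every $x\in(0,1)$.

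The fourth step extracts the contradiction. The hypothesis $\text{Corr}[\theta_1,\theta_2]\neq 0$ is equivalent to $\mathbb{E}[\theta_i m_1(\theta_i)]\neq \mu_1^2$, so by continuity of $m_1$ there is an open interval on which $\Delta(\cdot)\neq 0$. I would Taylor-expand the master identity near $x=0$: using $b(x)\sim x/2$ from the rational ODE, together with $F(x\mid x)\sim xf(0,0)/f(0)$ and $D(x)\sim xf(0)(1-m_1(0))/(1-\mu_1)$, the leading $O(x^2)$ terms on both sides collapse to a single linear condition forcing $\Delta(0)=0$. Whenever $\Delta(0)\neq 0$—as is the case for the paper's parametric family with $\alpha>0$—this already yields the contradiction. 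The main obstacle is the residual case in which $\Delta(0)=0$ but $\Delta\not\equiv 0$: then one repeats the local expansion at an interior point $x^*$ where $\Delta(x^*)\neq 0$, combining the master identity with the rational ODE at $x^*$ to isolate the inconsistency without having a closed form for $b$. This interior-point localization is the technical heart of the argument.
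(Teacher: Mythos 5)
Your Steps 1--3 track the paper's proof exactly: the reduction to a common strictly increasing bid function, the rational ODE $b'F(x|x)=(x-b)f(x|x)$, and the misspecified first-order condition are all the same, and your ``master identity'' is algebraically equivalent to the paper's display equating the two expressions for $b'(\theta_i)$ (using $x(1-b)-(1-x)b=x-b$ and $m_1(1-\mu_1)-(1-m_1)\mu_1=\Delta$). The divergence, and the gap, is in Step 4. Your boundary expansion at $x=0$ is correct as far as it goes (with $b(x)\sim x/2$ the $O(x^2)$ terms do force $\Delta(0)=0$), but this establishes only a boundary condition that is logically unrelated to the hypothesis: $\operatorname{Corr}[\theta_1,\theta_2]\neq 0$ says $\int x\,\Delta(x)f(x)\,dx\neq 0$, which is perfectly compatible with $\Delta(0)=0$. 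So the ``residual case'' you defer to is in fact the generic case, and the sketched resolution does not work: at an interior $x^*$ there is no small parameter, and the master identity evaluated at $x^*$ is a single scalar relation among $b(x^*)$, $D(x^*)$, $F(x^*|x^*)$, $f(x^*,x^*)$ and $\Delta(x^*)$, none of which you control independently; nothing forces that relation to fail when $\Delta(x^*)\neq0$. No contradiction is ``isolated'' by localizing there, so the argument does not close.

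The paper finishes differently. Both first-order conditions are affine in $b$: the rational one reads $b'=\theta_i R-bR$ with $R=f(\theta_i|\theta_i)/F(\theta_i|\theta_i)$, and the misspecified one reads $b'=\theta_i P-bQ$ with $P=\frac{E[\tilde\theta_i|\theta_i]}{E[\tilde\theta_i]}\,\frac{f(\theta_i)}{D(\theta_i)}$ and $Q=\bigl(\theta_iP+(1-\theta_i)\frac{1-E[\tilde\theta_i|\theta_i]}{1-E[\tilde\theta_i]}\frac{f(\theta_i)}{D(\theta_i)}\bigr)$. The paper matches the coefficient of $\theta_i$ and the coefficient of $b$ separately against $R$, obtaining $P=Q=R$ and hence $\frac{E[\tilde\theta_i|\theta_i]}{E[\tilde\theta_i]}=\frac{1-E[\tilde\theta_i|\theta_i]}{1-E[\tilde\theta_i]}$, i.e.\ $E[\theta_j\,|\,\theta_i]=E[\theta_j]$ for \emph{every} $\theta_i$, which kills the correlation exactly as in the second-price proof. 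If you want to repair your write-up, you should either supply this coefficient-matching step (and justify why the two coefficients must match separately rather than only the combination $\theta_i(P-R)-b(Q-R)=0$), or find a genuinely global argument -- e.g.\ integrating the master identity against a suitable test function -- rather than a pointwise one; the single-point evaluation you propose cannot deliver the conclusion.
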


\subsection{Comparison}

We can compute the bidding equilibrium for both auction formats for
the case of only rational bidders ($\lambda=1$) and only misspecified
bidders $\lambda=0$. Figure \ref{fig:Comparison_SPA_FPA} shows the
bid functions $b_{k}^{s}$ where $k=1,2$ denotes first- or second-price
auctions and $s=m,r$ denotes the misspecified or rational type. 
\begin{figure}
\begin{centering}
\includegraphics[height=0.2\textheight]{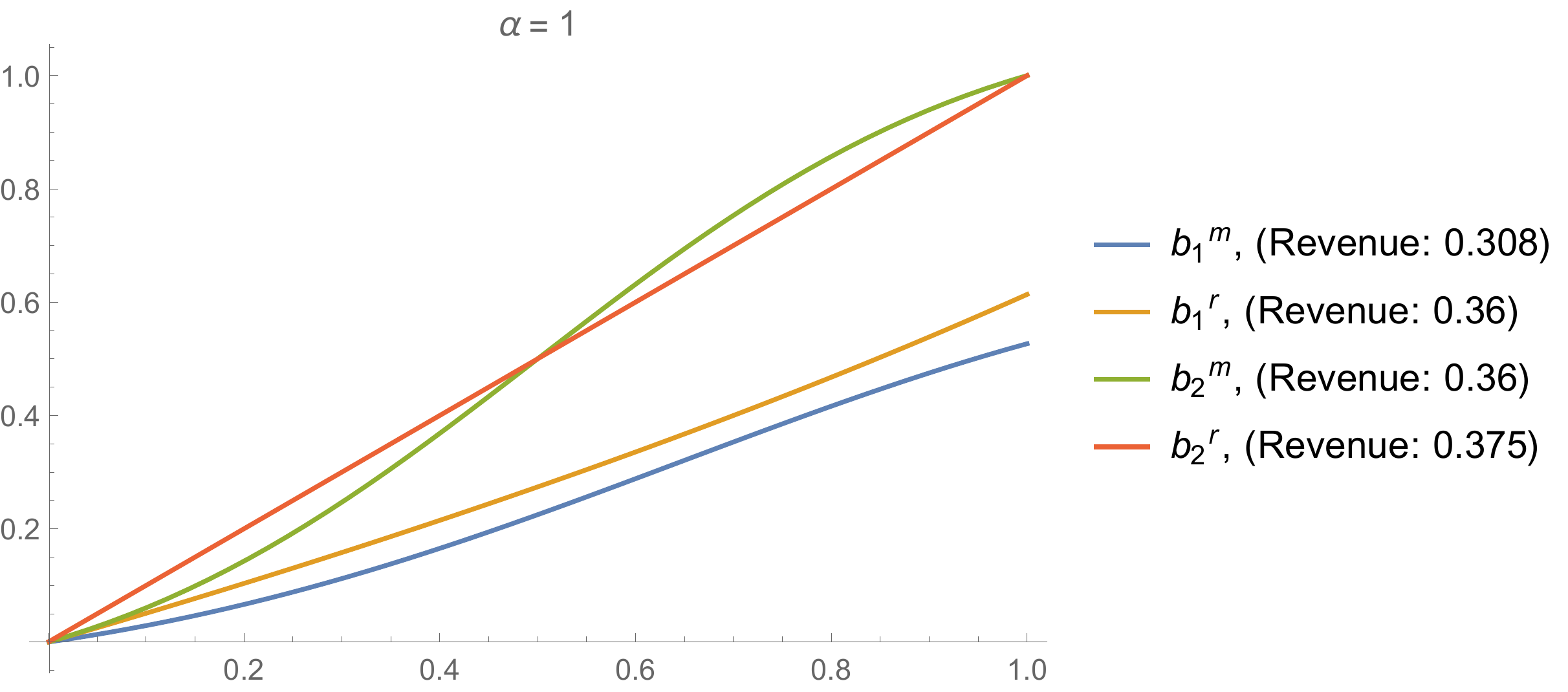} 
\par\end{centering}
\begin{centering}
\includegraphics[height=0.2\textheight]{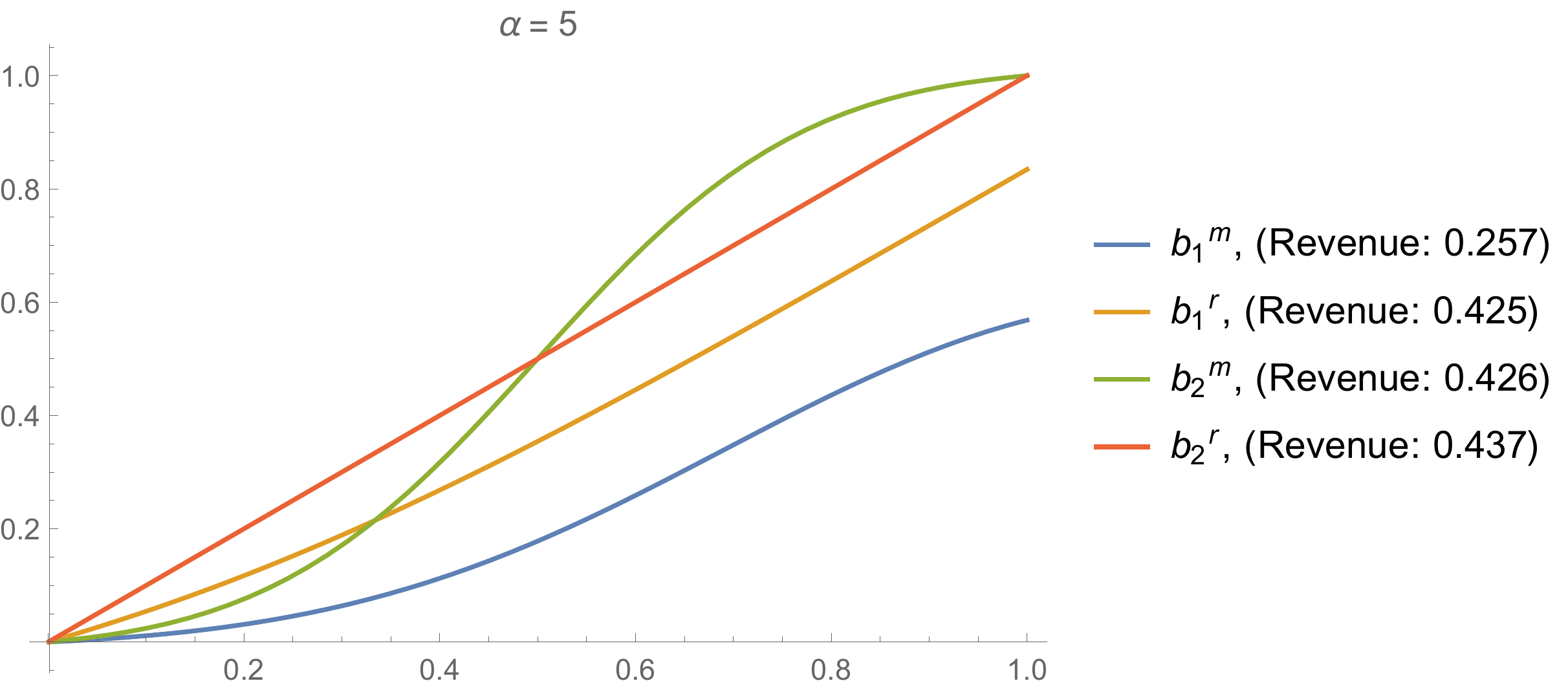} 
\par\end{centering}
\begin{centering}
\includegraphics[height=0.2\textheight]{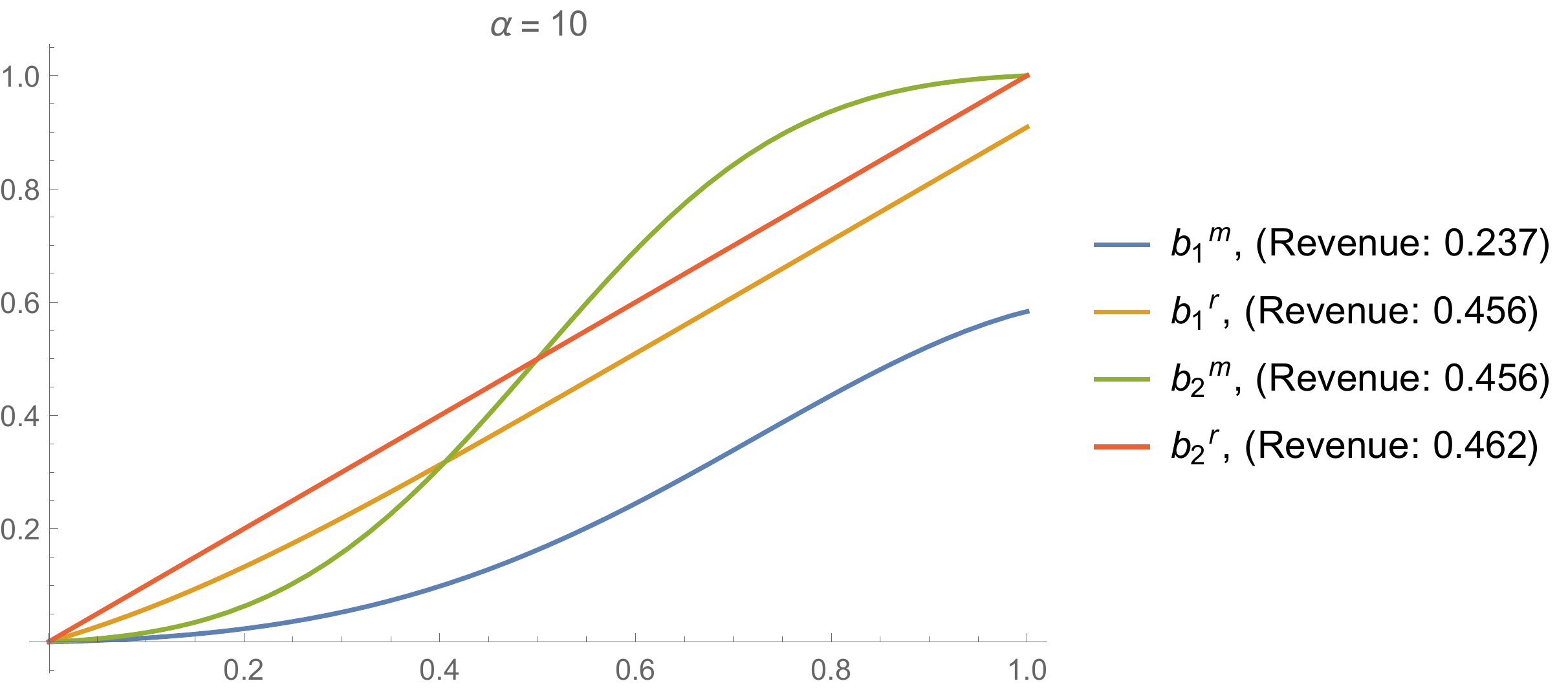} 
\par\end{centering}
\caption{Equilibrium bid functions for $\alpha\in\{1,5,10\}$}
\label{fig:Comparison_SPA_FPA} 
\end{figure}

To illustrate the role of correlation, the functions are shown for
$\alpha\in\{1,5,10\}$. Comparing FPA and SPA in the rational case,
we see the familiar revenue ranking that the SPA yields higher revenue
than the FPA with correlated types. This revenue ranking is preserved
in the case of misspecified bidders. Interestingly, with misspecified
bidders, the gap between SPA and FPA becomes more pronounced if values
are more correlated. This conforms well with the intuition for the
distortions in the bid function: In the SPA low types underbid and
high types overbid. In the FPA, the same forces lead the low types
to underbid. But this allows the higher types to shade their bids
more and the incentive to overbid does not compensate for this force.
This leads to much lower bids for misspecified types compared to the
rational equilibrium if the correlation is high.

Finally, we want to compare the efficiency of the SPA and FPA. This
comparison is not interesting in the purely rational or purely misspecified
cases since the symmetric equilibrium implies that both auction formats
are fully efficient. A comparison in the mixed case is challenging
because we are not able to compute the equilibrium in the FPA. To
make progress we consider the best response of a misspecified type
to the purely rational equilibrium. This allows us to show how efficiency
changes if we inject a small share of misspecified types in a rational
population. Figure \ref{fig:best_response_to_rational_eqm} shows
the resulting bidding strategies for $\alpha=1.5$.
\begin{figure}
\begin{centering}
\includegraphics[height=0.2\textheight]{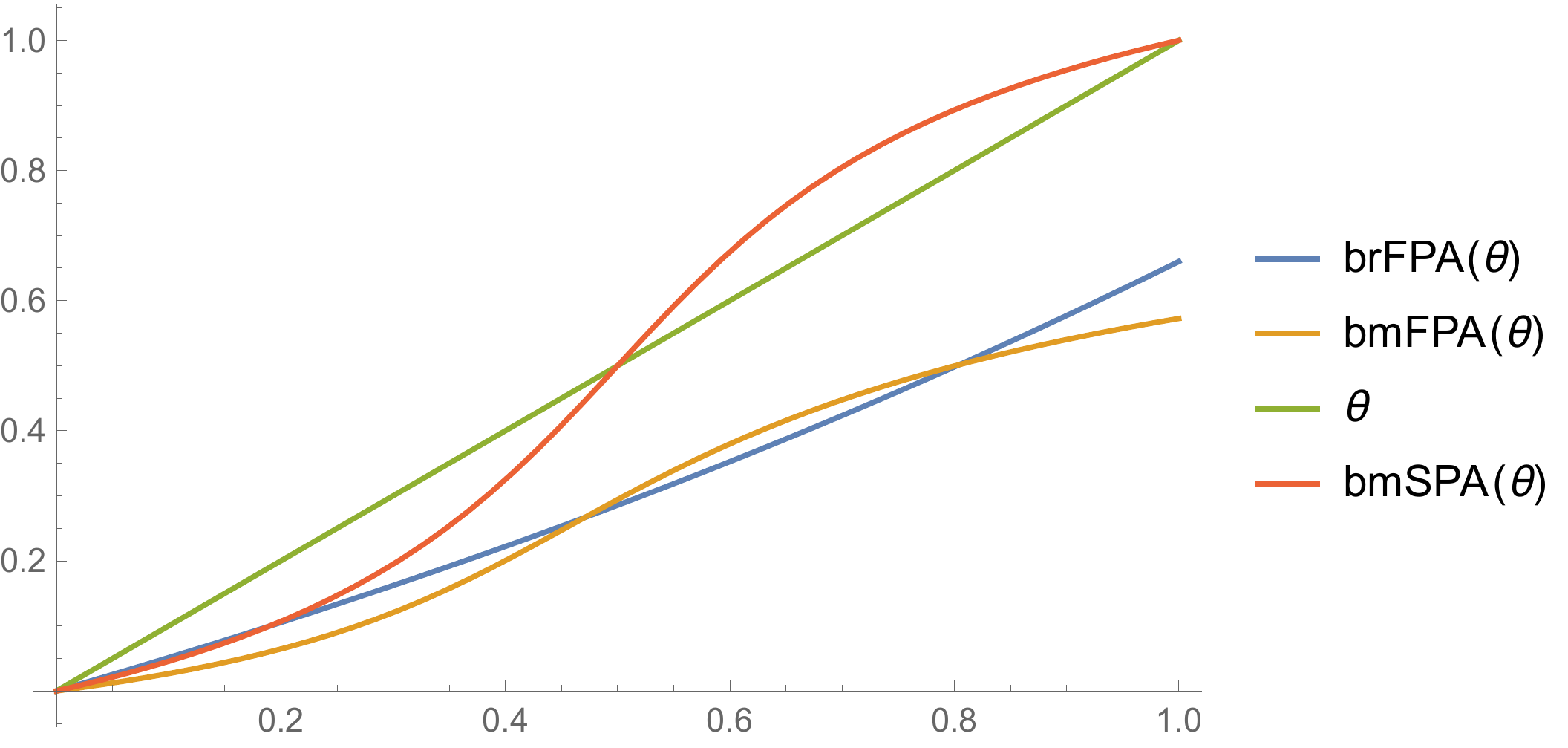} 
\par\end{centering}
\caption{For the FPA ($k=1$), and the SPA ($k=2$), $b_{k}^{r}(\theta_{i})$
is the rational strategy if $\lambda=1$; $b_{k}^{m}(\theta_{i})$
is the best response of misspecified type to data generated by the
purely rational equilibrium. ($\alpha=1.5$).}
\label{fig:best_response_to_rational_eqm} 
\end{figure}

To compare the efficiency we numerically compute how much efficiency
is lost in expectation if bidder one uses the purely rational strategy
and bidder two uses the misspecified response. This number gives the
rate at which efficiency decreases if we decrease $\lambda$ from
$\lambda=1$. In the example depicted in Figure \ref{fig:best_response_to_rational_eqm}
we have a marginal loss of $.0035$ for the FPA and $.0088$ for the
SPA. This means that the SPA is less efficient than the FPA.

\section{Auction-like Mechanisms\label{sec:Auction-like-Mechanisms}}

We now consider the possibility of implementing an efficient allocation
in the presence of both rational and misspecified buyers. We consider
a class of auction-like mechanisms, in which bidders can place a one-dimensional
bid $b\in B\subset\mathbb{R}$, and which allocate to the highest
bid (possibly adjusted by a bonus or malus). We assume that bidders
may choose not to participate in a mechanism in which case their utility
is zero.
\begin{defn}
An \emph{auction-like mechanism} is given by $M=\left[B,\left(W_{i}\right)_{i=1,2},\left(L_{i}\right)_{i=1,2},\phi_{1}\right]$.
$B=[\underline{b},\overline{b}]$ is the set of \emph{feasible bids}.
The \emph{allocation rule} $\phi_{1}:B\rightarrow B$ is a strictly
increasing function. The object is allocated to bidder 1 if $b_{1}>\phi_{1}(b_{2})$,
to bidder two if $b_{1}<\phi_{1}(b_{2})$, and with probability $1/2$
if $b_{1}=\phi_{1}(b_{2})$. We denote the inverse by $\phi_{2}=\phi_{1}^{-1}$.
The \emph{payment rules }are $W_{i}:B\times B\rightarrow\mathbb{R}_{0}^{+}$,
and $L_{i}:B\times B\rightarrow\mathbb{R}_{0}^{+}$, which specify
the payment bidder $i$ has to make as a function of the bids, if
she wins or loses, respectively. We assume that for each $i\in\{1,2\}$,
both functions $W_{i},L_{i}$ are weakly increasing in bidder $i$'s
own bid.\\
An auction-like mechanism is \emph{smooth} if for $i\in\{1,2\}$,
$\phi_{i}$, $W_{i}$, and $L_{i}$, are continuously differentiable
with derivatives that can be continuously extended to the boundary
of $B$. 
\end{defn}
The smoothness assumption is made for tractability. Many common auction
formats are smooth auction-like mechanisms. Our main result is that
if there are at least three possible ex-post valuations, then for
generic type distributions, no smooth auction-like mechanism exists
that has an efficient equilibrium.

To make this precise, we reformulate the types of agents. We denote
the \emph{interim valuation} of bidder $i$ with type $\theta_{i}$
by 
\[
w_{i}(\theta_{i}):=\mathbb{E}\left[v_{i}\middle|\theta_{i}\right].
\]
Given the normalization $0=v^{1}<\ldots<v^{K}=1$, we have $w_{i}(\theta_{i})\in[0,1]$.
For each $w_{i}\in[0,1]$, we denote the set of types $\theta_{i}$
that have interim valuation $w_{i}$ by 
\[
\Theta_{i}(w_{i}):=\left\{ \theta_{i}\in\Theta_{i}\middle|\mathbb{E}\left[v_{i}\middle|\theta_{i}\right]=w_{i}\right\} .
\]
For $w_{i}\in\{0,1\}$ this set is a singleton; and for all $w_{i}\in(0,1)$,
there exists a homeomorphism $x_{i}(\cdot;w_{i}):\Theta_{i}(w_{i})\rightarrow[0,1]^{K-2}$,
where $K=|V|$ is the number of ex-post valuations. We can therefore
write the type of bidder $i$ as $(w_{i},x_{i})\in[0,1]^{K-1}$. While
$w_{i}$ is the payoff-relevant part of the type, for $w_{i}\in(0,1),$
$x_{i}$ can be used to recover the belief $f(\theta_{j}|x_{i}^{-1}(x_{i};w_{i}))$
about bidder $j$'s type. Abusing notation we use $f(w_{1},x_{1},w_{2},x_{2})$
to denote the joint density of the buyers' types and assume that this
density is strictly positive.

Our main result is that for generic distributions, smooth auction-line
mechanisms do not have efficient equilibria. To state this formally,
we let $\mathcal{M}_{+}^{d}([0,1]^{2K-2})$ be the set of probability
measures on $[0,1]^{2K-2}$ that admit continuous and strictly positive
densities $f(w_{1},x_{1},w_{2},x_{2})$. We endow $\mathcal{M}_{+}^{d}([0,1]^{2K-2})$
with the uniform topology for densities. For given $V$ and $\lambda$,
let $\mathcal{I}(V,\lambda)\subset\mathcal{M}_{+}^{d}([0,1]^{2K-2})$
be the set of prior distributions for which all equilibria of any
smooth auction-like mechanism are inefficient.
\begin{thm}
\label{thm:Main_Result}Suppose $K=|V|\ge3$ and $\lambda\in(0,1)$.
Then for generic type distributions, there exists no smooth auction-like
mechanism with an efficient equilibrium. Formally, $\mathcal{I}(V,\lambda)$
is a residual subset of $\mathcal{M}_{+}^{d}([0,1]^{2K-2})$, that
is, it contains a countable intersection of open and dense subsets
of $\mathcal{M}_{+}^{d}([0,1]^{2K-2})$. 
\end{thm}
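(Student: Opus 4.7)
The plan is to argue in three stages: (1) an efficiency-based pooling constraint that all types sharing the same interim value must submit the same bid, (2) a Cremer--McLean / Gizatulina-style genericity argument showing that this pooling forces the rational bidders' problem to be strategically equivalent to a second-price auction, and (3) an appeal to the $K$-valuation extension of Proposition~\ref{prop:SPA_is_inefficient} (Lemma~\ref{lem:m-type_dont_bid_truthfully}) to conclude that misspecified types cannot also bid $w_i$ in the SPA, contradicting efficiency.

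For Stage~1, fix a smooth auction-like mechanism and suppose it admits an efficient equilibrium $(b_i^{r},b_i^{m})$. Efficiency requires the object to go to the bidder with the higher $w_i$. Because $\phi_1$ is strictly increasing, the winning probability against any fixed opponent bid is strictly monotone in own bid. So for each $i$ and each $w_i\in[0,1]$, all types $\theta_i\in\Theta_i(w_i)$ --- rational or misspecified --- must submit a common bid $\beta(w_i)$, with $\beta$ strictly increasing: otherwise two types sharing $w_i$ would win with different probabilities against the same opponent bid. In particular $b_i^{r}(\theta_i)=b_i^{m}(\theta_i)=\beta(w_i(\theta_i))$.

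For Stage~2, since $K\ge 3$, for each $w_i\in(0,1)$ the manifold $\Theta_i(w_i)$ is $(K-2)$-dimensional with coordinate $x_i\in[0,1]^{K-2}$, and the interim belief $f(\cdot\mid\theta_i)$ varies nontrivially with $x_i$. The rational bidder's first-order condition at the pooled bid $\beta(w_i)$, obtained by differentiating $U_i^{r}(\beta(w_i),\theta_i\mid b_j(\cdot))$ and integrating over the pivotal event $b_j=\phi_1(\beta(w_i))$, yields a single identity in $(\phi_i,W_i,L_i,\beta,f)$ that must hold for every $x_i$. With a one-dimensional bid but a $(K-2)$-dimensional family of beliefs, this is overdetermined. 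Adapting the genericity technology of Gizatulina (2017), the set of densities $f\in\mathcal{M}_{+}^{d}([0,1]^{2K-2})$ for which this overdetermined system admits a non-SPA efficient solution is contained in a countable union of closed nowhere-dense subsets in the uniform topology; on the residual complement, the only efficient mechanism is strategically equivalent to the SPA for the rational types, which forces $\beta(w_i)=w_i$ after a monotone relabeling of bids.

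For Stage~3, invoke Lemma~\ref{lem:m-type_dont_bid_truthfully}: for $\lambda\in(0,1)$ and a residual set of $f$, the misspecified type's best response in the SPA differs from her interim value $w_i$ for some $\theta_i$, extending the binary logic of Proposition~\ref{prop:SPA_is_inefficient} to $K\ge 3$. Combined with Stage~2 this yields $b_i^{m}(\theta_i)\neq\beta(w_i)=w_i$, contradicting Stage~1; the intersection of the two residual sets is itself residual, which delivers $\mathcal{I}(V,\lambda)$. The main obstacle is Stage~2: one must write the pooling FOC as an operator equation linear in the derivatives of $(W_i,L_i,\phi_i)$ at $\beta(w_i)$ and parametrized by $f$, then show that the rank-drop condition permitting a non-SPA solution is preserved under uniform perturbations (openness) yet destroyed by arbitrarily small perturbations (density), assembling a countable intersection of open dense subsets. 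This parametric rank argument is exactly what the Gizatulina (2017) machinery is designed to handle; the work is in adapting it from the abstract direct-mechanism setting to smooth one-dimensional-bid auction-like mechanisms.
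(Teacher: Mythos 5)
Your proposal follows essentially the same route as the paper's proof: reduce to a symmetric, strictly increasing pooled bid as a function of $w_i$ alone, use the rational type's first-order condition on the $(K-2)$-dimensional manifold $\Theta_i(w_i)$ together with the genericity machinery of Gizatulina--Hellwig (recast in the paper as a full-surplus-extraction-type mixture condition on conditional densities) to force the mechanism to be a second-price auction, and then invoke Lemma~\ref{lem:m-type_dont_bid_truthfully} to rule out truthful bidding by misspecified types. The only substantive points you gloss are handled by the paper as separate lemmas --- the differentiability of the inverse bid function needed to take the FOC (Lemma~\ref{lem:psi'>0}) and the removal of bid-function discontinuities while preserving smoothness of the payment rules (Lemma~\ref{lem:symmetric_mechanisms}) --- and your Stage~1 justification should rest on opponent types with interim values strictly between the two pooled bids rather than on "different winning probabilities," which is not by itself an efficiency violation.
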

The notion of genericity used here is the same as in \citet{Gizatulina2017},
who show the genericity of full surplus extraction. The key step in
the proof is to show that in the presence of rational bidders, efficiency
requires that the mechanism is a second price auction. The reason
is that to achieve efficiency, the bid in an auction-like mechanism
must be a function of $w_{i}$ only. If there are more than two ex-post
valuations, for each $w_{i}\in(0,1)$, the set $\Theta(w_{i})$ is
a manifold of dimension $K-2\geq1$, and all types in $\Theta(w_{i})$
have identical interim expected valuations but different beliefs.
We show that for generic distributions, the requirement that the bid
is independent of the rational type's belief, implies that the mechanism
must be a second-price auction. We then finish the proof by extending
the result of Proposition \ref{prop:SPA_is_inefficient} to more than
two ex-post valuations (see Lemma \ref{lem:m-type_dont_bid_truthfully}
below), showing that in a second-price auction the misspecified type
does not bid truthfully, which rules out an efficient equilibrium.
\begin{rem}[Two ex-post valuations]
With only two ex-post valuations ($K=2$), our proof does not apply.
While the analysis of standard auctions in Section \ref{sec:Standard-Auctions}
suggests that bid functions of rational and misspecified types in
auction-like mechanisms differ, it is an open question whether auction-like
mechanism offer enough flexibility in choosing the payment rules so
that types of both sophistication can be incentivised to use an identical
bid function when $K=2$. 
\end{rem}
\begin{rem}[More than two bidders]
The restriction to two bidders has been made for simplicity. With
more than two bidders, we can consider misspecified types who have
access to data from past auctions with observations of the form $(b_{1},v_{1},\ldots,b_{N},v_{N})$,
where $N$ is the number of bidders. Such bidders will now rely on
$h(b_{-i}|v_{i})$, the pdf of $b_{-i}=(b_{j})_{j\neq i}$ conditional
on $v_{i}$, to form their beliefs about how variables of interest
are distributed. We can define auction-like mechanisms that award
the object to the highest bidder and specify payments as a function
of all bids. We conjecture that the key argument in our proof---namely
that efficiency requires the use of a second-price auction also works
with more than two bidders, as long as there are at least two ex-post
valuations. Moreover, an analogous result to Proposition \ref{prop:SPA_is_inefficient}
and Lemma \ref{lem:m-type_dont_bid_truthfully} implies that misspecified
types do not use the rational bid function in any equilibrium of the
second-price auction. 
\end{rem}

\subsection{Proof of Theorem \protect\ref{thm:Main_Result}}

\paragraph*{Regular Equilibria of Simple Mechanisms}

First, we show that it suffices to consider \emph{regular equilibria}
of \emph{simple mechanisms}. We call a smooth auction-like mechanism
\emph{simple} if it is of the form $M=\left[[0,1],\left(W_{i}\right),\left(L_{i}\right),Id\right]$,
where $\phi=Id$ denotes the identity so that the allocation rule
is symmetric. We call an equilibrium \emph{regular} if it is symmetric
and the bid of each generalized type $(w_{i},x_{i},s_{i})$ is given
by a continuous and strictly increasing function $b(w_{i})$ with
range $b([0,1])=[0,1]$. In other words, the bid only depends on the
interim valuations, but not on the identity, sophistication, or belief
$x_{i}$, of the bidder. Note that a regular equilibrium of a simple
mechanism is efficient. We denote the strictly increasing and continuous
inverse of $b(\cdot)$ by $\psi:[0,1]\rightarrow[0,1]$.
\begin{lem}
\label{lem:symmetric_mechanisms}Let $\tilde{M}=[\tilde{B},(\tilde{W}_{i}),(\tilde{L}_{i}),\tilde{\phi}_{1}]$
be a smooth auction-like mechanism with an efficient equilibrium $(\tilde{b}_{1}(w_{1},x_{1},s_{1})$,$\tilde{b}_{2}(w_{2},x_{2},s_{2}))$.
Then there exists a simple mechanism $M=\left[[0,1],\left(W_{i}\right),\left(L_{i}\right),Id\right]$,
with a regular (and hence efficient) equilibrium. 
\end{lem}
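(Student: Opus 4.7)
The plan is to read off the equilibrium bid functions from the given efficient equilibrium of $\tilde M$ and use them to rescale the bid space to $[0,1]$, turning the ``identity'' into the equilibrium bid function of the new mechanism. First I would argue that efficiency pins down the structure of $\tilde b_i$ sharply: because the allocation is efficient almost surely and opponent bids have full support on an interval (by smoothness of $\tilde M$ and positivity of $f$), any two generalized types $(w_i, x_i, s_i)$ and $(w_i, x_i', s_i')$ sharing the same interim valuation $w_i$ must submit the same bid, otherwise they would face different winning probabilities against some opponent types and efficiency would fail. Hence $\tilde b_i(w_i, x_i, s_i) = \hat b_i(w_i)$ for some function $\hat b_i$, and the same reasoning applied across different values of $w_i$ yields strict monotonicity. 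Efficiency across bidders then forces the compatibility $\hat b_1(w) = \tilde \phi_1(\hat b_2(w))$ on $[0,1]$.

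Next I would construct the simple mechanism by setting the bid space to $[0,1]$, the allocation rule to $Id$, and
\[
W_i(b_1,b_2) = \tilde W_i(\hat b_1(b_1), \hat b_2(b_2)), \qquad L_i(b_1,b_2) = \tilde L_i(\hat b_1(b_1), \hat b_2(b_2)),
\]
so that bids in $M$ correspond to $w$-values translated into the native bids of $\tilde M$ through $\hat b_i$. The candidate regular equilibrium is $b(w) = w$, which is efficient because $\hat b_1 = \tilde \phi_1 \circ \hat b_2$ means $b_1 > b_2$ iff $\hat b_1(b_1) > \tilde \phi_1(\hat b_2(b_2))$. To verify the equilibrium I would apply the change of variables $\tilde b_j = \hat b_j(b_j)$ to show that both rational and misspecified payoffs in $M$ coincide with the corresponding payoffs in $\tilde M$ at the relabeled bid. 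For the misspecified type this uses that the induced belief $H^M_i(\cdot|v^k)$ in $M$, i.e.\ the conditional law of $w_j$ given $v_i = v^k$, is the pushforward of $H^{\tilde M}_i(\cdot|v^k)$ under $\hat b_j^{-1}$, so the integrals defining $U^m_i$ in the two mechanisms agree under the bijection $b_i \leftrightarrow \hat b_i(b_i)$. Since $\tilde b_i = \hat b_i(w_i)$ is optimal for every type in $\tilde M$, the bid $b_i = w_i$ is optimal in $M$ for both sophistication levels.

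The main technical hurdle will be the first step: pinning down that $\tilde b_i$ really depends only on $w_i$, and that the resulting $\hat b_i$ is strictly increasing, continuous, and continuously differentiable up to the boundary so that $M$ inherits the smoothness required in the definition of a smooth auction-like mechanism. Strict monotonicity and continuity follow from efficiency combined with full-support arguments; the $C^1$ regularity should be obtained by combining the first-order conditions for rational types in $\tilde M$ with smoothness of $\tilde W_i$, $\tilde L_i$, and $\tilde \phi_i$. A secondary point is that the efficient equilibrium may concentrate bids on a subinterval of $\tilde B$; restricting $\tilde M$ to this subinterval and then rescaling gives the bid space $[0,1]$ for $M$ without loss.
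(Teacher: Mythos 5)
Your overall strategy---collapse the equilibrium to a single bid function of $w_i$, then change variables so that the identity becomes the equilibrium bid function of a rescaled mechanism---is the same as the paper's. But there is a genuine gap at exactly the step you flag as the ``main technical hurdle,'' and the reasoning you offer for it is circular. You conclude that two generalized types with the same $w_i$ must submit the same bid because ``opponent bids have full support on an interval,'' but the support of the opponent's equilibrium bid distribution is determined by the opponent's equilibrium bid function, which may itself have jump discontinuities and hence gaps in its range; smoothness of $\tilde{M}$ and positivity of $f$ say nothing about continuity of equilibrium strategies. Efficiency only pins the bid down up to the gaps of the opponent's bid distribution. The paper's Step 1 deals with this by showing that a type with interim value $w_i$ is \emph{indifferent} over the entire interval of bids placed by types with that value (the tie $w_j=w_i$ has probability zero) and then selecting the infimum as a single-valued strategy, which preserves the equilibrium.

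The more serious consequence is downstream: once $\hat{b}_i(\cdot)$ is allowed to have jumps, your definition $W_i(b_1,b_2)=\tilde{W}_i(\hat{b}_1(b_1),\hat{b}_2(b_2))$ yields payment rules that are in general not continuously differentiable (they inherit the discontinuities of $\hat{b}_i$), so $M$ fails to be a smooth simple mechanism and the lemma's conclusion is not reached. Appealing to rational types' first-order conditions cannot establish continuity of $\hat{b}_i$, because efficient equilibria with jumps genuinely exist a priori. The paper's Step 3 is devoted to this: it decomposes $\check{b}=\check{b}^{C}+\check{b}^{J}$, collapses each gap $[\check{b}(\hat{w}),\check{b}(\hat{w}_{+})]$, and then uses the no-deviation condition (no type gains by bidding into the gap) together with the assumed monotonicity of $W_i,L_i$ in the own bid to show that $\partial \check{W}_i/\partial b_i=\partial \check{L}_i/\partial b_i=0$ throughout the gap, which is precisely what guarantees that eliminating the gap preserves continuous differentiability of the payment rules. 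Your proposal is missing this argument, and without it the construction does not deliver a smooth simple mechanism with a regular equilibrium.
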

\begin{proof}
The proofs of all Lemmas can be found in the Appendix. 
\end{proof}
In light of Lemma \ref{lem:symmetric_mechanisms}, it suffices to
consider regular equilibria of simple mechanisms. The intuition behind
this result is that in an efficient mechanism with a symmetric allocation
rule,\footnote{Clearly, an mechanism with an asymmetric allocation rule can be made
symmetric by a simple monotonic transformation.} all bidders must use the same bids as function of their interim valuation.
The proof shows that mechanisms for which the bidding function has
discontinuities, these jumps can be removed in a way that preserves
the smoothness of the payment rules. Lemma \ref{lem:symmetric_mechanisms}
falls short of the revelation principle because the full revelation
argument may not preserve the smoothness of the payment rules if the
equilibrium of the original mechanism is non-smooth.

\paragraph*{Second-Price Auctions}

Next we derive a condition on the payment rules and equilibrium bid
function that characterizes regular equilibria of the second-price
auction. We denote the equilibrium difference in utility between winning
and losing of a bidder with bid $b=b(w_{i})$, whose bid is tied with
the opponent by 
\[
\delta_{i}(b)=\psi(b)-\left(W_{i}(b,b)-L_{i}(b,b)\right).
\]

In a regular equilibrium of the SPA, the rational type bids truthfully
($b(w)=w)$, and the payment rules satisfy $W_{i}(b,b)=b$ and $L_{i}\equiv0$,
so that $\delta_{i}(b)=0$ for all $b\in[0,1]$. The following Lemma
shows the converse result.
\begin{lem}
\label{lem:implications_delta=00003D00003D0}Consider a simple mechanism
$M=\left[[0,1],\left(W_{i}\right),\left(L_{i}\right),Id\right]$ with
a regular equilibrium. If $\delta_{i}(b)=0$, for $i\in\{1,2\}$ and
all $b\in[0,1]$, then $M$ is a second-price auction---that is,
for all $i\in\{1,2\}$, $L_{i}(b_{i},b_{j})=0$ for all $b_{i}\le b_{j}$
and $W_{i}(b_{i},b_{j}(w_{j}))=w_{j}$ whenever $b_{i}\ge b_{j}(w_{j})$. 
\end{lem}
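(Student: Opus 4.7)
The approach is to extract the structure of the payment rules from the rational bidder's first-order condition alone, using the monotonicity of $W_i,L_i$ in the own bid, and then to pin down the remaining free constant via individual rationality of the lowest type. In a regular equilibrium, both sophistication types of both bidders bid according to the common strictly increasing function $b(w)$ with inverse $\psi$, and the conditional density $h(\cdot\mid w_i,x_i)$ of the opponent's bid $b_j=b(w_j)$ is continuous and strictly positive on $[0,1]$, because $f$ is continuous and strictly positive and $b$ is a homeomorphism.

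First I would write the expected payoff of a rational type $(w_i,x_i)$ deviating to bid $b$ as
\[
U^{r}(b;w_i,x_i)=\int_{0}^{b}[w_i-W_i(b,b_j)]\,h(b_j\mid w_i,x_i)\,db_j-\int_{b}^{1}L_i(b,b_j)\,h(b_j\mid w_i,x_i)\,db_j,
\]
differentiate by Leibniz, and evaluate at the equilibrium bid $b=b(w_i)$. The boundary terms from the moving endpoint combine into $[w_i-W_i(b,b)+L_i(b,b)]\,h(b\mid w_i,x_i)=\delta_i(b)\,h(b\mid w_i,x_i)$, so for interior $w_i\in(0,1)$ the first-order condition reads
\[
\delta_i(b)\,h(b\mid w_i,x_i)=\int_{0}^{b}\partial_1 W_i(b,b_j)\,h(b_j\mid w_i,x_i)\,db_j+\int_{b}^{1}\partial_1 L_i(b,b_j)\,h(b_j\mid w_i,x_i)\,db_j.
\]
Under $\delta_i\equiv 0$ the left-hand side vanishes, while on the right-hand side both integrands are non-negative (monotonicity) and are weighted by a strictly positive density. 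Hence each integrand must vanish almost everywhere, and by continuity pointwise: $\partial_1 W_i(b_i,b_j)=0$ whenever $b_i\ge b_j$, and $\partial_1 L_i(b_i,b_j)=0$ whenever $b_i\le b_j$. Therefore $W_i$ depends only on $b_j$ on the winning region and $L_i$ only on $b_j$ on the losing region; writing $\hat W(b):=W_i(b,b)$ and $\hat L(b):=L_i(b,b)$, the identity $\delta_i(b)=0$ becomes $\hat W(b)-\hat L(b)=\psi(b)$.

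To separate $\hat W$ and $\hat L$, I would invoke individual rationality for the lowest rational type $w_i=0$: she bids $b(0)=0$, loses with probability one, and earns $-\int_{0}^{1}\hat L(b_j)\,h(b_j\mid 0,x_i)\,db_j$. Because she may opt out for payoff zero, this must be non-negative; combined with $\hat L\ge 0$ and $h>0$, it forces $\hat L\equiv 0$. Consequently $L_i(b_i,b_j)=0$ whenever $b_i\le b_j$, and $\hat W(b)=\psi(b)$, so $W_i(b_i,b_j(w_j))=\psi(b_j(w_j))=w_j$ whenever $b_i\ge b_j(w_j)$---that is, $M$ is a second-price auction. The main delicacy is the step that splits the single FOC identity into two pointwise vanishing conditions on $\partial_1 W_i$ and $\partial_1 L_i$: it rests on the non-negativity of both integrands (from the monotonicity of $W_i,L_i$ in the own bid) and on the strict positivity of $h$; without either ingredient one would only obtain a combined condition from which the second-price structure could not be read off.
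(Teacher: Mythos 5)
Your overall architecture coincides with the paper's: use the rational type's optimality together with the monotonicity of $W_{i},L_{i}$ in the own bid to force $\partial W_{i}/\partial b_{i}=0$ on the winning region and $\partial L_{i}/\partial b_{i}=0$ on the losing region, then invoke individual rationality of the lowest type to kill $L_{i}$ below the diagonal, and finally read $W_{i}(b,b)=\psi(b)$ off $\delta_{i}\equiv0$. The endgame (your last paragraph) matches the paper's almost verbatim.

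There is, however, a genuine gap in your regularity step. You assert that the opponent's equilibrium bid $b_{j}=b(w_{j})$ has a continuous, strictly positive density $h(\cdot\mid w_{i},x_{i})$ on $[0,1]$ ``because $f$ is continuous and strictly positive and $b$ is a homeomorphism.'' A regular equilibrium only guarantees that $b$ is continuous and strictly increasing; a homeomorphism need not push an absolutely continuous measure forward to an absolutely continuous one, and $\psi=b^{-1}$ need not be differentiable, so neither the existence of $h$ nor its strict positivity follows from what you have. Your entire first-order-condition identity --- and in particular the step where you conclude that both integrands vanish pointwise because they are ``weighted by a strictly positive density'' --- rests on this unjustified claim. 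The paper circumvents the issue in two ways: first, it observes that $\delta_{i}\equiv0$ means $\psi(b)=W_{i}(b,b)-L_{i}(b,b)$ identically, so smoothness of the payment rules directly yields that $\psi$ is $\mathcal{C}^{1}$ with \emph{finite} derivative; second, instead of an FOC equality it uses the one-sided difference-quotient inequalities from the proof of Lemma \ref{lem:psi'>0} (which integrate over $w_{j}$ and never require a density for the bid distribution) to show that $\int_{0}^{1}\partial P_{i}(b(w_{i}),b(w_{j}))/\partial b_{i}\,f(w_{j}|w_{i},x_{i})dw_{j}>0$ would force $\liminf\frac{\psi(b(w_{i}))-\psi(b)}{b(w_{i})-b}=\infty$, contradicting the finiteness of $\psi'$. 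You could partially repair your argument by deriving $\psi\in\mathcal{C}^{1}$ from the same identity, but even then $\psi'$ (hence $h$) may vanish at some points, so the passage from ``the integral is zero'' to ``the integrands vanish everywhere'' still needs the extra continuity/denseness argument that the paper supplies and you omit.
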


\paragraph*{Differentiability of the Bidding Strategy}

To show that $\delta_{i}(b)=0$ for all bids we derive an implication
of $\delta_{i}(b)>0$ and show that it is violated generically. In
the derivations we will use first-order conditions and the following
Lemma shows that the inverse of the bid function, $\psi(b)$ is differentiable
if $\delta_{i}(b)>0$. The Lemma is based on the proof of Lemma 7
in \citet{Lizzeri2000}.
\begin{lem}
\label{lem:psi'>0}If $\delta_{i}(b_{0})>0$ for some $b_{0}\in[0,1]$,
then there exist a non-empty interval $(\alpha,\beta)\subset[0,1]$,
with $b_{0}\in(\alpha,\beta)$, such that $\psi$ is continuously
differentiable on $(\alpha,\beta)$, and $\psi^{\prime}(b)>0$ and
$\delta_{i}(b)>0$ for all $b\in(\alpha,\beta)$. 
\end{lem}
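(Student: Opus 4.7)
The plan is to adapt the proof of Lemma 7 in \citet{Lizzeri2000}. The first step is a continuity argument. By the regularity of the equilibrium, $\psi$ is continuous on $[0,1]$, and by the smoothness of $M$ the map $b\mapsto W_i(b,b)-L_i(b,b)$ is continuous; hence $\delta_i$ is continuous on $[0,1]$, and $\delta_i(b_0)>0$ yields an open interval $(\alpha,\beta)\ni b_0$ in $[0,1]$ on which $\delta_i>0$. This settles the last assertion of the lemma and reduces the task to proving continuous differentiability and strict positivity of $\psi'$ on a (possibly smaller) neighborhood of $b_0$.

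Next, on $(\alpha,\beta)$ I would derive the first-order condition for the equilibrium bid. Let $F_w$ denote the marginal CDF of the opponent's interim valuation (which admits a continuous strictly positive density $f_w$ since $f$ does), and let $H(b)=F_w(\psi(b))$ be the CDF of the opponent's symmetric equilibrium bid. A bidder with interim valuation $w$ who bids $b$ obtains expected payoff
\[ U(b,w) \;=\; w\,H(b) \;-\; \int_0^b W_i(b,y)\,dH(y) \;-\; \int_b^1 L_i(b,y)\,dH(y). \]
Since $\psi$ is continuous and strictly increasing, $H$ is atomless. Differentiating $U$ in $b$ at $b=\psi^{-1}(w)$, wherever $\psi$ admits a derivative, and rearranging using $w=\psi(b)$, gives
\[ \delta_i(b)\,f_w(\psi(b))\,\psi'(b) \;=\; R(b), \qquad R(b)\;:=\;\int_0^b \partial_b W_i(b,y)\,dH(y) \;+\; \int_b^1 \partial_b L_i(b,y)\,dH(y), \]
where $R$ is continuous on $(\alpha,\beta)$ by the smoothness of $W_i,L_i$ and the continuity of $H$.

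The main obstacle is to upgrade this identity---which a priori is valid only at the a.e.\ points where the monotone $\psi$ happens to be differentiable---to classical $C^1$ regularity, and then to strict positivity of $\psi'$. Since $\delta_i$ and $f_w\circ\psi$ are continuous and strictly positive on $(\alpha,\beta)$, the ratio $R/(\delta_i\cdot f_w\circ\psi)$ is a continuous candidate for $\psi'$. Following the Lizzeri--Persico argument, I would apply the two-sided IC inequalities to types $\psi(b)$ and $\psi(b+\varepsilon)$: dividing by $\varepsilon$ sandwiches the finite difference $(\psi(b+\varepsilon)-\psi(b))/\varepsilon$ between continuous expressions with a common limit as $\varepsilon\to 0^\pm$. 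Consequently $\psi$ is locally Lipschitz and hence absolutely continuous on $(\alpha,\beta)$, so $\psi$ is the integral of its a.e.\ derivative, which agrees with the continuous right-hand side; therefore $\psi\in C^1((\alpha,\beta))$. For strict positivity of $\psi'$ on a sub-interval around $b_0$, note that $\psi'(b_0)=0$ would force $R(b_0)=0$ via the FOC, so that both $\partial_b W_i(b_0,\cdot)$ and $\partial_b L_i(b_0,\cdot)$ would vanish on the support of $H$; a direct IC comparison then shows that a small upward deviation by a bidder near $b_0$ would capture extra winning-surplus proportional to $\delta_i(b_0)>0$ at vanishing marginal payment cost, contradicting optimality. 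Shrinking $(\alpha,\beta)$ around $b_0$ to the region where $\psi'>0$ yields the conclusion.
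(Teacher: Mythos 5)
Your overall route is the same as the paper's: both rest on the Lizzeri--Persico device of writing the two incentive-compatibility inequalities between nearby types $\psi(b_0)$ and $\psi(b^n)$, dividing by $b_0-b^n$, and sandwiching the difference quotient of $\psi$ between expressions converging to $\bigl(\int_0^1 \partial_{b_i}P_i(b_0,b(w_j))f(w_j|w_0,x_i)\,dw_j\bigr)/\bigl(\delta_i(b_0)f(w_0|w_0,x_i)\bigr)$; continuity of $\delta_i$ and of this right-hand side then gives $C^1$ regularity on a neighbourhood where $\delta_i>0$. Your detour through local Lipschitzness and absolute continuity is unnecessary---the sandwich already pins down the derivative pointwise---but harmless. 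One slip: the bidder's objective must weight opponent valuations by the conditional density $f(w_j|w_i,x_i)$, not the marginal $f_w$; with correlated types these differ, though your argument survives the substitution unchanged.

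The genuine gap is in your last step, the strict positivity of $\psi'$. You argue that $\psi'(b_0)=0$ forces $R(b_0)=0$, hence $\partial_{b_i}W_i(b_0,\cdot)=\partial_{b_i}L_i(b_0,\cdot)=0$, and that a small upward deviation then captures extra winning surplus ``proportional to $\delta_i(b_0)$ at vanishing marginal payment cost.'' But the extra winning surplus from deviating to $b_0+\varepsilon$ is $\delta_i(b_0)$ times the extra winning \emph{probability}, which is of order $\psi(b_0+\varepsilon)-\psi(b_0)=o(\varepsilon)$ precisely when $\psi'(b_0)=0$, while the extra payment is also $o(\varepsilon)$; you are comparing two quantities both of smaller order than $\varepsilon$, and no contradiction follows. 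Nothing in the first-order analysis rules out $\psi'(b_0)=0$ at an isolated point. The paper handles this differently: since $\psi$ is strictly increasing it cannot have identically vanishing derivative on the neighbourhood where it is $C^1$, so there is some $b'$ there with $\psi'(b')>0$, and by continuity of $\psi'$ an interval around $b'$ on which $\psi'>0$ and $\delta_i>0$. (This delivers an interval near, but not necessarily containing, $b_0$---slightly weaker than the statement as written, but sufficient for how the lemma is used downstream, since the genericity argument only needs the first-order condition at a dense set of valuations.) To repair your proof, replace the contradiction argument with this monotonicity observation.
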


\paragraph*{For generic distributions efficiency requires $M=SPA$.}

Next, we show that $\delta_{i}(b)>0$ implies that a condition similar
to the full-surplus extraction condition \citep{McAfee1992} must
be violated, and prove results analog to \citet[henceforth GH17]{Gizatulina2017},
to show that for generic prior densities $f(w_{1},x_{1},w_{2},x_{2})$,
we must have $\delta_{i}(b)=0$ for all $b\in[0,1]$, $i\in\{1,2\}$,
and any regular equilibrium of a simple mechanism.

We begin by deriving an implication of $\delta_{i}(b)>0$. Fix $b\in(0,1)$
such that $\delta_{i}(b)>0$ and consider a rational bidder $i$ with
type $(w_{i},x_{i})$, where $w_{i}=\psi(b)$ and $x_{i}\in X$ is
arbitrary. In a regular equilibrium, this type maximizes (where we
use $j\neq i$ to denote the opponent): 
\[
\max_{b^{\prime}\in[0,1]}\int_{0}^{\psi(b^{\prime})}\left(\psi(b)-W_{i}(b^{\prime},b(w_{j}))\right)f(w_{j}|\psi(b),x_{i})dw_{j}-\int_{\psi(b^{\prime})}^{1}L_{i}(b^{\prime},b(w_{j}))f(w_{j}|\psi(b),x_{i})dw_{j}
\]
Given Lemma \ref{lem:psi'>0}, we can differentiate the objective
function with respect to $b^{\prime}$, and obtain the first-order
condition, which must hold for $b^{\prime}=b$: 
\begin{equation}
f(\tilde{w}_{j}=\psi(b)|\tilde{w}_{i}=\psi(b),x_{i})=\int_{0}^{1}\frac{\partial P_{i}(b,b(w_{j}))/\partial b_{i}}{\delta_{i}(b)\psi^{\prime}(b)}f(w_{j}|\tilde{w}_{i}=\psi(b),x_{i})dw_{j},\label{eq:FOC}
\end{equation}
where we simplify notation by denoting the payment of bidder $i$
as follows 
\[
P_{i}(b_{i},b_{j}):=\boldsymbol{1}_{\{b_{i}>b_{j}\}}W_{i}(b_{i},b_{j})+\boldsymbol{1}_{\{b_{i}<b_{j}\}}L_{i}(b_{i},b_{j}).
\]
Multiplying \eqref{eq:FOC} by $f(\tilde{w}_{i}=\psi(b),x_{i})/f(\tilde{w}_{i}=\tilde{w}_{j}=\psi(b))$,
and using 
\[
f(x_{i}|\tilde{w}_{i}=\tilde{w}_{j}=\psi(b))f(\tilde{w}_{i}=\tilde{w}_{j}=\psi(b))=f(x_{i},\tilde{w}_{i}=\tilde{w}_{j}=\psi(b)),
\]
we obtain for all $x_{i}\in X_{i}$: 
\begin{align}
f(x_{i}|\tilde{w}_{i} & =\tilde{w}_{j}=\psi(b))=\int_{0}^{1}m(b,\psi(b),w_{j})f(x_{i}|\tilde{w}_{i}=\psi(b),w_{j})dw_{j},\label{eq:FSE_condition}
\end{align}
where 
\[
m(b,\psi(b),w_{j})=\frac{\partial P_{i}(b,b(w_{j}))/\partial b_{i}f(\tilde{w}_{i}=\psi(b),w_{j})}{\delta_{i}(b)\psi^{\prime}(b)f(\tilde{w}_{i}=\tilde{w}_{j}=\psi(b))}.
\]
Since we consider a simple mechanism and prior densities $f\in\mathcal{M}_{+}^{d}([0,1]^{2K-2})$,
and $\psi^{\prime}(b)>0$, the term $m(b,\psi(b),w_{j})$ is finite
and non-negative. For fixed $b$, $m(b,\psi(b),\cdot)$ is in fact
a probability density on $[0,1]$.\footnote{Integrating both sides of \eqref{eq:FSE_condition} over $X$ we see
that $\int_{0}^{1}m(b,\psi(b),w_{j})dw_{j}=1$.}

Condition \eqref{eq:FSE_condition} states that the density $f(\cdot|\tilde{w}_{i}=\tilde{w}_{j}=\psi(b))$
can be expressed as a positive linear combination of the densities
$f(\cdot|\tilde{w}_{i}=\psi(b),w_{j})$ for $w_{j}\in[0,1]$, with
positive weights on $w_{j}\neq\psi(b)$. By virtually the same proof
as for Theorem 2.4 in GH17, we can show that for generic distributions
\eqref{eq:FSE_condition} is violated.

To state the result we need several definitions that mimic GH17. Let
$\mathcal{M}_{+}^{d}(X)$ be the set of absolutely continuous probabilities
measures on $X$ with strictly positive and continuous densities,
endowed with the topology induced by the $\sup$-norm for density
functions on $X$; let $\mathcal{C}([0,1],\mathcal{M}_{+}^{d}(X))$
be the set of continuous mappings from $[0,1]$ to $\mathcal{M}_{+}^{d}(X)$,
endowed with the topology of uniform convergence; and let $\mathcal{M}([0,1])$
be the set of probability measures on $[0,1]$, endowed with a topology
that is metrizable by a metric that is a convex function on $\mathcal{M}([0,1])\times\mathcal{M}([0,1])$.
Finally let $\mathcal{E}(w_{i})\subset\mathcal{C}([0,1],\mathcal{M}_{+}^{d}(X))$
be the set of continuous mappings that map $w\in[0,1]$ to densities
$g(\cdot|w)\in\mathcal{M}_{+}^{d}(X)$ that satisfy the following
condition: For all $\mu\in\mathcal{M}([0,1])$: 
\begin{equation}
g(x_{i}|w_{i})=\int_{0}^{1}g(x_{i}|w^{\prime})\mu(dw^{\prime}),\;\forall x_{i}\in X\qquad\implies\qquad\mu=\delta_{w_{i}}\label{eq:FSE_measure}
\end{equation}
where $\delta_{w_{i}}\in\mathcal{M}([0,1])$ is the Dirac measure
with a mass-point on $w_{i}$.
\begin{lem}[see Theorem 2.4 in \citealp{Gizatulina2017}]
\label{lem:genericity_conditional_density}For any $w_{i}\in(0,1)$,
the set $\mathcal{E}(w_{i})$ is a residual subset of $\mathcal{C}([0,1],\mathcal{M}_{+}^{d}(X))$,
that is, it is a countable intersection of open and dense subsets
of $\mathcal{C}([0,1],\mathcal{M}_{+}^{d}(X))$. 
\end{lem}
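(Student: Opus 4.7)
The plan is to follow the Baire category strategy used in the proof of Theorem~2.4 of \citet{Gizatulina2017}, adapted to our setting. First I would realize $\mathcal{E}(w_{i})$ as a countable intersection of open dense sets. Let $d$ denote the convex metric on $\mathcal{M}([0,1])$. For each integer $n\geq 1$ define
\[
\mathcal{E}_{n}(w_{i}) := \Bigl\{ g\in\mathcal{C}\bigl([0,1],\mathcal{M}_{+}^{d}(X)\bigr) : \text{no } \mu \text{ with } d(\mu,\delta_{w_{i}})\geq 1/n \text{ satisfies } g(x_{i}|w_{i})=\int_{0}^{1} g(x_{i}|w')\,\mu(dw')\ \forall x_{i}\in X\Bigr\}.
\]
Then $\mathcal{E}(w_{i})=\bigcap_{n}\mathcal{E}_{n}(w_{i})$, and the ambient space, being an open subset of a closed affine subset of the Banach space $C([0,1]\times X)$, is a Baire space; so it will suffice to show that each $\mathcal{E}_{n}(w_{i})$ is open and dense.

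Openness of $\mathcal{E}_{n}(w_{i})$ is the routine half: I would argue the complement is closed. If $g_{k}\to g$ uniformly with each $g_{k}\in\mathcal{E}_{n}(w_{i})^{c}$, witnessed by measures $\mu_{k}$ with $d(\mu_{k},\delta_{w_{i}})\geq 1/n$, then weak* sequential compactness of $\mathcal{M}([0,1])$ yields a limit $\mu$, which inherits the constraint $d(\mu,\delta_{w_{i}})\geq 1/n$ by continuity of $d$. Since $g_{k}(x_{i}|\cdot)\to g(x_{i}|\cdot)$ uniformly on $[0,1]$ for every $x_{i}$ (indeed, uniformly in $x_{i}$ too), I can pass to the limit inside the integral to get $g(x_{i}|w_{i})=\int g(x_{i}|w')\mu(dw')$ for all $x_{i}$, so $g\in\mathcal{E}_{n}(w_{i})^{c}$.

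The main obstacle is density, and here I would essentially import the perturbation device of \citet{Gizatulina2017}. Given $g\in\mathcal{C}([0,1],\mathcal{M}_{+}^{d}(X))$ and $\varepsilon>0$, the goal is to build a perturbation $\tilde{g}$ within $\varepsilon$ of $g$ such that \emph{no} $\mu$ with $d(\mu,\delta_{w_{i}})\geq 1/n$ satisfies the integral equation for $\tilde{g}$. The key idea is to inscribe into $\tilde{g}(\cdot|w)$ a feature sharply localized at $w=w_{i}$ in the $w$-direction but rich enough in the $x_{i}$-direction to be ``unfakeable'' by averaging. Concretely, I would pick a continuous bump $\rho:[0,1]\to[0,1]$ with $\rho(w_{i})=1$ and $\rho$ strictly less than $1-\eta$ outside a small neighbourhood of $w_{i}$, together with a countable family of linearly independent perturbations $\eta_{1},\eta_{2},\ldots\in C(X)$ with $\int_{X}\eta_{m}=0$, and set
\[
\tilde{g}(x_{i}|w) := g(x_{i}|w)+\varepsilon\,\rho(w)\sum_{m\geq 1}2^{-m}\eta_{m}(x_{i}).
\]
By choosing the $\eta_{m}$'s appropriately (e.g.\ a dense set in a suitable subspace of $C(X)$) and the width of $\rho$ small enough relative to $1/n$, any $\mu$ satisfying the perturbed integral equation must simultaneously satisfy $\rho(w_{i})=\int\rho(w')\mu(dw')$ for the $\rho$-part, forcing $\mu$ to concentrate near $w_{i}$, while reproducing each $\eta_{m}$ with the same weighting, forcing $\mu=\delta_{w_{i}}$. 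The delicate step, which I expect to be the main technical burden, is verifying quantitatively that this single perturbation defeats every $\mu$ in the compact set $\{\mu:d(\mu,\delta_{w_{i}})\geq 1/n\}$ uniformly, which is exactly the use of the metric structure on $\mathcal{M}([0,1])$ made in GH17. Once density is established, Baire's theorem gives residuality of $\bigcap_{n}\mathcal{E}_{n}(w_{i})=\mathcal{E}(w_{i})$.
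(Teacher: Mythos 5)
Your skeleton---writing $\mathcal{E}(w_{i})=\bigcap_{n}\mathcal{E}_{n}(w_{i})$ with the offending measures bounded away from $\delta_{w_{i}}$, proving openness of each $\mathcal{E}_{n}(w_{i})$ by weak$^{*}$ compactness, and proving density by a perturbation---is exactly the GH17 strategy that the paper's proof defers to, and the openness half is fine. The gap is in the density step, which is the entire content of the lemma. Your perturbation $\tilde{g}(x_{i}|w)=g(x_{i}|w)+\varepsilon\,\rho(w)h(x_{i})$ with $h=\sum_{m}2^{-m}\eta_{m}$ is rank one: the countably many ``linearly independent'' $\eta_{m}$ are summed into a single fixed function of $x_{i}$, so they contribute one perturbation direction, not infinitely many separate constraints. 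Writing $D_{\mu}(x_{i}):=g(x_{i}|w_{i})-\int_{0}^{1}g(x_{i}|w')\mu(dw')$ for the defect of $\mu$ under the \emph{unperturbed} $g$, the perturbed integral equation is equivalent to $D_{\mu}=\varepsilon\left(\int\rho\,d\mu-1\right)h$. It does \emph{not} decouple into ``$\int\rho\,d\mu=\rho(w_{i})$'' plus ``$\mu$ reproduces each $\eta_{m}$'': a measure $\mu$ far from $\delta_{w_{i}}$ whose defect happens to be proportional to $h$ still solves the perturbed equation, so your claimed chain ``forcing $\mu$ to concentrate near $w_{i}$ \ldots{} forcing $\mu=\delta_{w_{i}}$'' does not follow. (The construction could in principle be rescued: the set of forbidden directions $\left\{ D_{\mu}/\left(\varepsilon(\int\rho\,d\mu-1)\right)\right\} $ over the weak$^{*}$-compact bad set is compact, hence nowhere dense in $C(X)$, so an $h$ avoiding it exists; but that argument is not in your proposal and is different from the mechanism you describe.)

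By contrast, GH17---and the paper following them---use a genuinely multi-directional device: finitely many test functions $g^{1},\ldots,g^{K}$ supported on disjoint balls, and a perturbation of the conditional measures toward measures $\beta_{1},\ldots,\beta_{K}$ chosen so that the moment vectors $\int_{X}\boldsymbol{g}(x)\beta_{k}(dx)$ are linearly independent; it is this finite-dimensional linear-independence (exposed-point) argument that pins down $\mu=\delta_{w_{i}}$. The only content the paper adds beyond citing GH17 is the point your proposal does not address: the perturbed objects must remain in $\mathcal{M}_{+}^{d}(X)$, i.e.\ retain strictly positive continuous densities, which the paper arranges by mixing $\beta_{k}=(1-\varepsilon)\beta_{k}^{0}+\varepsilon\tilde{\beta}_{k}$ and choosing $\varepsilon$ to avoid the eigenvalues that would destroy linear independence.
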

The implication of this Lemma is that for fixed $w_{i}\in(0,1)$,
and generic functions $w_{j}\mapsto f(\cdot|w_{i},w_{j})$ that map
$w_{j}$ to conditional densities $f(\cdot|w_{i},w_{j})$, any simple
mechanism with a regular equilibrium must satisfy $\delta_{i}(b(w_{i}))=0$.

This Lemma is insufficient for our purposes for two reasons. First,
we need to show that \emph{for generic priors}, the function that
maps $w_{j}$ to the conditional density $f(x_{i}|w_{i},w_{j})$ is
an element of $\mathcal{E}(w_{i})$, and second we need to show this
\emph{for all $w_{i}$}. To this end, for $i\in\{1,2\}$ let $\mathcal{W}_{i}$
be a countable and dense subset of $(0,1)$. We show that for generic
prior densities $f(w_{1},x_{1},w_{2},x_{2})$, the mapping that maps
$w_{j}\in[0,1]$ to the conditional density $f_{i}(\cdot|w_{i},w_{j})$
is an element of $\mathcal{E}_{i}(w_{i})$ for all $w_{i}\in\mathcal{W}_{i}$
and all $i\in\{1,2\}$. For the following Lemma, recall that $\mathcal{M}_{+}^{d}([0,1]^{2K-2})$
denotes the set of priors with strictly positive and continuous densities.
\begin{lem}[see Theorem 2.7 in \citealp{Gizatulina2017}]
\label{lem:genericity_prior}For $i\in\{1,2\}$, let $\mathcal{W}_{i}$
be a countable and dense subset of $(0,1)$. Let $\mathcal{F}$ be
the set of prior densities in $\mathcal{M}_{+}^{d}([0,1]^{2K-2})$
such that for all $i\in\{1,2\}$ and $w_{i}\in\mathcal{W}_{i}$, the
mapping $w_{j}\mapsto f(\cdot|w_{i},w_{j})$ is an element of $\mathcal{E}(w_{i})$.
Then $\mathcal{F}$ is a residual subset of $\mathcal{M}_{+}^{d}([0,1]^{2K})$,
that is it contains a countable intersection of open and dense subsets
of $\mathcal{M}_{+}^{d}([0,1]^{2K})$. 
\end{lem}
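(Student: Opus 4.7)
The plan is to reduce this genericity statement for joint priors to the genericity statement for conditional densities given by Lemma~\ref{lem:genericity_conditional_density}, then exploit that $\mathcal{W}_1$ and $\mathcal{W}_2$ are countable to take a countable intersection of residual sets. The structure closely mirrors the proof of Theorem 2.7 in \citet{Gizatulina2017}.

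The first step is to define, for each fixed $i\in\{1,2\}$ and $w_i\in\mathcal{W}_i\subset(0,1)$, the conditioning map
$$\Phi_{i,w_i}:\mathcal{M}_+^d([0,1]^{2K-2})\;\longrightarrow\;\mathcal{C}([0,1],\mathcal{M}_+^d(X))$$
that sends a joint density $f$ to the mapping $w_j\mapsto f(\cdot\mid w_i,w_j)$, where the conditional density is computed as
$$f(x_i\mid w_i,w_j)=\frac{\int_X f(w_i,x_i,w_j,x_j)\,dx_j}{\int_X\!\int_X f(w_i,x_i',w_j,x_j)\,dx_i'\,dx_j}.$$
Because $f$ is continuous and strictly positive on the compact set $[0,1]^{2K-2}$, both numerator and denominator are continuous and uniformly bounded away from zero; hence $\Phi_{i,w_i}(f)$ takes values in $\mathcal{M}_+^d(X)$ and depends continuously on $w_j$. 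A standard estimate shows $\Phi_{i,w_i}$ itself is continuous from the sup-norm topology on densities to the uniform-convergence topology on $\mathcal{C}([0,1],\mathcal{M}_+^d(X))$.

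The key step is to show that the preimage $\Phi_{i,w_i}^{-1}(\mathcal{E}(w_i))$ is residual in $\mathcal{M}_+^d([0,1]^{2K-2})$. Since $\Phi_{i,w_i}$ is continuous, preimages of open sets are open, so it remains to prove that preimages of dense open subsets of $\mathcal{C}([0,1],\mathcal{M}_+^d(X))$ are dense. For this I would argue by explicit perturbation: given any $f_0\in\mathcal{M}_+^d([0,1]^{2K-2})$, any target continuous mapping $g\in\mathcal{C}([0,1],\mathcal{M}_+^d(X))$ in the chosen open set, and any $\varepsilon>0$, construct a perturbation $f_1$ of $f_0$ with $\|f_1-f_0\|_\infty<\varepsilon$ and $\Phi_{i,w_i}(f_1)$ uniformly close to $g$. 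The construction uses a narrow bump $\rho_\eta$ around $w_i$ (of width $\eta$) and interpolates between $f_0$ outside a neighborhood of $w_i$ and a locally reshaped density of the form $g(x_i\mid w_j)\,h(w_j,x_j)\,\rho_\eta(w_i')$ inside that neighborhood, with $h$ chosen so that marginals in $(w_j,x_j)$ are preserved. Because only a thin slice in the $w_i'$-direction is altered, the sup-norm change in $f$ can be made arbitrarily small while still forcing the conditional density at the slice $\{w_i'=w_i\}$ to match $g$. Combined with the density of $\mathcal{E}(w_i)$ from Lemma~\ref{lem:genericity_conditional_density}, this yields density of the preimage.

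With these ingredients in place, the conclusion is immediate: for each pair $(i,w_i)$ with $w_i\in\mathcal{W}_i$, the set $\Phi_{i,w_i}^{-1}(\mathcal{E}(w_i))$ is residual in $\mathcal{M}_+^d([0,1]^{2K-2})$, and
$$\mathcal{F}\;=\;\bigcap_{i\in\{1,2\}}\;\bigcap_{w_i\in\mathcal{W}_i}\Phi_{i,w_i}^{-1}(\mathcal{E}(w_i))$$
is a countable intersection of residual sets, hence itself residual. The main obstacle is the perturbation argument in the third paragraph: one must preserve continuity and strict positivity of $f$, keep the sup-norm perturbation small, and still obtain a prescribed change in the conditional density at the single slice $w_i'=w_i$. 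This is delicate when $K-2\geq 1$ (so that $X$ is nontrivial), but is handled by the local-bump technique of GH17; all other steps are routine topology given their Theorem 2.4.
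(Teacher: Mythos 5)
Your overall architecture is the same as the paper's: push the joint prior through the conditioning map $\Phi_{i,w_{i}}$, pull back the residual set $\mathcal{E}(w_{i})$ from Lemma \ref{lem:genericity_conditional_density}, and intersect over the countably many pairs $(i,w_{i})$ with $w_{i}\in\mathcal{W}_{i}$ (a diagonal argument then gives a countable intersection of open dense sets inside $\mathcal{F}$). The problem is the key step, the density of $\Phi_{i,w_{i}}^{-1}(D)$ for $D$ open and dense. You claim that for \emph{any} target $g$ in the open set and any $\varepsilon>0$ one can find $f_{1}$ with $\|f_{1}-f_{0}\|_{\infty}<\varepsilon$ and $\Phi_{i,w_{i}}(f_{1})$ uniformly close to $g$. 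That cannot be right, and it contradicts a fact you correctly state two sentences earlier: $\Phi_{i,w_{i}}$ is continuous for the sup-norm topology on densities (the conditional is a ratio of integrals of $f$ with denominator bounded away from zero), so $\|f_{1}-f_{0}\|_{\infty}<\varepsilon$ forces $\Phi_{i,w_{i}}(f_{1})$ to lie within $O(\varepsilon)$ of $\Phi_{i,w_{i}}(f_{0})$; it cannot also be close to a $g$ far from $\Phi_{i,w_{i}}(f_{0})$. The ``thin slice'' intuition is the culprit: altering $f$ only near the hyperplane $w_{i}'=w_{i}$ keeps the perturbation small in $L^{1}$ or weak topologies, but the relevant topology here is the uniform topology for densities, in which the size of the perturbation is the pointwise change at the slice --- and that change must be of order $\|g-\Phi_{i,w_{i}}(f_{0})\|$ if the conditional at the slice is to become $g$.

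The repair is exactly what the paper does. What you need is not ``reach an arbitrary $g$ by a small perturbation'' but \emph{openness} of the conditioning map: for a continuous open map, preimages of open sets are open, and preimages of dense sets are dense (if $U$ is open and nonempty, $\Phi_{i,w_{i}}(U)$ is open and nonempty and therefore meets $D$), so preimages of residual sets are residual. Openness is a local surjectivity statement --- every $g$ \emph{sufficiently close} to $\Phi_{i,w_{i}}(f_{0})$ is attained from some $f_{1}$ close to $f_{0}$ --- and this is precisely what Lemma 5.9 of GH14 (\citealp{Gizatulina2014}) supplies for these conditioning maps; the paper simply cites it. Your bump construction is salvageable as a proof of that local statement: first use density of $D$ to pick $g\in D$ close to $\Phi_{i,w_{i}}(f_{0})$, and then the required reshaping of $f_{0}$ at the slice is itself pointwise small. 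But as written, with $g$ arbitrary in the open set, the step fails. The remaining ingredients --- continuity of the conditioning map, the countable intersection over $i$ and $w_{i}\in\mathcal{W}_{i}$, and the concluding diagonal argument --- are correct and coincide with the paper's proof.
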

This Lemma implies that for generic prior densities $f(w_{1},x_{1},w_{2},x_{2})$,
any regular equilibrium of a simple mechanism must satisfy $\delta_{i}(b(w_{i}))=0$
for all $w_{i}\in\mathcal{W}_{i}$. Since the functions $b(\cdot)$
and $\delta_{i}(\cdot)$ are continuous and $\text{\ensuremath{\mathcal{W}_{i}}}$
is dense, this implies $\delta_{i}(b)=0$ for all $b\in[0,1]$. By
Lemma \ref{lem:implications_delta=00003D00003D0}, this implies that
for generic distributions, if a simple mechanism has a regular equilibrium,
then it must be the second-price auction.

\paragraph*{Bidding Strategy of the Misspecified Type in the Second-Price Auction}

So far we have made use of the rational type's first-order condition
to show that efficiency cannot be achieved with an auction-like mechanism
other than the SPA. To conclude the proof of Theorem \ref{thm:Main_Result}
we show that for generic distributions, misspecified types do not
use $b(w)=w$ in a SPA.
\begin{lem}
\label{lem:m-type_dont_bid_truthfully}Let $\lambda\in(0,1)$ and
suppose that $\mathbb{E}_{f}\left[\theta_{i}^{K}|w_{j}\le b\right]\neq\frac{\mathbb{E}_{f}\left[\theta_{i}^{K}\right]}{\mathbb{E}_{f}\left[\theta_{i}^{1}\right]}\mathbb{E}_{f}\left[\theta_{i}^{1}|w_{j}\le b\right]$
for some $i\in\{1,2\}$ and $b\in[0,1]$. In any equilibrium of the
second price auction where the rational types bid truthfully, some
types $(\theta_{i},m_{i})$ place a bid different from their interim
valuation. 
\end{lem}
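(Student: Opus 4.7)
\noindent The plan is a proof by contradiction: suppose that in some equilibrium of the SPA both the rational types and every misspecified type bid $b(\theta_i) = w(\theta_i)$, and derive that the hypothesis of the lemma must fail. Under this supposition, the opponent's bid coincides with $w_j = w(\theta_j)$ regardless of sophistication, so the misspecified belief $H_i(\cdot \mid v^k)$---which in any Data-Driven Equilibrium equals the true conditional law of $b_j$ given $v_i = v^k$---is simply the conditional distribution of $w_j$ given $v_i = v^k$. Using the private-values identity $\mathbb{P}_f[v_i = v^k \mid \theta_i, \theta_j] = \theta_i^k$ and Bayes' rule, I would write
\[
H_i(b \mid v^k) \;=\; \frac{\mathbb{E}_f[\theta_i^k \mid w_j \le b]\,\mathbb{P}_f[w_j \le b]}{\mathbb{E}_f[\theta_i^k]},
\]
so that the lemma's hypothesis becomes equivalent to the assertion $H_i(\cdot \mid v^1) \not\equiv H_i(\cdot \mid v^K)$.

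The next step is the first-order condition for the misspecified type's problem at the conjectured optimum $b = w(\theta_i)$. Differentiating $U_i^m(b,\theta_i) = \sum_k \theta_i^k \int_0^b (v^k - x)\,h_i(x \mid v^k)\,dx$ yields
\[
\sum_{k=1}^K \theta_i^k\,(v^k - w(\theta_i))\,h_i(w(\theta_i) \mid v^k) \;=\; 0.
\]
The key trick is to specialize to types supported on $\{v^1, v^K\}$, i.e.\ $\theta_i^1 = p$, $\theta_i^K = 1-p$, and $\theta_i^k = 0$ for $1 < k < K$. With $v^1 = 0$ and $v^K = 1$, one has $w(\theta_i) = 1-p \in (0,1)$ for $p \in (0,1)$, and the FOC collapses to
\[
p(1-p)\,\bigl[h_i(1-p \mid v^K) - h_i(1-p \mid v^1)\bigr] \;=\; 0.
\]
As $p$ sweeps $(0,1)$, this forces $h_i(\,\cdot\,\mid v^1) \equiv h_i(\,\cdot\,\mid v^K)$ on $(0,1)$; integrating yields $H_i(\,\cdot\,\mid v^1) \equiv H_i(\,\cdot\,\mid v^K)$, and then the display for $H_i$ above contradicts the lemma's hypothesis.

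The main regularity issue is to justify pointwise evaluation of the conditional densities $h_i(\cdot \mid v^k)$ that appear in the FOC. This should follow from a standard coarea-type argument: since $w:\Delta V \to [0,1]$ is smooth and $f$ is continuous and strictly positive on $\Theta^2$, the conditional densities $h_i(\cdot \mid v^k)$ are continuous functions of $b$, so equality on the dense interior extends to all of $[0,1]$. I expect this regularity step to be the main technical hurdle; the economic core of the argument---that types concentrated on the two extreme values $\{v^1, v^K\}$ alone suffice to pin down $h_i(\cdot \mid v^1)$ against $h_i(\cdot \mid v^K)$, independently of how many other values there are---is short once the setup is in place, and cleanly generalizes Proposition~\ref{prop:SPA_is_inefficient} from $K = 2$ to arbitrary $K \ge 2$.
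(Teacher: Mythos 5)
Your proposal is correct and follows essentially the same route as the paper's own proof: both reduce the problem to the first-order condition of the misspecified type, specialize to types supported only on the extreme values $\{v^1,v^K\}$ so that the FOC collapses to the equality of the two conditional densities $h_i(\cdot\mid v^1)$ and $h_i(\cdot\mid v^K)$ at every interior bid, and then integrate to contradict the hypothesis on the conditional expectations of $\theta_i^1$ and $\theta_i^K$. The only cosmetic difference is your parametrization $\theta_i^1=p,\ \theta_i^K=1-p$ versus the paper's $\theta_i=(1-b,0,\ldots,0,b)$, which are the same family of types.
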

It is easy to see that the subset of prior densities for which there
exists $i\in\{1,2\}$ and $b\in[0,1]$ such that $\mathbb{E}_{f}\left[\theta_{i}^{K}|w_{j}\le b\right]\neq\frac{\mathbb{E}_{f}\left[\theta_{i}^{K}\right]}{\mathbb{E}_{f}\left[\theta_{i}^{1}\right]}\mathbb{E}_{f}\left[\theta_{i}^{1}|w_{j}\le b\right]$
is open and dense $\mathcal{M}_{+}^{d}([0,1]^{2K})$ so that its intersection
with $\mathcal{F}$ is residual by Lemma \ref{lem:genericity_prior}.
This concludes the proof of Theorem \ref{thm:Main_Result}.

\section{Discussion and Extensions\label{sec:Discussion-and-Extensions}}

In this section we discuss alternative modeling approaches and extensions.
In Subsection \ref{subsec:Model-of-Belief}, we discuss alternative
models of belief formation from observed data when past bids and ex-post
valuations are observable. In Subsection \ref{subsec:Non-observability-of-losing}
we review possible approaches one could take for situations in which
data from past auctions only include the winning bid or the payment
made in past auctions. Finally, in Subsection \ref{subsec:General-Mechanisms}
we discuss general mechanisms that go beyond the auction-like class
considered in Section \ref{sec:Auction-like-Mechanisms}.

\subsection{Model of Belief Formation from Observed Data\label{subsec:Model-of-Belief}}

Our goal in this paper was to model belief formation of novice bidders
(the $m$-types) who lack a complete understanding of the auction
environment. Two basic assumptions have guided our modeling choices.
First, we have assumed that novice bidders are sophisticated in the
sense that they are able to use the empirical joint distribution of
observable variables to inform their own bid. Second, we have assumed
that novice bidders do not reason about how the bids of past bidders
were formed. In particular they do not form a conjecture or model
of the information available to past bidders and do not try to analyze
how such information drives observed behavior.

Due to missing data about signals (or types) of past bidders, novice
bidders are not able to learn the true joint distribution of signals/types,
ex-post valuations and opponent's bids. At the same time, a novice
bidder knows her own type $\theta_{i}$, and has access to the empirical
distribution of observable variables. She lacks knowledge how these
two should be combined, and a priori, many different ways of using
the data are conceivable, all of which rely on some implicit or explicit
assumptions. Following our second basic assumption, novice bidders
do not try to reason about how past bidders have determined their
bids. Instead they simply combine the joint distribution of observable
variables $v_{i}$ and $b_{j}$ with the belief about the distribution
of $v_{i}$ given by their type $\theta_{i}$ to evaluate the expected
payoff of different bids. This leads to a misspecified model in which
$v_{i}$ and $b_{j}$ are correlated even when conditioning on $\theta_{i}$.

We believe that this simple way of using the data is a plausible model
of an inexperienced bidder, but we do not want to claim that it is
the only way to think about data-driven belief formation. It is an
empirical question which models of belief formation are most suitable
and we hope that this article inspires experimental or empirical work
on this topic. In our model, other ways of forming beliefs may lead
to different misspecifications and deviations from rational behavior.
However, it is clear from the analysis in Section \ref{sec:Auction-like-Mechanisms},
that our main result is robust to the inclusion of various other forms
of misspecifications. In fact the presence of various types who differ
in their misspecification will make it harder to achieve efficiency
since a mechanism would have to account for all types, a task that
is already impossible in the presence for rational and a single misspecified
type.

One aspect in which bidders could differ is the \emph{precision} of
the available data. For example, past bids may only be available in
broad categories, such as ``high''\ or ``low''\ which would
indicate that the bid was above or below some threshold $b^{\ast}$.
A bidder with access to such aggregated data would have to make some
assumption about the non-identified distribution of the bid on the
intervals $[0,b^{\ast}]$ and $[b^{\ast},1]$. A simple starting point
could be the uniform distribution but bidders may also make a different
assumption. Given such an assumption, our notion of equilibrium can
be adapted. While an analysis is beyond the scope of this paper, we
conjecture that it would lead to similar conclusions as our present
model.

Alternatively, one could consider different degrees of sophistication
of novice bidders. We assume that novice bidders do not try to reason
about how past bidders determined their bids, nor about the possible
joint distribution of $v_{i},b_{j}$ and $\theta_{i}$. Novice bidders
are sophisticated in their ability to analyze data, but boundedly
rational in the sense that they do not question their method even
though there might be alternative ways of using past data to perform
bids.

Thinking about more sophisticated types, we may ask what additional
knowledge they would need to have in order to see that their model
is misspecified. In the data, one can see that $v_{i}$ and $b_{j}$
are independent conditional on $b_{i}$ since bids are a function
of $\theta_{i}$. However, without further assumptions how past bids
were formed, this does not allow to conclude that $v_{i}$ and $b_{j}$
are independent conditional on $\theta_{i}$. Hence, given the available
data, it is not obvious to an observer or the novice bidder, that
the $m$-type in our model uses a misspecified model.\footnote{The case of two possible ex-post valuations is special. Here, a more
sophisticated $m$-type might make the plausible assumptions that
(a) past bidders also had a one-dimensional type $\theta_{i}$ and
(b) bids are a strictly increasing function of $\theta_{i}$. Based
on these assumptions, the $m$-type could conclude from the data that
$v_{i}$ and $b_{j}$ are independent conditional on $\theta_{i}$,
leading her to behave like the rational type. Note however, that with
more than two possible ex-post valuations, the bidding strategy cannot
be injective, and therefore, without further assumptions about bidding
behavior, the $m$-type cannot conclude from the data that $v_{i}$
and $b_{j}$ are independent conditional on $\theta_{i}$.} A ``more sophisticated''\ type would therefore have to have access
to more data that would allow to validate structural assumptions about
past bids.

Conversely, we may think of less sophisticated types who have access
to the same data as our $m$-type but do not attempt or are unable
to use the statistical link between the $b_{j}$ and $v_{i}$. For
example this could be bidders who are not able to analyze large data
sets beyond producing marginal distributions of the opponents bids.
Alternatively, the bidder may know her expected valuation $E[v_{i}|\theta_{i}]$
but not the full distribution $\theta_{i}$ over ex-post valuations.
Such bidders may in some cases actually display less bias in their
bidding behavior since they do not use the statistical link between
the $b_{j}$ and $v_{i}$ that gives rise to a (perceived) conditional
correlation. For example, this is the case in second price auctions
in which such bidders would bid optimally, in contrast to the $m$-type
bidders we consider.

\subsection{Non-observability of losing bids and valuations\label{subsec:Non-observability-of-losing}}

In this paper we have followed the paradigm that data on past beliefs
is unobservable but bidders have full access to past bids and past
ex-post valuations. In practice, the ex-post valuations may not be
observed precisely, and perhaps only noisy signals of the true valuation
are available. To model such a situation, one could formulate the
type $\theta_{i}$ as a distribution over such signal realizations
and proceed as before.

More importantly, auctioneers may not disclose all bids in an auction.
In the following, we discuss possible approaches for the case that
only the \emph{winning bid} and the identity of the winner is observable
(as well as ex-post valuations of both bidders). Alternatively, the
auctioneer may disclose the identity of the winner and the \emph{payment}
she has to make. In first-price auction this is equivalent to disclosing
the winning bid but in an ascending auction or second-price auction,
the payment is equal to the second highest bid and the following discussion
has to be modified accordingly.

Consider the case that after an auction with $b_{i}>b_{j}$, the data
point $(i,b_{i},v_{i},v_{j})$ is observed, so that data about losing
bids is not available. In the present paper, we have taken $H(b_{j}|v_{i})$
to be the empirical distribution of the opponent's bid $b_{j}$ conditional
on valuation $v_{i}$ for bidder $i$. If losing bids are not observed,
this distribution is not directly accessible. We outline three exemplary
models of how a bidder may construct $H(b_{j}|v_{i})$ that reflect
different degrees of sophistication. For each approach, the constructed
$H(b_{j}|v_{i})$ can be plugged into our equilibrium framework and
the analysis would proceed as before.

A \emph{naive bidder} may ignore that the observations about opponent's
bids $b_{j}$ for a given valuation $v_{i}$ is selected and use the
(observable) distribution $H(b_{j}|v_{i},b_{j}>b_{i})$ instead of
$H(b_{j}|v_{i})$. This approach will lead bidder $i$ to think that
bidder $j$ bids higher than in reality, which creates additional
bias.\footnote{\citet{Jehiel2018} uses a similar selection neglect to demonstrate
how investor overoptimism can arise if investors only observe realized
past projects. Note however, that in the auction context, the direction
of the bias crucially depends on what is disclosed. Indeed, if the
second highest bid is disclosed, a naive bidder may think that bidder
$j$ bids lower than in reality.}

A \emph{semi-naive bidder} may be aware that for each $v_{i}$ she
only observes a selected sample of opponent's bids $b_{j}$ which
satisfy $b_{j}>b_{i}$. For all other observations with a given $v_{i}$,
$b_{i}$ is known and she can only infer that $b_{j}<b_{i}$. The
bidder could then attempt to complete the missing data by assuming
some distribution $\tilde{H}(b_{j}|v_{i},b_{j}<b_{i})$. A natural
starting point would be the uniform distribution. We call this bidder
semi-naive since she makes some ad hoc assumption about $\tilde{H}(b_{j}|v_{i},b_{j}<b_{i})$,
but at least she makes an attempt to correct for the selected sample.
Given this approach, one could construct a distribution $H(b_{j}|v_{i})$
that combines the empirical distribution $H(b_{j}|v_{i},b_{j}>b_{i})$
and the assumed distributions $\tilde{H}(b_{j}|v_{i},b_{j}<b_{i})$.

Finally, a \emph{sophisticated bidder} may attempt to estimate the
distribution of $b_{j}$ conditional on $v_{i}$, using some structural
model. Since the correlation between $b_{1}$ and $b_{2}$ cannot
be assessed from the data, a natural starting point is that a bidder
takes them to be independent (conditional on $v_{1}$ and $v_{2}$),
and tries to identify the marginal conditional distribution $H(b_{j}|v_{i})$
from the data. An identical identification problem arises in competing
risk models. Translated into our context, the results of \citet{Tsiatis1975}
show that for any (not necessarily independent) joint distribution
of the bids $b_{1}$ and $b_{2}$, one can construct unique marginal
distributions that, under the assumption of independence are consistent
with the observed data. The independence assumption is thus not testable
and the sophisticated bidder is always able to pursue her approach.

Common to all three approaches is that, the naive, semi-naive, and
the sophisticated bidder will deviate from the rational bid in the
second-price auction if $\theta_{1}$ and $\theta_{2}$ are not independent.\footnote{Interestingly, only the last approach will have the converse property
that when the distributions of types are independent, bidders are
behaving optimally. In this sense, it is the closest to the insights
developed in this paper.} This is the case since in all approaches the bidder believes that
conditional on $\theta_{i}$, $v_{i}$ and $b_{j}$ are correlated.
Therefore, the impossibility of an efficient auction-like mechanism
continues to hold since the presence of the rational type requires
the use of the second-price auction, and as before, the $m$-type
does not bid truthfully in a second-price auction.

\subsection{General Mechanisms\label{subsec:General-Mechanisms}}

We have focused on a class of auction-like mechanisms in which bids
are one-dimensional and the highest bid wins. This is a natural class
that covers many practically relevant auction formats. Nevertheless,
it begs the question whether the designer could achieve efficiency
with more elaborate mechanisms. The literature on full surplus extraction
with correlated types has demonstrated that under conditions that
hold generically in the standard model, (1) the surplus of an incentive
compatible allocation rule can be fully extracted (almost fully with
continuous types), and (2) with discrete types, any allocation rule
can be implemented in a Bayes-Nash equilibrium \citep{Cremer1988,McAfee1992,Gizatulina2017}.
We are interested in the implementation of an allocation rule that
is not known to be incentive compatible given the unusual nature (the
cognitive dimension) of a type in our setting. Therefore, a generalization
of the second result for continuous types and a mix of rational and
misspecified bidders is needed. GH17 show that for generic distributions,
the belief $f(\theta_{j}|\theta_{i})$ for given $\theta_{i}$ cannot
be expressed as a convex combination of other types beliefs. Moreover,
in an incentive compatible direct mechanism, the belief of an $m$-type
about the messages of opponents is given by\footnote{If the designer can use general mechanisms, we could consider direct
mechanisms in which the message space coincides with the generalized
type-space $M=T$. Any equilibrium of a general mechanism can be replicated
by a truthful equilibrium of a direct mechanism. The truthful direct
mechanism is a tool in the analysis that should not be taken literally
if bidder's are not aware of their sophistication. Therefore if agents
are not aware of their sophistication, it is in general not obvious
how to replicate the equilibrium of a direct mechanism, using an indirect
mechanism in which agents are not asked for their sophistication directly.
However, if we want to replicate a direct mechanism that uses scoring
rules as suggested below, an indirect implementation would elicit
beliefs from the agents. This does not require agents to report their
sophistication.} 
\begin{equation}
h(\theta_{j}|\theta_{i})=\sum_{k=1}^{K}\theta_{i}^{k}f(\theta_{j}|v_{i}^{k})=\sum_{k=1}^{K}\theta_{i}^{k}f(\theta_{j}|\theta_{i}=e_{k}).\label{eq:m-type-belief function}
\end{equation}
where $e_{k}^{\ell}=\boldsymbol{1}_{k=\ell}$. Together this implies
that no two generalized types $(\theta_{i},s_{i})$ and $(\theta_{i}^{\prime},s_{i}^{\prime})$
have identical beliefs. Using a strictly proper scoring rule of the
type considered in \citet{Johnson1990}, we can therefore construct
payment rules that (a) require an expected payment equal to zero from
an agent who truthfully reports his belief, and (b) lead to a strictly
positive expected payment from any misreport. We conjecture that combining
such a payment rule with the efficient allocation rule allows to deter
any non-local deviations. However, it is an open question if there
exists a payment rule that (virtually) implements the efficient allocation
rule (or any other allocation rule that is not known to be incentive
compatible). As far as we know, an analog result has not been shown
with a continuous type space even in the standard model, and exploring
this direction is beyond the scope of this paper.

At the same time, even if a positive result can be achieved using
scoring rules, the resulting mechanism will be highly unrealistic
and schemes like this have been criticized even in the standard model
since they rely on detailed knowledge of the environment by the designer.
In our context, the designer would not only rely on detailed knowledge
about the type distribution, she would also have to know exactly how
misspecified types form beliefs based on past data. If, by contrast,
there is a rich set of possible misspecifications that arise from
different ways in which agents may use the available data, it seems
unlikely that a designer has the detailed knowledge required to design
a suitable incentive scheme. Moreover, for sufficiently rich sets
of misspecification specifications, it may not even be possible to
separate all types. For example, if there are many different types
who observe past data with different coarseness, then the result that
no two different types have the same beliefs about some aspects of
opponent's type is likely violated.

\newpage{}

\appendix
%dummy comment inserted by tex2lyx to ensure that this paragraph is not empty%dummy comment inserted by tex2lyx to ensure that this paragraph is not empty%dummy comment inserted by tex2lyx to ensure that this paragraph is not empty

\section{Omitted Proofs\label{sec:Omitted-Proofs}}

\subsection{Proof of Proposition \protect\ref{prop:SPA_is_inefficient}}
\begin{proof}[Proof of Proposition \protect\ref{prop:SPA_is_inefficient}]
If $\lambda\in(0,1)$, efficiency would require that $b^{m}(\theta_{i})=\theta_{i}$
which implies 
\begin{align*}
H^{\text{SPA}}(b\mid v_{i}=1) & =\int_{0}^{1}F(b\mid\tilde{\theta}_{i})\tilde{\theta}_{i}\frac{f(\tilde{\theta}_{i})}{\mathbb{E}[\tilde{\theta}_{i}]}d\tilde{\theta}_{i},\\
H^{\text{SPA}}(b\mid v_{i}=0) & =\int_{0}^{1}F(b\mid\tilde{\theta}_{i})(1-\tilde{\theta}_{i})\frac{f(\tilde{\theta}_{i})}{\mathbb{E}[1-\tilde{\theta}_{i}]}d\tilde{\theta}_{i}.
\end{align*}
Moreover, we must have 
\[
\theta_{i}\in\arg\max_{b}\left\{ \theta_{i}H^{\text{SPA}}(b\mid v_{i}=1)-\theta_{i}\int_{0}^{b}xdH^{\text{SPA}}(x|v_{i}=1)-(1-\theta_{i})\int_{0}^{b}xdH^{\text{SPA}}(x|v_{i}=0)\right\} 
\]
Differentiating the objective function and setting $b=\theta_{i}$
yields 
\[
(1-\theta_{i})\theta_{i}\left[H^{\text{SPA}\prime}(\theta_{i}\mid v_{i}=1)-H^{\text{SPA}\prime}(\theta_{i}\mid v_{i}=0)\right]
\]
We have 
\begin{align*}
 & H^{\text{SPA}\prime}(\theta_{i}\mid v_{i}=1)-H^{\text{SPA}\prime}(\theta_{i}\mid v_{i}=0)\\
= & \int_{0}^{1}f(\theta_{i}\mid\tilde{\theta}_{i})\tilde{\theta}_{i}\frac{f(\tilde{\theta}_{i})}{\mathbb{E}[\tilde{\theta}_{i}]}d\tilde{\theta}_{i}-\int_{0}^{1}f(\theta_{i}\mid\tilde{\theta}_{i})(1-\tilde{\theta}_{i})\frac{f(\tilde{\theta}_{i})}{\mathbb{E}[1-\tilde{\theta}_{i}]}d\tilde{\theta}_{i}\\
= & \int_{0}^{1}\left[\frac{\tilde{\theta}_{i}}{\mathbb{E}[\tilde{\theta}_{i}]}-\frac{1-\tilde{\theta}_{i}}{1-\mathbb{E}[\tilde{\theta}_{i}]}\right]f(\theta_{i}\mid\tilde{\theta}_{i})f(\tilde{\theta}_{i})d\tilde{\theta}_{i}\\
= & f(\theta_{i})\int_{0}^{1}\left[\frac{\tilde{\theta}_{i}}{\mathbb{E}[\tilde{\theta}_{i}]}-\frac{1-\tilde{\theta}_{i}}{1-\mathbb{E}[\tilde{\theta}_{i}]}\right]f(\tilde{\theta}_{i}|\theta_{i})d\tilde{\theta}_{i}\\
= & f(\theta_{i})\left[\frac{\mathbb{E}[\tilde{\theta}_{i}|\theta_{j}=\theta_{i}]}{\mathbb{E}[\tilde{\theta}_{i}]}-\frac{1-\mathbb{E}[\tilde{\theta}_{i}|\theta_{j}=\theta_{i}]}{1-\mathbb{E}[\tilde{\theta}_{i}]}\right]
\end{align*}
Hence, for bidding $\theta_{i}$ to be optimal for the misspecified
type we must have for all $\theta_{i}$: 
\begin{align*}
\frac{\mathbb{E}[\tilde{\theta}_{i}|\theta_{j}=\theta_{i}]}{\mathbb{E}[\tilde{\theta}_{i}]}-\frac{1-\mathbb{E}[\tilde{\theta}_{i}|\theta_{j}=\theta_{i}]}{1-\mathbb{E}[\tilde{\theta}_{i}]} & =0\\
\iff\mathbb{E}[\tilde{\theta}_{i}|\theta_{j}=\theta_{i}] & =\mathbb{E}[\tilde{\theta}_{i}]
\end{align*}
If the last line holds for all $\theta_{i}$ we must have 
\begin{align*}
\int_{0}^{1}\tilde{\theta}_{i}f(\tilde{\theta}_{i}|\theta_{j})d\tilde{\theta}_{i} & =\mathbb{E}[\tilde{\theta}_{i}],\qquad\forall\theta_{j},\\
\iff\qquad\int_{0}^{1}\tilde{\theta}_{i}\theta_{j}f(\tilde{\theta}_{i},\theta_{j})d\tilde{\theta}_{i} & =\mathbb{E}[\tilde{\theta}_{1}]\theta_{j}f(\theta_{j}),\qquad\forall\theta_{j},\\
\implies\qquad E\left[\tilde{\theta}_{i}\theta_{j}\right] & =\left(\mathbb{E}[\tilde{\theta}_{i}]\right)^{2}.
\end{align*}
The last line implies that we must have $\text{Corr}[\theta_{1},\theta_{2}]=0$
if the misspecified types first-order condition is satisfied for $b=\theta_{i}$
for all $\theta_{i}$. Therefore, if $\text{Corr}[\theta_{1},\theta_{2}]\neq0$,
there are types for which a misspecified bidder will not bid $\theta_{i}$
and since $b^{r}(\theta_{j})=\theta_{j}$ for all types and $\lambda\in(0,1)$,
the allocation will be inefficiency for some type profiles. 
\end{proof}

\subsection{Proof of Proposition \protect\ref{prop:FPA_is_inefficient}}
\begin{proof}[Proof of Proposition \protect\ref{prop:FPA_is_inefficient}]
An efficient allocation requires that $b^{r}(\theta_{i})=b^{m}(\theta_{i})=b(\theta_{i})$
for all $\theta_{i}\in[0,1]$. We denote the inverse of $b(\cdot)$
by $\theta(Lemma[lem:implications_{w}=W(b,b)]thenimplie\cdot)$.

The rational type's bid solves 
\[
\max_{b}\left(\theta_{i}-b\right)F(\theta(b)|\theta_{i})
\]
The FOC yields 
\begin{align}
-F(\theta_{i}|\theta_{i})+\left(\theta_{i}-b(\theta_{i})\right)f(\theta_{i}|\theta_{i})\theta_{i}^{\prime}(b(\theta_{i})) & =0\nonumber \\
\iff\qquad b^{\prime}(\theta_{i}) & =(\theta_{i}-b(\theta_{i}))\frac{f(\theta_{i}|\theta_{i})}{F(\theta_{i}|\theta_{i})}.\label{eq:ode_eff_fpa_rational}
\end{align}
The solution with boundary condition $b(0)=0$ is 
\[
b(\theta_{i})=\int_{0}^{\theta_{i}}xe^{-\int_{x}^{\theta_{i}}\frac{f(y|y)}{F(y|y)}dy}\frac{f(x|x)}{F(x|x)}dx.
\]

The misspecified type maximizes \eqref{eq:m-problem_FPA} 
\[
\max_{b}\left(1-b\right)\theta_{i}H^{\text{FPA}}(b\mid v_{i}=1)-b(1-\theta_{i})H^{\text{FPA}}(b|v_{i}=0).
\]
with 
\begin{align*}
H^{\text{FPA}}(b\mid v_{i}=1) & =\int_{0}^{1}F(\theta(b)\mid\tilde{\theta}_{i})\tilde{\theta}_{i}\frac{f(\tilde{\theta}_{i})}{E[\tilde{\theta}_{i}]}d\tilde{\theta}_{i},\\
H^{\text{FPA}}(b\mid v_{i}=0) & =\int_{0}^{1}F(\theta(b)\mid\tilde{\theta}_{i})(1-\tilde{\theta}_{i})\frac{f(\tilde{\theta}_{i})}{E[1-\tilde{\theta}_{i}]}d\tilde{\theta}_{i}.
\end{align*}
This yields 
\begin{align*}
 & \theta_{i}H^{\text{FPA}}(b(\theta_{i})\mid v_{i}=1)+(1-\theta_{i})H^{\text{FPA}}(b(\theta_{i})|v_{i}=0)\\
 & =\left(1-b(\theta_{i})\right)\theta_{i}H^{\text{FPA}\prime}(b(\theta_{i})\mid v_{i}=1)-b(\theta_{i})(1-\theta_{i})H^{\text{FPA}\prime}(b(\theta_{i})|v_{i}=0)
\end{align*}
Using 
\begin{align*}
H^{\text{FPA}\prime}(b\mid v_{i}=1) & =\theta^{\prime}(b)\int_{0}^{1}f(\theta(b)\mid\tilde{\theta}_{i})\tilde{\theta}_{i}\frac{f(\tilde{\theta}_{i})}{E[\tilde{\theta}_{i}]}d\tilde{\theta}_{i}=\theta^{\prime}(b)\frac{E[\tilde{\theta}_{i}|\theta(b)]}{E[\tilde{\theta}_{i}]}f(\theta(b))\\
H^{\text{FPA}\prime}(b\mid v_{i}=0) & =\theta^{\prime}(b)\int_{0}^{1}f(\theta(b)\mid\tilde{\theta}_{i})(1-\tilde{\theta}_{i})\frac{f(\tilde{\theta}_{i})}{E[1-\tilde{\theta}_{i}]}d\tilde{\theta}_{i}=\theta^{\prime}(b)\frac{1-E[\tilde{\theta}_{i}|\theta(b)]}{1-E[\tilde{\theta}_{i}]}f(\theta(b))
\end{align*}
we have 
\begin{align*}
 & \theta_{i}\int_{0}^{1}F(\theta_{i}\mid\tilde{\theta}_{i})\tilde{\theta}_{i}\frac{f(\tilde{\theta}_{i})}{E[\tilde{\theta}_{i}]}d\tilde{\theta}_{i}+(1-\theta_{i})\int_{0}^{1}F(\theta_{i}\mid\tilde{\theta}_{i})(1-\tilde{\theta}_{i})\frac{f(\tilde{\theta}_{i})}{E[1-\tilde{\theta}_{i}]}d\tilde{\theta}_{i}\\
= & \left(1-b(\theta_{i})\right)\theta_{i}\theta^{\prime}(b(\theta_{i}))\frac{E[\tilde{\theta}_{i}|\theta_{i}]}{E[\tilde{\theta}_{i}]}f(\theta_{i})-b(\theta_{i})(1-\theta_{i})\theta^{\prime}(b(\theta_{i}))\frac{1-E[\tilde{\theta}_{i}|\theta_{i}]}{1-E[\tilde{\theta}_{i}]}f(\theta_{i})\\
\iff\qquad\qquad & \theta_{i}\int_{0}^{1}F(\theta_{i}\mid\tilde{\theta}_{i})\tilde{\theta}_{i}\frac{f(\tilde{\theta}_{i})}{E[\tilde{\theta}_{i}]}d\tilde{\theta}_{i}+(1-\theta_{i})\int_{0}^{1}F(\theta_{i}\mid\tilde{\theta}_{i})(1-\tilde{\theta}_{i})\frac{f(\tilde{\theta}_{i})}{E[1-\tilde{\theta}_{i}]}d\tilde{\theta}_{i}\\
= & \frac{1-b(\theta_{i})}{b^{\prime}(\theta_{i})}\theta_{i}\frac{E[\tilde{\theta}_{i}|\theta_{i}]}{E[\tilde{\theta}_{i}]}f(\theta_{i})-\frac{b(\theta_{i})}{b^{\prime}(\theta_{i})}(1-\theta_{i})\frac{1-E[\tilde{\theta}_{i}|\theta_{i}]}{1-E[\tilde{\theta}_{i}]}f(\theta_{i})\\
\iff b^{\prime}(\theta_{i})= & \frac{\left(1-b(\theta_{i})\right)\theta_{i}\frac{E[\tilde{\theta}_{i}|\theta_{i}]}{E[\tilde{\theta}_{i}]}f(\theta_{i})-b(\theta_{i})(1-\theta_{i})\frac{1-E[\tilde{\theta}_{i}|\theta_{i}]}{1-E[\tilde{\theta}_{i}]}f(\theta_{i})}{\theta_{i}\int_{0}^{1}F(\theta_{i}\mid\tilde{\theta}_{i})\tilde{\theta}_{i}\frac{f(\tilde{\theta}_{i})}{E[\tilde{\theta}_{i}]}d\tilde{\theta}_{i}+(1-\theta_{i})\int_{0}^{1}F(\theta_{i}\mid\tilde{\theta}_{i})(1-\tilde{\theta}_{i})\frac{f(\tilde{\theta}_{i})}{E[1-\tilde{\theta}_{i}]}d\tilde{\theta}_{i}}\\
= & \theta_{i}\frac{\frac{E[\tilde{\theta}_{i}|\theta_{i}]}{E[\tilde{\theta}_{i}]}f(\theta_{i})}{\theta_{i}\int_{0}^{1}F(\theta_{i}\mid\tilde{\theta}_{i})\tilde{\theta}_{i}\frac{f(\tilde{\theta}_{i})}{E[\tilde{\theta}_{i}]}d\tilde{\theta}_{i}+(1-\theta_{i})\int_{0}^{1}F(\theta_{i}\mid\tilde{\theta}_{i})(1-\tilde{\theta}_{i})\frac{f(\tilde{\theta}_{i})}{E[1-\tilde{\theta}_{i}]}d\tilde{\theta}_{i}}\\
 & -b(\theta_{i})\frac{\theta_{i}\frac{E[\tilde{\theta}_{i}|\theta_{i}]}{E[\tilde{\theta}_{i}]}f(\theta_{i})-(1-\theta_{i})\frac{1-E[\tilde{\theta}_{i}|\theta_{i}]}{1-E[\tilde{\theta}_{i}]}f(\theta_{i})}{\theta_{i}\int_{0}^{1}F(\theta_{i}\mid\tilde{\theta}_{i})\tilde{\theta}_{i}\frac{f(\tilde{\theta}_{i})}{E[\tilde{\theta}_{i}]}d\tilde{\theta}_{i}+(1-\theta_{i})\int_{0}^{1}F(\theta_{i}\mid\tilde{\theta}_{i})(1-\tilde{\theta}_{i})\frac{f(\tilde{\theta}_{i})}{E[1-\tilde{\theta}_{i}]}d\tilde{\theta}_{i}}\\
= & (\theta_{i}-b(\theta_{i}))\frac{f(\theta_{i}|\theta_{i})}{F(\theta_{i}|\theta_{i})}
\end{align*}
Where the last line follows from \eqref{eq:ode_eff_fpa_rational}.
Matching coefficients, we get 
\[
\frac{\frac{E[\tilde{\theta}_{i}|\theta_{i}]}{E[\tilde{\theta}_{i}]}f(\theta_{i})}{\theta_{i}\int_{0}^{1}F(\theta_{i}\mid\tilde{\theta}_{i})\tilde{\theta}_{i}\frac{f(\tilde{\theta}_{i})}{E[\tilde{\theta}_{i}]}d\tilde{\theta}_{i}+(1-\theta_{i})\int_{0}^{1}F(\theta_{i}\mid\tilde{\theta}_{i})(1-\tilde{\theta}_{i})\frac{f(\tilde{\theta}_{i})}{E[1-\tilde{\theta}_{i}]}d\tilde{\theta}_{i}}=\frac{f(\theta_{i}|\theta_{i})}{F(\theta_{i}|\theta_{i})}
\]
and 
\[
\frac{\theta_{i}\frac{E[\tilde{\theta}_{i}|\theta_{i}]}{E[\tilde{\theta}_{i}]}f(\theta_{i})-(1-\theta_{i})\frac{1-E[\tilde{\theta}_{i}|\theta_{i}]}{1-E[\tilde{\theta}_{i}]}f(\theta_{i})}{\theta_{i}\int_{0}^{1}F(\theta_{i}\mid\tilde{\theta}_{i})\tilde{\theta}_{i}\frac{f(\tilde{\theta}_{i})}{E[\tilde{\theta}_{i}]}d\tilde{\theta}_{i}+(1-\theta_{i})\int_{0}^{1}F(\theta_{i}\mid\tilde{\theta}_{i})(1-\tilde{\theta}_{i})\frac{f(\tilde{\theta}_{i})}{E[1-\tilde{\theta}_{i}]}d\tilde{\theta}_{i}}=\frac{f(\theta_{i}|\theta_{i})}{F(\theta_{i}|\theta_{i})}
\]
Combining these we have 
\begin{align*}
\frac{E[\tilde{\theta}_{i}|\theta_{i}]}{E[\tilde{\theta}_{i}]} & =\theta_{i}\frac{E[\tilde{\theta}_{i}|\theta_{i}]}{E[\tilde{\theta}_{i}]}-(1-\theta_{i})\frac{1-E[\tilde{\theta}_{i}|\theta_{i}]}{1-E[\tilde{\theta}_{i}]}\\
\frac{E[\tilde{\theta}_{i}|\theta_{i}]}{E[\tilde{\theta}_{i}]} & =\frac{1-E[\tilde{\theta}_{i}|\theta_{i}]}{1-E[\tilde{\theta}_{i}]}
\end{align*}
This is the same condition as for the SPA which requires that $\text{Corr}[\theta_{1},\theta_{2}]=0$. 
\end{proof}

\subsection{Proof of Lemma \protect\ref{lem:symmetric_mechanisms}}
\begin{proof}[Proof of Lemma \ref{lem:symmetric_mechanisms}]
Consider the equilibrium of the original mechanism $\tilde{M}.$
For each bidder $i$ and each $s_{i}\in\{r,m\}$, we define a (non-empty)
correspondence that contains all bids that types with expected valuation
$w_{i}$ use. 
\[
b_{i}^{s_{i}}(w_{i})=\tilde{b}_{i}(w_{i},X,s_{i})
\]
where $X=[0,1]^{K-2}.$ We prove the lemma in three steps: (1) we
obtain an efficient equilibrium of the original mechanism with single-valued
correspondences (or functions) $\hat{b}_{i}^{s_{i}}$. (2) We show
that these functions satisfy $\hat{b}_{i}^{r}(w)=\hat{b}_{i}^{s}(w)=\tilde{\phi}_{i}(\hat{b}_{j}^{r}(w))=\tilde{\phi}_{i}(\hat{b}_{j}^{r}(w))$,
and a change of variable allows us to construct a mechanism $\check{M}=\left(B,\left(\check{W}_{i}\right),\left(\check{L}_{i}\right),Id\right)$
that has an efficient equilibrium in which $\check{b}_{i}^{r}(w)=\check{b}_{i}(w)=\check{b}_{j}^{r}(w)=\check{b}_{j}^{r}(w)=\check{b}(w)$.
(3) We remove jump continuities in $\check{b}(w)$ and normalize the
range of $\check{b}(w)$ to obtain a mechanism $M=\left([0,1],\left(W_{i}\right),\left(L_{i}\right),Id\right)$
so that the (normalized) continuous part of $\check{b}(w)$ is an
efficient equilibrium. We show that removing the discontinuities does
not destroy the smoothness of the simple mechanism $M$.

\textbf{Step 1:} First, note that efficiency requires that the correspondences
$b_{i}^{s_{i}}$ for $i\in\{1,2\}$ must be strictly increasing, meaning
any selection must be strictly increasing. We denote the point-wise
infimum and supremum of the correspondence by $\bl_{i}^{s_{i}}(w)=\inf b_{i}^{s_{i}}(w_{i})$
and $\bh_{i}^{s_{i}}(w)=\inf b_{i}^{s_{i}}(w_{i})$. Note that the
infimum $\bl_{i}^{s_{i}}(w)$ is strictly increasing if any selection
from $b_{i}^{s_{i}}(w)$ is strictly increasing.

Suppose for some $w_{i}$, $b_{i}^{r}(w_{i})$ is not single-valued.
Efficiency and the fact that the in requires that for every $b_{i}\in[\bl_{i}^{s_{i}}(w),\bh_{i}^{s_{i}}(w)]$,
$\left(b_{j}^{s_{j}}\right)^{-1}(\phi_{2}(b_{i}))\subset\{w_{i}\}$,
that is, any bid in the closed interval between the between the infimal
and supremal bid that bidder $i$ with interim value $w_{i}$ places
in equilibrium is either not placed by bidder $j$ or it is placed
by a bidder with the same interim value. We can include the infimum
(and supremum) since $w_{j}\in\left(b_{j}^{s_{j}}\right)^{-1}(\phi_{2}(\bl_{i}^{s_{i}}(w)))$
for some $w_{j}<w_{i}$ would imply that there exists $w_{i}'\in(w_{j},w_{i})$
such that $b_{i}'<\bl_{i}^{s_{i}}(w)$ for some $b'_{i}\in b_{i}^{s_{i}}(w_{i}')$,
which violates efficiency.

Since the probability that $w_{j}=w_{i}$ conditional on $(w_{i},x_{i})$
is zero for all $x_{i}\in X_{i}$, the rational type is indifferent
between all bids in $[\bl_{i}^{s_{i}}(w),\bh_{i}^{s_{i}}(w)]$. We
set $\hat{b}_{i}(w_{i},x_{i},r):=\hat{b}_{i}^{r}(w_{i}):=\bl_{i}^{r}(w)$.
Similar steps show that we can set $\hat{b}_{i}(w_{i},x_{i},m):=\hat{b}_{i}^{m}(w_{i}):=\bl_{i}^{m}(w)$.

Since the probability that $E[v_{i}|\theta_{i}]=w_{i}$ is zero, and
there are at most countably many discontinuities, this modification
of $\tilde{b}_{i}$ to $\hat{b}_{i}$ does not change the incentives
of bidder $j$ so that we have constructed a new equilibrium in which
the correspondences of bidder $i$ are single valued. We can apply
the same modification to the strategy of bidder $j$. Clearly these
modification preserve efficiency since $b_{j}^{s_{j}}(w_{j})<\phi_{2}(\inf\tilde{b}_{i}^{r}(w_{i}))$
whenever $w_{j}<w_{i}$.

\textbf{Step 2:} We have shown in Step 1 that there exists an efficient
equilibrium of $\tilde{M}$ that is given by the function $\hat{b}_{i}^{s}(w)$,
$i\in\{1,2\}$, $s\in\{r,m\}$. Clearly, efficiency requires that
$\hat{b}_{i}^{r}(w)=\hat{b}_{i}^{m}(w)=\phi_{i}(\hat{b}_{j}^{r}(w))=\phi_{i}(\hat{b}_{j}^{m}(w))=:\hat{b}_{i}(w)$
for almost every $w$. The only exceptions are a countable set of
interim values where all functions have a jump-discontinuity. Here
we can redefine $\hat{b}_{i}^{r}(w)=\hat{b}_{i}^{m}(w)=\hat{b}_{i}(w):=\lim_{w'\uparrow w}\min\left\{ \hat{b}_{i}^{r}(w'),\hat{b}_{i}^{m}(w'),\phi_{i}(\hat{b}_{j}^{r}(w')),\phi_{i}(\hat{b}_{j}^{m}(w'))\right\} $
for $i\neq j$, so that $\hat{b}_{i}^{r}(w)=\hat{b}_{i}^{m}(w)=\phi_{i}(\hat{b}_{j}^{r}(w))=\phi_{i}(\hat{b}_{j}^{m}(w))=\hat{b}_{i}(w)$
for every $w$, and $\hat{b}_{i}(w)$ is left-continuous.

The bids of bidder $i$ are contained in $\hat{R}_{i}=[\hat{b}_{i}(0),\hat{b}_{i}(1)]$.
We now define a new mechanism with $\check{B}=[0,1]$, $\check{\phi}(w)=w$
and $\check{W}_{i},\check{L}_{i}:[0,1]^{2}\rightarrow\mathbb{R}$
given by: 
\begin{align*}
\check{W}_{i}(\check{b}_{i},\check{b}_{j}) & =\tilde{W}_{i}\left(\hat{b}_{i}(0)+\check{b}_{i}|\hat{R}_{i}|,\,\tilde{\phi}_{j}\left(\hat{b}_{i}(0)+\check{b}_{j}|\hat{R}_{i}|\right)\right),\\
\check{L}_{i}(b_{i},b_{j}) & =\tilde{L}_{i}\left(\hat{b}_{i}(0)+\check{b}_{i}|\hat{R}_{i}|,\,\tilde{\phi}_{j}\left(\hat{b}_{i}(0)+\check{b}_{j}|\hat{R}_{i}|\right)\right).
\end{align*}
The new mechanism has an equilibrium given by the functions $\check{b}_{i}^{s}(w)=(\hat{b}_{i}(w)-\hat{b}_{i}(0))/|\hat{R}_{i}|$
and $\check{b}_{j}^{s}(w)=(\hat{b}_{i}(w)-b_{i}(p))/|\hat{R}_{i}|$.
This equilibrium allocates to the bidder with the highest valuation
since $\check{b}_{i}^{s}(w)>\check{b}_{j}^{s}(w)$ if and only if
$\hat{b}_{i}(w)>\phi_{i}(\hat{b}_{i}(w))$ and the original mechanism
was efficient. This implies that all bidding functions are the same:
$\check{b}_{i}^{s}(w)=\check{b}_{j}^{s}(w)=:\check{b}(w)$ for $s\in\{r,m\}$.
Moreover $\check{W}_{i}$ and $\check{L}_{i}$ are $\mathcal{C}\text{\textonesuperior}$
since $\tilde{\phi}_{j}$ is continuously differentiable.

\textbf{Step 3: }The bidding function $\check{b}(w)$ is strictly
increasing and can therefore be decomposed as $\check{b}(w)=\check{b}^{C}(w)+\check{b}^{J}(w)$,
where $\check{b}^{C}(w)$ is continuous and $\check{b}^{J}(w)$ is
constant except for a countable number of jump-discontinuities. We
can modify the definition of $\check{M}$ and obtain a new smooth
auction-like mechanism $M$ with a symmetric equilibrium in which
$b(w)=\check{b}^{C}(w)/\left(\check{b}^{C}(1)-\check{b}^{C}(0)\right)$.

The function $b(w_{i})$ specifies an equilibrium in the mechanism
given by: 
\begin{align*}
W_{i}(b_{1},b_{2}) & =\check{W}_{i}(\check{b}((\check{b}^{C})^{-1}(b_{1}(b^{C}(1)-b^{C}(0)))),\check{b}((\check{b}^{C})^{-1}(b_{2}(b^{C}(1)-b^{C}(0))))),\\
L_{i}(b_{1},b_{2}) & =\check{L}_{i}(\check{b}((\check{b}^{C})^{-1}(b_{1}(b^{C}(1)-b^{C}(0)))),\check{b}((\check{b}^{C})^{-1}(b_{2}(b^{C}(1)-b^{C}(0))))).
\end{align*}

Next, we show that $W$ and $L$ are continuously differentiable.
In the mechanism defined in step 2, a rational bidder chooses $b_{i}$
to maximize 
\[
\int_{0}^{\check{b}^{-1}(b_{i})}\left(w_{i}-\check{W}_{i}(b_{i},\check{b}(w_{j}))\right)dF(w_{j}|w_{i},x_{i})-\int_{\check{b}^{-1}(b_{i})}^{1}\check{L}_{i}(b_{i},\check{b}(w_{j}))dF(w_{j}|w_{i},x_{i}),
\]
where $F(w'_{j}|w,x_{i})$ is the probability that $w_{j}\le w'_{j}$,
conditional on bidder $i$'s type $(w_{i},x_{i})$.

Consider a rational bidder with type $w_{i}=\hat{w}+\varepsilon$,
where $\hat{w}$ is a discontinuity in the equilibrium bidding function
$\check{b}$ of original mechanism. Placing a bid $b'\in[\check{b}(\hat{w}),\check{b}(\hat{w}_{+}))$
instead of $\check{b}(w_{i})$ must not be profitable: 
\begin{align*}
 & \int_{0}^{\check{b}^{-1}(\check{b}(w_{i}))}\left(w_{i}-\check{W}_{i}(\check{b}(w_{i}),\check{b}(w_{j}))\right)dF(w_{j}|w_{i},x_{i})-\int_{\check{b}^{-1}(\check{b}(w_{i}))}^{1}\check{L}_{i}(\check{b}(w_{i}),\check{b}(w_{j}))dF(w_{j}|w_{i},x_{i})\\
\ge & \int_{0}^{\check{b}^{-1}(b')}\left(w_{i}-\check{W}_{i}(b',\check{b}(w_{j}))\right)dF(w_{j}|w_{i},x_{i})-\int_{\check{b}^{-1}(b')}^{1}\check{L}_{i}(b',\check{b}(w_{j}))dF(w_{j}|w_{i},x_{i})
\end{align*}
This can be rewritten as 
\begin{align*}
 & \int_{0}^{\hat{w}}\left(w_{i}-\check{W}_{i}(\check{b}(w_{i}),\check{b}(w_{j}))\right)dF(w_{j}|w_{i},x_{i})-\int_{\hat{w}}^{1}\check{L}_{i}(\check{b}(w_{i}),\check{b}(w_{j}))dF(w_{j}|w_{i},x_{i})\\
 & +\int_{\hat{w}}^{\hat{w}+\varepsilon}\left(w_{i}-\check{W}_{i}(\check{b}(w_{i}),\check{b}(w_{j}))\right)dF(w_{j}|w_{i},x_{i})+\int_{\hat{w}}^{\hat{w}+\varepsilon}\check{L}_{i}(\check{b}(w_{i}),\check{b}(w_{j}))dF(w_{j}|w_{i},x_{i})\\
\ge & \int_{0}^{\hat{w}}\left(w_{i}-\check{W}_{i}(b',\check{b}(w_{j}))\right)dF(w_{j}|w_{i},x_{i})-\int_{\hat{w}}^{1}\check{L}_{i}(b',\check{b}(w_{j}))dF(w_{j}|w_{i},x_{i})
\end{align*}
The second term in on the left-hand side vanishes as $\varepsilon\rightarrow0$
since $\check{W}_{i}$ and $\check{L}_{i}$ are bounded. Hence we
must have 
\begin{multline*}
\int_{0}^{\hat{w}}\left(\check{W}_{i}(b',\check{b}(w_{j}))-\check{W}_{i}(\check{b}(\hat{w}_{+}),\check{b}(w_{j}))\right)dF(w_{j}|\hat{w},x_{i})\\
+\int_{\hat{w}}^{1}\left(\check{L}_{i}(b',\check{b}(w_{j}))-\check{L}_{i}(\check{b}(\hat{w}_{+}),\check{b}(w_{j}))\right)dF(w_{j}|\hat{w},x_{i})\ge0
\end{multline*}
Since $b'<\check{b}(w_{i})$, and $\check{W}_{i}$ and $\check{L}_{i}$
are non-decreasing in the first argument, this implies that $\check{W}_{i}(b',\check{b}(w_{j}))=\check{W}_{i}(b_{i},\check{b}(w_{j}))$
and $\check{L}_{i}(b',\check{b}(w_{j}))=\check{L}_{i}(b_{i},\check{b}(w_{j}))$
all $b'\in[\check{b}(\hat{w}),\check{b}(\hat{w}_{+})]$ and almost
every $w_{j}$. By continuity of $\check{W}_{i}$ and $\check{L}_{i}$
the equalities must hold for all $w_{j}$. Hence since $\check{W}_{i}$
and $\check{L}_{i}$ are continuously differentiable, $\partial\check{W}_{i}(b',\check{b}(w_{j}))/\partial b_{i}=0$
and $\partial\check{L}_{i}(b',\check{b}(w_{j}))/\partial b_{i}=0$
for all $w_{j}$ and all $b'\in[\check{b}(\hat{w}),\check{b}(\hat{w}_{+})]$
and also $\partial\check{W}_{i}(b',\check{b}(w_{j}))/\partial b_{j}=\partial\check{W}_{i}(\check{b}(\hat{w}),\check{b}(w_{j}))/\partial b_{j}=\partial\check{W}_{i}(\check{b}_{+}(\hat{w}),\check{b}(w_{j}))/\partial b_{j}$
and $\partial\check{L}_{i}(b',\check{b}(w_{j}))/\partial b_{j}=\partial\check{L}_{i}(\check{b}(\hat{w}),\check{b}(w_{j}))/\partial b_{j}=\partial\check{L}_{i}(\check{b}(\hat{w}_{+}),\check{b}(w_{j}))/\partial b_{j}$
for all $b'\in[\check{b}(\hat{w}),\check{b}(\hat{w}_{+})]$ and all
$w_{j}$. Hence continuous differentiability is preserved by the elimination
of the gaps. 
\end{proof}

\subsection{Proof of Lemma \protect\ref{lem:implications_delta=00003D00003D0}}
\begin{proof}[Proof of Lemma \ref{lem:implications_delta=00003D00003D0}]
We first show that for all $i$ and $b_{i},b_{j}\in[0,1]$: $\partial W_{i}(b_{i},b_{j})/\partial b_{i}=0$
if $b_{j}<b_{i}$, and $\partial L_{i}(b_{i},b_{j})/\partial b_{i}=0$
if $b_{j}>b_{i}$.

Since $\delta_{i}(b)=0$ for all $b\in[0,1]$ we have that $\psi$
\[
\psi'(b)=\frac{\partial W_{i}(b,b)}{\partial b_{i}}+\frac{\partial W_{i}(b,b)}{\partial b_{j}}-\frac{\partial L_{i}(b,b)}{\partial b_{i}}-\frac{\partial L_{i}(b,b)}{\partial b_{j}}<\infty
\]
where finiteness follows from the assumption that $W_{i}$ and $L_{i}$
are continuously differentiable.

Now suppose that for some $w_{i}\in(0,1)$, $\int_{0}^{1}\frac{\partial P_{i}(b(w_{i}),b(w_{j}))}{\partial b_{i}}f(w_{j}|w_{i},x_{i})dw_{j}>0$.
The same derivation leading to \eqref{eq:liminf>=00003D00003D00003Dlimsup}
in the proof of Lemma \ref{lem:psi'>0}, together with $\delta_{i}(b(w_{i}))=0$
implies that 
\[
\liminf_{b\nearrow b(w_{i})}\frac{\psi(b(w_{i}))-\psi(b)}{b(w_{i})-b}=\infty.
\]
This contradicts $\psi'(b(w_{i}))<\infty$. Hence $\int_{0}^{1}\frac{\partial P_{i}(b(w_{i}),b(w_{j}))}{\partial b_{i}}f(w_{j}|w_{i},x_{i})dw_{j}=0$
for all $w_{i}\in[0,1]$. Since $\partial P(b_{i},b_{j})/\partial b_{i}\ge0$
by assumption, we therefore have $\partial P_{i}(b_{0},b(w_{j}))/\partial b_{i}=0$
for almost every $w_{j}$ and by continuity of $\partial W_{i}/\partial b_{i}$,
$\partial L_{i}/\partial b_{i}$ and $b$, this holds for all $w_{j}$.
Therefore $\partial_{b_{i}}W_{i}(b_{0},b)=0$ if $b<b_{0}$, and $\partial_{b_{i}}L_{i}(b_{0},b)=0$
if $b>b_{0}$.

To conclude the proof, note that individual rationality together with
$L_{i}(b_{i},b_{j})\ge0$ requires that $L_{i}(0,b_{j})=0$ for all
$b_{j}$.\footnote{Notice that this holds independent of our normalization that $v^{1}=0$,
since the lowest type never wins the object in a regular equilibrium.} Since $\partial L_{i}(b_{i},b_{j})/\partial b_{i}=0$ if $b_{j}>b_{i}$,
this implies that $L_{i}(b_{i},b_{j})=0$ for all $b_{i}\le b_{j}$.
Next, $\delta_{i}(b(w_{i}))=0$ implies $W_{i}(b_{i}(w),b_{i}(w))=w_{i}+L_{i}(b_{i}(w),b_{i}(w))=w_{i}$,
and since $\partial W_{i}(b_{i},b_{j})/\partial b_{i}=0$, $W_{i}(b_{i},b_{j}(w_{j}))=w_{j}$
whenever $b_{i}\ge b_{j}(w_{j})$. 
\end{proof}

\subsection{Proof of Lemma \protect\ref{lem:psi'>0}}
\begin{proof}[Proof of Lemma \ref{lem:psi'>0}]
Consider a rational bidder $i$ with types $(w_{0},x_{i})\in[0,1]^{K-1}$
and any sequence of valuations $w_{i}^{n}\nearrow w_{0}$. $w_{i}^{n}$
prefers to bid $b^{n}=b(w_{i}^{n})$ over bidding $b_{0}=b(w_{0})$.
Therefore 
\begin{align*}
 & \int_{0}^{\psi(b^{n})}\left(w_{i}^{n}-W_{i}(b^{n},b(w_{j}))\right)f(w_{j}|w_{i}^{n},x_{i})dw_{j}-\int_{\psi(b^{n})}^{1}L_{i}(b^{n},b(w_{j}))f(w_{j}|w_{i}^{n},x_{i})dw_{j}\\
\ge & \int_{0}^{\psi(b_{0})}\left(w_{i}^{n}-W_{i}(b_{0},b(w_{j}))\right)f(w_{j}|w_{i}^{n},x_{i})dw_{j}-\int_{\psi(b_{0})}^{1}L_{i}(b_{0},b(w_{j}))f(w_{j}|w_{i}^{n},x_{i})dw_{j}\\
\\
\iff\: & \frac{1}{b-b^{n}}\int_{0}^{\psi(b^{n})}\left(W_{i}(b_{0},b(w_{j}))-W_{i}(b^{n},b(w_{j}))\right)f(w_{j}|w_{i}^{n},x_{i})dw_{j}\\
 & +\frac{1}{b-b^{n}}\int_{\psi(b^{n})}^{1}\left(L_{i}(b_{0},b(w_{j}))-L_{i}(b^{n},b(w_{j}))\right)f(w_{j}|w_{i}^{n},x_{i})dw_{j}\\
\ge & \frac{1}{b-b^{n}}\int_{\psi(b^{n})}^{\psi(b_{0})}\left(w_{i}^{n}-W_{i}(b_{0},b(w_{j}))+L_{i}(b_{0},b(w_{j}))\right)f(w_{j}|w_{i}^{n},x_{i})dw_{j}
\end{align*}
Taking the $\limsup$ on both sides we get 
\[
\int_{0}^{1}\frac{\partial P_{i}(b_{0},b(w_{j}))}{\partial b_{i}}f(w_{j}|w_{0},x_{i})dw_{j}\ge\delta_{i}(b_{0})f(w_{0}|w_{0},x_{i})\limsup_{n\rightarrow\infty}\frac{\psi(b_{0})-\psi(b^{n})}{b_{0}-b^{n}}
\]
where $P_{i}(b_{i},b_{j})=W_{i}(b_{i},b_{j})+L_{i}(b_{i},b_{j})$.
Similarly, $w_{0}$ prefers to bid $b_{0}$ over $b^{n}$ for all
$n\in\mathbb{N}$: 
\begin{align*}
 & \int_{0}^{\psi(b_{0})}\left(w_{0}-W_{i}(b_{0},b(w_{j}))\right)f(w_{j}|w_{0},x_{i})dw_{j}-\int_{\psi(b_{0})}^{1}L_{i}(b_{0},b(w_{j}))f(w_{j}|w_{0},x_{i})dw_{j}\\
\ge & \int_{0}^{\psi(b^{n})}\left(w_{0}-W_{i}(b^{n},b(w_{j}))\right)f(w_{j}|w_{0},x_{i})dw_{j}-\int_{1}^{\psi(b^{n})}\left(L_{i}(b^{n},b(w_{j}))\right)f(w_{j}|w_{0},x_{i})dw_{j}\\
\\
\iff\: & \frac{1}{b_{0}-b^{n}}\int_{\psi(b^{n})}^{\psi(b_{0})}\left(w_{0}-W_{i}(b_{0},b(w_{j}))+L_{i}(b_{0},b(w_{j}))\right)f(w_{j}|w_{0},x_{i})dw_{j}\\
 & \ge\frac{1}{b_{0}-b^{n}}\int_{0}^{\psi(b^{n})}\left(W_{i}(b_{0},b(w_{j}))-W_{i}(b^{n},b(w_{j}))\right)f(w_{j}|w_{0},x_{i})dw_{j}\\
+ & \frac{1}{b-b^{n}}\int_{\psi(b^{n})}^{1}\left(L_{i}(b_{0},b(w_{j}))-L_{i}(b^{n},b(w_{j}))\right)f(w_{j}|w_{i}^{n},x_{i})dw_{j}
\end{align*}
Taking the $\liminf$ on both sides we get 
\[
\delta_{i}(b_{0})f(w_{0}|w_{0},x_{i})\liminf_{n\rightarrow\infty}\frac{\psi(b_{0})-\psi(b^{n})}{b_{0}-b^{n}}\ge\int_{0}^{1}\frac{\partial P_{i}(b_{0},b(w_{j}))}{\partial b_{i}}f(w_{j}|w_{0},x_{i})dw_{j}.
\]
Hence, for $\delta_{i}(b_{0})>0$ we have 
\begin{align}
\liminf_{n\rightarrow\infty}\frac{\psi(b_{0})-\psi(b^{n})}{b_{0}-b^{n}}\ge & \frac{\int_{0}^{1}\frac{\partial P_{i}(b_{0},b(w_{j}))}{\partial b_{i}}f(w_{j}|w_{0},x_{i})dw_{j}}{\delta_{i}(b_{0})f(w_{0}|w_{0},x_{i})}\ge\limsup_{n\rightarrow\infty}\frac{\psi(b_{0})-\psi(b^{n})}{b_{0}-b^{n}}\label{eq:liminf>=00003D00003D00003Dlimsup}
\end{align}
Notice that so far we have considered the case that $w^{n}<w_{0}$.
The same steps apply for the case that the sequence satisfies $w^{n}>w_{0}$.
Hence condition \eqref{eq:liminf>=00003D00003D00003Dlimsup} applies
for both cases. We have

\begin{equation}
\psi'(b_{0})=\psi_{-}'(b_{0})=\psi_{+}'(b_{0})=\frac{\int_{0}^{1}\frac{\partial P_{i}(b_{0},b(w_{j}))}{\partial b_{i}}f(w_{j}|\psi(b_{0}),x_{i})dw_{j}}{\delta_{i}(b_{0})f(\psi(b_{0})|\psi(b_{0}),x_{i})}.\label{eq:ODE_phi}
\end{equation}
Hence $\psi(b_{0})$ is differentiable at $b_{0}$. Since $\delta_{i}(b)$
is continuous, there exists $\varepsilon$ such that $\delta_{i}(b)>0$
for all $b\in B_{\varepsilon}(b_{0})$. Since the right-hand side
of \eqref{eq:ODE_phi} is continuous in $b_{0}$, $\psi$ is continuously
differentiable on $B_{\varepsilon}(b_{0})$. Since $\psi$ is strictly
increasing there must be $b'\in B_{\varepsilon}(b_{0})$ such that
$\psi'(b')>0$ and since $\psi'$ is continuous, there exist $\alpha<b'<\beta$
such that $(\alpha,\beta)\subset B_{\varepsilon}(b_{0})$ and $\psi$
is continuously differentiable with $\psi'(b)>0$ for $b\in(\alpha,\beta)$. 
\end{proof}

\subsection{Proof of Lemma \protect\ref{lem:genericity_conditional_density}}

The proof follows the same steps as the proof of Theorem 2.4 in GH17,
except that instead of considering continuous mappings from $T_{i}$
to the space of all measures on $T_{-i}$, $\mathcal{M}(T_{-i})$,
we consider continuous mappings from $[0,1]$ to the space of all
absolutely continuous measures on $X=[0,1]^{K-2}$ with strictly positive
and continuous density, which we denoted by $\mathcal{M}_{+}^{d}(X)$.

Restricting attention to $\mathcal{M}_{+}^{d}(X)$ instead of the
space of all measures $\mathcal{M}(X)$, requires a straightforward
modification of the constructions of the functions $\boldsymbol{g}$
and the measures $\beta_{1},\ldots,\beta_{K}$ in footnote 20 of GH17.
First we take the functions $g^{k}$ to be functions $g^{k}:X\rightarrow[0,2]$
with $g^{k}(x^{k})=2$ and $g^{k}(x)=0$ for $x\notin B^{k}$. This
allows us to construct perturbations of the measures $\beta_{k}^{0}$
which need to be elements $\mathcal{M}_{+}^{d}(X)$ for our purposes,
by setting $\beta_{k}=(1-\varepsilon)\beta_{k}^{0}+\varepsilon\tilde{\beta}_{k}$
where the measure $\tilde{\beta}_{k}$ has a density $\tilde{f}_{k}$
that satisfies $\tilde{f}_{k}(x)$ for $x\notin B^{k}$ and $\int_{X}g^{k}(x)\tilde{f}_{k}(x)dx=1$.
Then, with $\varepsilon\neq-z/(1-z)$ for all negative eigenvalues
of the matrix $\left(\int_{X}g^{k}(x)\beta_{\ell}^{0}(dx)\right)_{k,\ell}$,
the vectors $\int_{X}\boldsymbol{g}(x)\beta_{k}(dx)$ for $k=1,\ldots,K$
are linearly independent. The remaining steps in the proof are virtually
unchanged.

\subsection{Proof of Lemma \protect\ref{lem:genericity_prior}}

The proof follows Theorem 2.7 in GH17 and uses results from Section
5.4 in \citet[henceforth GH14]{Gizatulina2014}.

First note that for elements of $\mathcal{M}_{+}^{d}([0,1]^{2K})$,
marginal and conditional densities are defined in the usual way. Moreover,
for each $w_{i}$, the function that maps $w_{j}$ to the conditional
probability measure on $X$ that is given by the density $f(x_{i}|w_{i},w_{j})$,
is an element of $\mathcal{C}([0,1],\mathcal{M}_{+}^{d}(X))$ (see
GH14).

Analog to the proof of Theorem 2.7 in GH17, we let $\mathcal{F}_{w_{i}}^{i}\subset\mathcal{M}_{+}^{d}([0,1]^{2K})$
be the set of priors such that the function $w_{j}\mapsto f(\cdot|w_{i},w_{j})$
is an element of $\mathcal{E}(w_{i})$. The key step is to show that
the residualness of $\mathcal{E}(w_{i})$ in $\mathcal{C}([0,1],\mathcal{M}_{+}^{d}(X))$
implies the residualness of $\mathcal{F}=\bigcap_{i\in\{1,2\},w_{i}\in\mathcal{W}_{i}}\mathcal{F}_{w_{i}}^{i}$
in $\mathcal{M}_{+}^{d}([0,1]^{2K})$. For each $i\in\{1,2\}$ and
$w_{i}\in(0,1)$, let $\psi_{i,w_{i}}:\mathcal{M}_{+}^{d}([0,1]^{2K})\rightarrow\mathcal{M}_{+}^{d}([0,1])\times\mathcal{C}([0,1],\mathcal{M}_{+}^{d}(X))$
be the mapping that maps the prior to the conditional distribution
$f(w_{j}|w_{i})$ and the function $w_{j}\mapsto f(x_{i}|w_{i},w_{j})$.
As shown in the proof of Lemma 5.9 in GH14, the maps $\psi_{i,w_{i}}$
are continuous and open if $\mathcal{M}_{+}^{d}([0,1]^{2K})$ is endowed
with the uniform topology for density functions. As in the proof of
Theorem 2.7 in GH17, this implies that $\mathcal{F}_{w_{i}}^{i}$
is as residual subset of $\mathcal{M}_{+}^{d}([0,1]^{2K})$, that
is it contains a countable intersection $\bigcap_{n\in\mathbb{N}}H_{n}(i,w_{i})$
of open and dense sets $H_{n}(i,w_{i})\subset\mathcal{M}_{+}^{d}([0,1]^{2K})$.
Clearly, $H=\bigcap_{i\in\{1,2\}}\bigcap_{w_{i}\in\mathcal{W}_{i}}\bigcap_{n(i,w_{i})\in\mathbb{N}}H_{n(i,w_{i})}(i,w_{i})$
is a subset of $\mathcal{F}$. By a diagonal argument, $H$ is a countable
intersection of open and dense subsets of $\mathcal{M}_{+}^{d}([0,1]^{2K})$
and hence $\mathcal{F}$ is residual.

\subsection{Proof of Lemma \protect\ref{lem:m-type_dont_bid_truthfully}}
\begin{proof}[Proof of Lemma \ref{lem:m-type_dont_bid_truthfully}]
We have shown this for $|V|=2$ in Proposition \ref{prop:SPA_is_inefficient}.
For $|V|\ge3$, we need to modify the proof. If $m$-types bid $b(w_{i})$,
we must have for all $\theta_{i}$ that 
\[
w_{i}=\mathbb{E}[v_{i}|\theta_{i}]\in\arg\max_{b}\left\{ \sum_{k=1}^{K}\theta_{i}^{k}\left(v_{i}^{k}H^{\text{SPA}}(b\mid v_{i}^{k})-\int_{0}^{b}zdH^{\text{SPA}}(z|v_{i}^{k})\right)\right\} .
\]
The first-order condition is 
\[
\sum_{k=1}^{|V|}\theta_{i}^{k}\left(v_{i}^{k}-w_{i}\right)H^{\text{SPA}\prime}(w_{i}\mid v_{i}^{k})=0
\]
Considering the type $\theta_{i}=(1-b,0,\ldots,0,b)$ for any $b\in(0,1)$,
we have $w_{i}=b$, and the first-order condition simplifies to 
\[
H^{\text{SPA}\prime}(b\mid v_{i}=1)-H^{\text{SPA}\prime}(b\mid v_{i}=0)=0
\]
We have 
\begin{align*}
H^{\text{SPA}}(b\mid v_{i}^{k}) & =\frac{\mathbb{P}_{f}\left[b_{j}\leq b,v_{i}=v_{i}^{k}\right]}{\mathbb{P}_{f}\left[v_{i}=v_{i}^{k}\right]}=\frac{\int_{\Theta_{i}}\mathbb{P}_{f}\left[w_{j}\le b\middle|\tilde{\theta}_{i}\right]\mathbb{P}_{f}\left[v_{i}=v_{i}^{k}\middle|\tilde{\theta}_{i}\right]f(\tilde{\theta}_{i})d\tilde{\theta}_{i}}{\mathbb{E}_{f}\left[\theta_{i}^{k}\right]}\\
 & =\frac{\int_{\Theta_{i}}F_{w_{j}}(b|\tilde{\theta}_{i})\tilde{\theta}_{i}^{k}f(\tilde{\theta}_{i})d\tilde{\theta}_{i}}{\mathbb{E}_{f}\left[\theta_{i}^{k}\right]}\\
H^{\text{SPA}\prime}(b\mid v_{i}^{k}) & =\frac{\int_{\Theta_{i}}f_{w_{j}}(b|\tilde{\theta}_{i})\tilde{\theta}_{i}^{k}f(\tilde{\theta}_{i})d\tilde{\theta}_{i}}{\mathbb{E}_{f}\left[\theta_{i}^{k}\right]}
\end{align*}
Substituting this in the first-order condition, we get for all $b\in B$:
\begin{align*}
\frac{\int_{\Theta_{i}}f_{w_{j}}(b|\tilde{\theta}_{i})\tilde{\theta}_{i}^{K}f(\tilde{\theta}_{i})d\tilde{\theta}_{i}}{\mathbb{E}_{f}\left[\theta_{i}^{K}\right]}-\frac{\int_{\Theta_{i}}f_{w_{j}}(b|\tilde{\theta}_{i})\tilde{\theta}_{i}^{1}f(\tilde{\theta}_{i})d\tilde{\theta}_{i}}{\mathbb{E}_{f}\left[\theta_{i}^{1}\right]} & =0\\
\iff\qquad\int_{\Theta_{i}}\left[\frac{\tilde{\theta}_{i}^{K}}{\mathbb{E}_{f}\left[\theta_{i}^{K}\right]}-\frac{\tilde{\theta}_{i}^{1}}{\mathbb{E}_{f}\left[\theta_{i}^{1}\right]}\right]f_{w_{i}}(\tilde{\theta}_{i}|w_{j}=b)f_{w_{j}}(b)d\tilde{\theta}_{i} & =0\\
\iff\qquad\frac{\mathbb{E}_{f}\left[\theta_{i}^{K}|w_{j}=b\right]}{\mathbb{E}_{f}\left[\theta_{i}^{K}\right]} & =\frac{\mathbb{E}_{f}\left[\theta_{i}^{1}|w_{j}=b\right]}{\mathbb{E}_{f}\left[\theta_{i}^{1}\right]}\\
\iff\qquad\mathbb{E}_{f}\left[\theta_{i}^{K}|w_{j}\le b\right] & =\frac{\mathbb{E}_{f}\left[\theta_{i}^{K}\right]}{\mathbb{E}_{f}\left[\theta_{i}^{1}\right]}\mathbb{E}_{f}\left[\theta_{i}^{1}|w_{j}\le b\right]
\end{align*}
For generic distributions, the last line is violated. 
\end{proof}
\newpage{}

 \bibliographystyle{economet}
\bibliography{JM}

\end{document}